\documentclass[11pt]{article}

\usepackage[latin1]{inputenc}

\usepackage{amstext,amsmath,amssymb,amsfonts,bbm,mathtools}
\usepackage{extarrows}
\usepackage{xcolor} 
\usepackage{hyperref}
\hypersetup{
	colorlinks=false,
	menubordercolor=red,
	linkbordercolor=blue
}
\usepackage{authblk}
\usepackage{caption}

\usepackage{epsfig} 

\usepackage{color} 
\usepackage{graphicx} 

\usepackage{braket} 
\usepackage{slashed} 
\usepackage{dsfont} 
\usepackage{amsthm}

\usepackage{cite}

\allowdisplaybreaks[4]

%% Gestion des figures en tikz
\usepackage{psfrag}
\usepackage{tikz} 
\usetikzlibrary{arrows}
\usetikzlibrary{plotmarks}
\usetikzlibrary{external}
\usetikzlibrary{patterns}
%\tikzexternalize[prefix=Figures/]  % not suitable for Online Editor?
%\tikzset{external/force remake}
\usepackage{pgfplots}
\pgfplotsset{compat=1.9}
\usetikzlibrary{patterns}
\usetikzlibrary{calc}
\usetikzlibrary{automata,positioning}

\usepackage{float}  
\usepackage{subfig}
\usepackage[lmargin=50pt,rmargin=60pt,tmargin=60pt,bmargin=65pt]{geometry}

\captionsetup{width=.9\textwidth}

%\theoremstyle{plain}
%\theoremstyle{definition}

%\allowdisplaybreaks[4]

%\bibliographystyle{unsrt}

%*******************************************

%*******************************************

%*******************************************
\newcommand{\be}{\begin{equation}}
	\newcommand{\ee}{\end{equation}} 

\newcommand{\f}{\frac}
\newcommand{\p}{\partial}

%*******************************************

%% greek letters %%
     \let\d=\delta  \let\ve=\varepsilon
     \let\th=\theta   \let\l=\lambda
              \let\r=\rho 
\let\s=\sigma \let\t=\tau     
    
\let\G=\Gamma     \let\X=F

%% bar letters %%

%% mathcal letters %%

\newcommand{\cC}{\mathcal{C}}

\newcommand{\cE}{\mathcal{E}}
\newcommand{\cF}{\mathcal{F}}

\newcommand{\cJ}{\mathcal{J}}
\newcommand{\cK}{\mathcal{K}}

\newcommand{\cT}{\mathcal{T}}

%% tilde letters %%

%%%%%%%%%%%%%%%%%%%%%%%%%%%%%%%%%%%%%

\definecolor{britishracinggreen}{rgb}{0.0, 0.26, 0.15}

\newcommand{\sgn}{{\rm sgn}}

\allowdisplaybreaks[4]

\numberwithin{equation}{section}	

\newtheorem{remark}{Remark}
\newtheorem{proposition}{Proposition}

\newtheorem{lemma}{Lemma}
\newtheorem{theorem}{Theorem}

\theoremstyle{remark}

%%%%%%%%%%%%%%%%%%%%%%%%%%%%%%%%%%%%%%%%%

%*******************************************
\begin{document}

	\title{\bf\boldmath
   The small-$N$ series in the zero-dimensional  $O(N)$ model: constructive expansions and transseries
   }
	
	\author[1]{Dario Benedetti}
	\author[1,2]{Razvan Gurau}
	\author[2]{Hannes Keppler}
	\author[2]{Davide Lettera}

	\affil[1]{\normalsize \it 
		CPHT, CNRS, Ecole Polytechnique, Institut Polytechnique de Paris, Route de Saclay, \authorcr 91128 PALAISEAU, 
		France
		\authorcr \hfill}
	
	\affil[2]{\normalsize\it 
		Heidelberg University, Institut f\"ur Theoretische Physik, Philosophenweg 19, 69120 Heidelberg, Germany
		\authorcr \hfill
		\authorcr
		emails: dario.benedetti@polytechnique.edu, gurau@thphys.uni-heidelberg.de, keppler@thphys.uni-heidelberg.de, lettera@thphys.uni-heidelberg.de 
		\authorcr \hfill }
	
	\date{}
	
	\maketitle
	
	\hrule\bigskip
	
	\begin{abstract}
	We consider the 0-dimensional quartic $O(N)$ vector model and present a complete study of the partition function $Z(g,N)$ and its logarithm, the free energy $W(g,N)$, seen as functions of the coupling $g$ on a Riemann surface. Using constructive field theory techniques we prove that both $Z(g,N)$ and $W(g,N)$ are Borel summable functions along all the rays in the cut complex plane $\mathbb{C}_{\pi} =\mathbb{C}\setminus \mathbb{R}_-$. We recover the transseries expansion of $Z(g,N)$ using the intermediate field representation.
	
    We furthermore study the small-$N$ expansions of $Z(g,N)$ and $ W(g,N)$. For any $g=|g| e^{\imath \varphi}$ on the sector of the Riemann surface with $|\varphi|<3\pi/2$, the small-$N$ expansion of $Z(g,N)$ has infinite radius of convergence in $N$ while the expansion of $W(g,N)$ has a finite radius of convergence in $N$ for $g$ in a subdomain of the same sector. 
    
    The Taylor coefficients of these expansions, $Z_n(g)$ and $W_n(g)$, exhibit analytic properties similar to $Z(g,N)$ and $W(g,N)$ and have transseries expansions. The transseries expansion of $Z_n(g)$ is readily accessible:  much like $Z(g,N)$, for any $n$, $Z_n(g)$ has a zero- and a one-instanton contribution. The transseries of $W_n(g)$ is obtained using M\"oebius inversion and summing these transseries yields the transseries expansion of $W(g,N)$. The transseries of $W_n(g)$ and $W(g,N)$ are markedly different: while $W(g,N)$ displays contributions from arbitrarily many multi-instantons, $W_n(g)$ exhibits contributions of only up to $n$-instanton sectors. 

	\end{abstract}
	
	\hrule\bigskip
	%\newpage
	\tableofcontents
	
%%%%%%%%%%%%%%%%%%%%%%%%%%%%%%%%%%%%%%%%%%%%%%%%%%%%%%%%
\section{Introduction}
%%%%%%%%%%%%%%%%%%%%%%%%%%%%%%%%%%%%%%%%%%%%%%%%%%%%%%%%
	
The most famous problem of the perturbative expansion in quantum field theory is the existence of ultraviolet divergences in the amplitudes of Feynman diagrams. This is successfully dealt with using the theory of perturbative renormalization. 
However, even in one and zero dimensions (quantum mechanics and combinatorial models, respectively), where renormalization is not needed, perturbation theory poses another notorious challenge: in most cases the perturbative series is only an asymptotic series, with zero radius of convergence. Borel resummation is the standard strategy to address this problem, but this comes with its own subtleties. From a practical standpoint, we are often only able to compute just the first few terms in the perturbative expansion. At a more fundamental level, singularities are present in the Borel plane, associated to instantons (and renormalons in higher dimensions). The instanton singularities are not accidental: they stem from the factorial growth of the number of Feynamn diagrams  with the perturbation order, which is also the origin of the divergence of the perturbation series.\footnote{The renormalon singularities specific to higher dimensions, are different. They are located on the positive real axis and stem from the factorial growth of the renormalized amplitude of a family of diagrams consisting in essentially one diagram per perturbation order.}

From the resummation point of view, the most inconvenient feature of perturbation theory is that it does not naively capture contributions from non-analytical terms. For example, it is well known that instanton contributions of the type $e^{-1/g}$ ($g>0$ being the coupling constant) can be present in the evaluation of some quantity of interest, but they are missed in the perturbative series as their Taylor expansion at $g=0$ vanishes identically.

Such exponentially suppressed terms are the archetypal example of nonperturbative effects, and their evaluation poses an interesting challenge.
Aiming to include them, but still relying for practical reasons on perturbative methods, one ends up with a more general form of asymptotic expansion, known as \emph{transseries}, which is roughly speaking a sum of perturbative and nonperturbative sectors, for example:
\begin{equation}
    F(g) \simeq \sum_{n\geq 0} a_n\, g^n 
    + \sum_i e^{\f{c_i}{g}} g^{\gamma_i} \sum_{n\geq 0} b_{i,n}\, g^{n} \;.
\end{equation}

Over time it became increasingly clear that, in many examples of interest, using the theory of Borel summation for the perturbative sector it is possible to reconstruct some information about the nonperturbative ones. This relation between the perturbative and nonperturbative sectors is known as \emph{resurgence}, and it was originally developed by \'{E}calle in the context of ordinary differential equations \cite{Ecalle} (see \cite{sauzin2007resurgent} for a modern review) and later extensively used in quantum field theory: for recent reviews with a quantum field theory scope, see \cite{Marino:2012zq,Dorigoni:2014hea,Dunne:2015eaa}, and in particular \cite{Aniceto:2018bis}, which contains also a comprehensive list of references to applications and other reviews.

Zero-dimensional quantum field theoretical models, which are purely combinatorial models,\footnote{These are for example of interest in the context of random geometry, see for example \cite{DiVecchia:1990ce,DiFrancesco:1993cyw,RTM}.} are useful toy models for the study of transseries expansions. Most conveniently, they allow one to set aside all the complications arising from the evaluation and renormalization of Feynman diagrams. Moreover, their partition functions and correlations typically satisfy ordinary differential equations, thus fitting naturally in the framework of \'{E}calle's theory of resurgence.
The zero dimensional $\phi^4$, or more generally $\phi^{2k}$ with $k\geq 2$, models in zero dimensions have been exhaustively studied \cite{Aniceto:2018bis,Fauvet:2019vcr}. At the opposite end, the rigorous study of the Borel summability in fully fledged quantum field theory is the object of constructive field theory \cite{jaffe2000constructive,rivasseau2014perturbative,Rivasseau:2009-0d}. It should come as no surprise that the generalization of results on resurgence in zero dimensions to fully fledged quantum field theory is very much an open topic. Not only one has to deal with renormalization, but also in higher dimensions the ordinary differential equations obeyed by the correlations become partial differential (Schwinger-Dyson) equations on tempered distributions, and one can not simply invoke \'{E}calle's theory.

From this perspective, revisiting the resurgence in zero dimensional models using techniques inspired by constructive field theory can be of great use. 
Here, following such route, we consider the zero-dimensional $O(N)$ model with quartic potential.\footnote{The same model has been considered in a similar context in \cite{Tanizaki:2014tua}, where the problem of constructing Lefschetz thimbles in the $N$-dimensional space have been studied. By using the intermediate field formalism we will bypass such problem here.}
Denoting $\phi = (\phi_a)_{a=1,\dots N}\in \mathbb{R}^N$ a vector in $\mathbb{R}^N$ and $\phi^2=\sum_{a=1}^N \phi_a\phi_a$ the $O(N)$ invariant, the partition function of the model is:\footnote{
Note that we do not use the usual normalization $g/4$ of the interaction in the $O(N)$ model, but stick to $g/4!$ in order to facilitate the comparison with the literature on transseries which deals mostly with the $N=1$ case for which the normalization $g/4!$ is standard. Also, we do not use the 't Hooft coupling $\l = g N$, which is needed for a well defined $1/N$ expansion: in this paper we keep $N$ small.
}
\begin{equation} \label{eq:Z_Phi4vec}
Z(g,N)=\int_{-\infty}^{+\infty} \left( \prod_{a=1}^N\f{d\phi_a}{\sqrt{2\pi}} \right) \; e^{-S[\phi] } \;, \qquad S[\phi]=\frac{1}{2}\phi^2+\frac{g}{4!} (\phi^2)^2 \; .
\end{equation}

The $N=1$ case has been extensively studied in \cite{Aniceto:2018bis}. One can analytically continue $Z(g,1)$, regarded as a function of the coupling constant $g$, to a maximal domain 
in the complex plane. Subsequently, one discovers that $Z(g,1)$ displays a branch cut at the real negative axis and that the nonperturbative contributions to $Z(g,1)$ are captured by its discontinuity at the branch cut. A resurgent transseries is obtained when one considers $g$ as a point on a Rienamm surface with a branch point at $g=0$. From now on we parameterize this Riemann surface as $g = |g| e^{\imath \varphi}$ and we choose as principal sheet $\varphi \in (-\pi,\pi)$.

An approach to the study of the partition function in Eq.~\eqref{eq:Z_Phi4vec} in the case $N=1$ is to use the steepest-descent method \cite{Bender:1999,Witten:2010cx}. We concisely review this in Appendix~\ref{app:expansions}. One notes that on the principal sheet 
%$\varphi \in (-\pi,\pi)$ 
only one Lefschetz thimble contributes. As $g$ sweeps through the princial sheet the thimble is smoothly deformed, but not in the neighborhood of the saddle point: the asymptotic evaluation of the integral is unchanged. 
When $g$ reaches the negative real axis there is a discontinuous jump in the relevant thimbles and a pair of thimbles (passing through a pair of conjugated non trivial saddle points of the action) starts contributing, giving rise to a one-instanton sector in the transseries of $Z(g,1)$. 

Another approach to the transseries expansion of $Z(g,1)$ is to use the theory of ordinary differential equations \cite{Aniceto:2018bis,Fauvet:2019vcr}. It turns out that $Z(g,1)$ obeys a second-order homogenous linear ordinary differential equation for which $g=0$ is an irregular singular point (e.g.\ \cite{Bender:1999}), giving another perspective on why the expansion one obtains is only asymptotic. 

More interestingly, one can wonder what can be said about the nonperturbative contributions to the free energy, that is, the logarithm of the partition function $W(g,1)=\ln Z(g,1)$, or to the connected correlation functions. If we aim to  study the free energy, the steepest-descent method does not generalize straightforwardly as we lack a simple integral representation for $W(g,1)$. 
One can formally write $Z(g,1)$ as a transseries and then expand the logarithm in powers of the transseries monomial $e^{\frac{c}{g}}$, thus obtaining a multi-instanton transseries. However this is very formal, as the transseries is only an asymptotic expansion, and we would like to have a direct way to obtain the asymptotic expansion of $W(g,1)$.
The closest one can get to an integral formula for the free energy is to use the Loop Vertex Expansion (LVE) \cite{Rivasseau:2007fr}, a constructive field theory expansion which we present in detail below. However, this is by no means simple, and deriving directly the transseries expansion of $W(g,1)$ using the steepest-descent method on this formula proved so far impractical.

The best method available for the study of the transseries expansion of $W(g,1)$, before our work, is to use again the theory of ordinary differential equations. One can show that $W(g,1)$ obeys a non-linear ordinary differential equation whose transseries solution can be studied \cite{Aniceto:2018bis}.

\bigskip

In this paper, we consider a general $N$ and we revisit both the partition function $Z(g,N)$ and the free energy $W(g,N)$ from a different angle. 
The paper is organized as follows.

In Section~\ref{sec:BS}, we review the Borel summability of asymptotic series as well as the notion of Borel summable functions, deriving in the process a slight extension of the Nevanlinna-Sokal theorem. 

In Section~\ref{sec:Z}, we study $Z(g,N)$ in the intermediate field representation.
This allows us to quickly prove its Borel summability along all the rays in the cut complex plane $\mathbb{C}_{\pi} = \mathbb{C}\setminus \mathbb{R}_-$. More importantly, the intermediate field representation provides a new perspective on the origin of the instanton contributions: in this representation, the steepest-descent contour never changes, but when $g$ reaches the negative real axis a singularity traverses it and detaches a Hankel contour around a cut. We insist that this Hankel contour is \emph{not} a steepest-descent contour, but it \emph{does} contribute to the asymptotic evaluation of the integral, because the cut is an obstruction when deforming the contour of integration towards the steepest-descent path.
It is precisely the Hankel contour that yields the one-instanton contribution. We then build the analytic continuation of $Z(g,N)$ to the whole Riemann surface, identify a second Stokes line, compute the Stokes data encoding the jumps in the analytic continuation at the Stokes lines and discuss the monodromy of $Z(g,N)$.
Next we observe that, because in the intermediate field representation $N$ appears only as a parameter in the action, we can perform a small-$N$ expansion:
\begin{equation}
     Z(g,N) = \sum_{n\ge 0} \frac{1}{n!} \left( -\frac{N}{2} \right)^n Z_n(g)  \; .
\end{equation}
We thus study $Z_n(g)$ for all integer $n$, proving its Borel summability in $\mathbb{C}_\pi$ and computing its transseries expansion in an extended sector of the Riemann surface, with $\arg(g)\in(-3\pi/2,3\pi/2)$, which we denote $\mathbb{C}_{3\pi/2}$.

In Section~\ref{sec:W}, we proceed to study $W(g,N)=\ln(Z(g,N))$. We first establish its Borel summability along all the rays in $\mathbb{C}_{\pi}$ using consturctive field theory techniques. We then proceed to the small-$N$ expansion of this object:
\begin{equation}
W(g,N) =  \sum_{n\ge 1}  \frac{1}{n!} \left( -\frac{N}{2} \right)^n   W_n(g) \;,
\end{equation}
and prove that this is an absolutely convergent series in a subdomain of $\mathbb{C}_{3\pi/2}$ and that both $W(g,N)$ and
$W_n(g)$ are Borel summable along all the rays in $\mathbb{C}_\pi$.
Finally, in order to obtain the transseries expansion of $W_n(g)$ and $W(g,N)$ we note that $W_n(g)$ can be written in terms of $Z_n(g)$ using the M\"oebius inversion formula relating moments and cumulants. 
Because of the absolute convergence of the small-$N$ series, it makes sense to perform the asymptoitic expansion term by term, and thus we rigorously obtain the transseries for $W(g,N)$ in a subdomain of $\mathbb{C}_{3\pi/2}$.
In the Appendices we gather some technical results, and the proofs of our propositions.

Ultimately, we obtain less information on the Stokes data for $W(g,N)$ than for $Z(g,N)$. While for $Z(g,N)$ we are able to maintain analytic control in the whole Riemann surface of $g$, the constructive field theory techniques we employ here allow us to keep control over $W(g,N)$ as an analytic function on the Riemann surface only up to $\varphi = \pm 3\pi/2$, that is past the first Stokes line, but not up to the second one. The reason for this is that close to $\varphi = \pm 3\pi/2$ there is an accumulation of Lee-Yang zeros, that is zeros of $Z(g,N)$, which make the explicit analytic continuation of $W(g,N)$ past this sector highly non trivial. New techniques are needed if one aims to recover the Stokes data for $W(g,N)$ farther on the Riemann surface: an analysis of the differential equation obeyed by $W(g,N)$ similar to the one of \cite{Baldino:2022aqm} could provide an alternative way to access it directly.

One can naturally ask what is the interplay between our results at small $N$ and the large-$N$ nonperturbative effects, first studied for the zero-dimensional $O(N)$ model in \cite{HikamiBrezin1979} (see also \cite{Marino:2012zq} for a general review, and  \cite{MarinoSerone2021JHEP} for a more recent point view).
This is a very interesting question: indeed, the relation between the two expansions is a bit more subtle than the relation between the small coupling and the large coupling expansions for instance.
The reason is that, when building the large $N$ series, one needs to use the 't Hooft coupling, which is a rescaling of the coupling constant by a factor of $N$. This changes the $N$-dependence of the partition function and free energy, making the relation between small-$N$ and large-$N$ expansions nontrivial. A good news on this front is that the analiticity domains in $g$ becomes uniform in N when recast in terms of the 't Hooft coupling \cite{Ferdinand:2022duk}. But there is still quite some work to do in order to connect the  transseries analysis at small $N$ with that at large $N$.

\paragraph{Main results.} Our main results are the following:
\begin{itemize}
    \item In Proposition~\ref{prop:Z}, we study $Z(g,N)$. While most of the results in this proposition are known for $N=1$, we recover them using the intermediate field representation (which provides a new point of view) and generalize them to arbitrary $N\in \mathbb{R}$. In particular we uncover an interplay between $Z(g,N)$ and $Z(-g,2-N)$ which captures the resurgent nature of the transseries expansion of the partition function for general $N$.

    \item Proposition~\ref{prop:Zn} deals with the function $Z_n(g)$, notably its Borel summability, transseries, and associated differential equation. To our knowledge, $Z_n(g)$ has not been studied before and all of the results presented here are new. 

   \item Proposition~\ref{prop:LVE+bounds} and~\ref{propW:BS} generalize previous results in the literature 
   \cite{Gurau:2014vwa} on the analyticity and Borel summability of $W(g,1)$ to $W(g,N)$ and furthermore derive parallel results for $W_n(g)$.

   \item Proposition \ref{prop:TSW} contains the transseries expansion of $W_n(g)$, which has not been previously considered in the literature. We also give a closed formula for the transseries expansion of $W(g,N)$.
   
   \item Lastly, in Proposition \ref{prop:DEW}, we derive the tower of recursive differential equations obeyed by $W_n(g)$. This serves as an invitation for future studies of the transseries of $W_n(g)$ from an ordinary differential equations perspective.

\end{itemize}

%%%%%%%%%%%%%%%%%%%%%%%%%%%%%%%%%%%%%
\section{Borel summable series and Borel summable functions}
\label{sec:BS}
%%%%%%%%%%%%%%%%%%%%%%%%%%%%%%%%%%%%%

When dealing with asymptotic series, a crucial notion is that of Borel summability. Less known, there exists a notion of Borel summability of \emph{functions}, intimately related to the Borel summability of series. In this section we present a brief review of these notions, which will play a central role in the rest of the paper, as well as a slight generalization of the (optimal) Nevanlinna-Sokal theorem on Borel summability \cite{Sokal:1980ey}. We will repeatedly use this theorem in this paper. 

\paragraph{Notation.} We use $K$ as a dustbin notation for irrelevant (real positive) multiplicative constants, and $R$ and $\rho$ for the important (real positive) constants. 

\paragraph{Borel summable series.}
A formal power series $A(z)=\sum_{k=0}^\infty a_k z^k$  is called a \emph{Borel summable series along the positive real axis} if the series
\begin{equation}
B (t) = \sum_{k=0}^\infty  \frac{a_k}{k!} t^k    \;,
\end{equation}
is absolutely convergent in some disk $|t|< \rho$ and $B(t)$ admits an analytic continuation in a strip of width $\rho$ along the positive real axis such that for $t$ in this strip $|B(t)| < K e^{|t|/R}$ for some real positive $R$. The function  $B(t) $ is called the \emph{Borel transform} of $A(z)$ and the \emph{Borel sum} of $A(z)$ is the Laplace transform of its Borel transform:
\begin{equation}
f(z) =  \frac{1}{z}  \int_0^{\infty} dt \; e^{-t/z} \; B (t) \; .
\end{equation} 
It is easy to check that the function $f$ is analytic in a disk of diameter $R$ tangent to the imaginary axis at the origin, ${\rm Disk}_R  = \{z\in \mathbb{C} \mid {\rm Re }{(1/z)}>1/R  \}$ . 

Clearly, if it exists, the Borel sum of a series is unique. This raises the following question: given a function $h(z)$ whose asymptotic series at zero is the Borel summable series $A(z)$, does the Borel resummation of $A(z)$ reconstruct $h(z)$? That is, is $f(z) = h(z)$? The answer to this question is \emph{no} in general: for instance the function $e^{-1/z}$ is asymptotic (along the positive real axis) at 0 to the Borel summable series $a_k=0$. It turns out that one can formulate necessary and sufficient conditions for $h(z)$ which ensure that it is indeed the Borel sum of its asymptotic series, as we now recall.

\paragraph{Borel summable functions.} A function $f:\mathbb{C} \to \mathbb{C}$ is called a \emph{Borel summable function along the positive real axis} if it is analytic in a disk ${\rm Disk}_R$ and has an asymptotic series at $0$ (which can have zero radius of convergence),
\begin{equation}
f(z) =  \sum_{k=0}^{q-1} a_k \; z^k + R_q(z)      \; ,
\end{equation}
such that the rest term of order $q$ obeys the bound:
\begin{equation}\label{eq:NS-bound}
 | R_q(z)   |   \le K   \; q! \; q^\beta \;  \rho^{- q}  \; |z|^q \;, \qquad z \in {\rm Disk}_R \;,    
\end{equation}
for some fixed $\beta\in \mathbb{R}_+$. Note that the bound in Eq.~\eqref{eq:NS-bound} is slightly weaker that the one in \cite{Sokal:1980ey}. The positive real axis is selected by the position of the center of ${\rm Disk}_R$. We call ${\rm Disk}_R  = \{z\in \mathbb{C} \mid {\rm Re }{(1/z)}>1/R  \}$ a \emph{Sokal disk}.

These two notions are intimately related: the Borel sums of Borel summable series are Borel summable functions (this is straightforward to prove). Moreover, the asymptotic series of Borel summable functions are Borel summable series.

\begin{theorem}[Nevanlinna-Sokal \cite{Sokal:1980ey}, extended]\label{thm:Sokal}
Let $f:\mathbb{C} \to \mathbb{C}$  be a Borel summable function, hence analytic and obeying the bound 
 \eqref{eq:NS-bound} with some fixed $\beta$. Then:
\begin{itemize}
    \item the Borel transform of the asymptotic series of $f$, 
\begin{equation}
B(t) = \sum_{k=0}^{\infty} \frac{1}{k! } \; a_k \; t^k \;,     
\end{equation}
is convergent in a disk of radius $\rho$ in $t$ and it defines an analytic funciton in this domain.
\item $B(t)$ can be analytically continued to the strip $ \{ t\in \mathbb{C} \mid  {\rm dist}(t,\mathbb{R}_+ ) < \rho \}$ and in this strip it obeys an exponential bound 
    $ |B(t)|<K e^{|t|/R}$. 
\item for all $z\in {\rm Disk}_R$ we can reconstruct the function $f(z)$ by the absolutely convergent Laplace transform:
\begin{equation}
f(z)=\frac{1}{z }\int_{0}^{\infty} dt \; e^{-t/z}  \; B(t)  \;  .    
\end{equation}
\end{itemize}
\end{theorem}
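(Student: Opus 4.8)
The plan is to prove the three bullets in order, exploiting throughout that $B$ is nothing but the inverse Laplace transform of $\hat f(s):=f(1/s)$, which is analytic in the half-plane $\mathrm{Re}(s)>1/R$ (the image of $\mathrm{Disk}_R$ under $z\mapsto 1/z$) and admits there the asymptotic expansion $\sum_k a_k s^{-k}$. The first bullet is a direct consequence of the remainder bound \eqref{eq:NS-bound}; the second bullet, the analytic continuation of $B$ to the strip together with the exponential bound, is the technical heart; and the third bullet is then a soft consequence of the first two via an optimal-truncation argument.

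\emph{Coefficient bounds and convergence in the disk.} First I would extract bounds on the $a_k$ from \eqref{eq:NS-bound}. Since $R_q(z)-R_{q+1}(z)=a_q z^q$, the triangle inequality gives $|a_q|\,|z|^q\le |R_q(z)|+|R_{q+1}(z)|$, and dividing by $|z|^q$ and letting $z\to 0$ along the positive real axis (which lies in $\mathrm{Disk}_R$) kills the term carrying an extra power of $|z|$, leaving $|a_q|\le K\,q!\,q^\beta\,\rho^{-q}$. Hence $|a_q/q!|\le K\,q^\beta\,\rho^{-q}$, and since $q^\beta$ is subexponential the root test yields $\limsup_q |a_q/q!|^{1/q}\le 1/\rho$. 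Therefore $B(t)=\sum_k (a_k/k!)t^k$ converges and is analytic in $|t|<\rho$, which is the first bullet; the only effect of the extra factor $q^\beta$ relative to \cite{Sokal:1980ey} is to multiply a subexponential prefactor, leaving the radius $\rho$ untouched.

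\emph{Continuation to the strip and the exponential bound.} This is the step I expect to be the main obstacle. The idea is to represent $B$ directly in terms of $f$ by inverting the Laplace transform. Using $\frac{1}{z}\int_0^\infty e^{-t/z}\frac{t^k}{k!}\,dt=z^k$ for $\mathrm{Re}(1/z)>0$, the truncated Borel sum reproduces the Taylor polynomial, $\frac{1}{z}\int_0^\infty e^{-t/z}B_q(t)\,dt=\sum_{k<q}a_k z^k$ with $B_q(t)=\sum_{k<q}(a_k/k!)t^k$, so that inverse-Laplace along the vertical contour $\mathrm{Re}(s)=1/R$ gives the representation
\begin{equation}
B(t)=B_q(t)+\frac{1}{2\pi i}\int_{1/R-i\infty}^{1/R+i\infty} e^{st}\,\frac{R_q(1/s)}{s}\,ds \;,
\end{equation}
which is independent of $q\ge 1$ (the two $q$-dependences telescope). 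For real $t>0$ this converges absolutely because \eqref{eq:NS-bound} gives $|R_q(1/s)/s|\le K\,q!\,q^\beta\,\rho^{-q}|s|^{-q-1}$ while $|e^{st}|=e^{t/R}$ is bounded on the contour; the factor $e^{t/R}$ is precisely the origin of the exponential type $1/R$ (the abscissa of the half-plane of analyticity of $\hat f$). The delicate point is the continuation off the real axis: the vertical contour does not converge for $\mathrm{Im}(t)\neq 0$, so one must deform it and, crucially, optimize the truncation order $q$ as a function of $t$, using the uniform-in-$q$ Gevrey bound \eqref{eq:NS-bound} to show that the representation extends analytically to the whole strip $\{\,\mathrm{dist}(t,\mathbb{R}_+)<\rho\,\}$ with $|B(t)|\le K e^{|t|/R}$. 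The width $\rho$ of the strip is dictated by the $\rho^{-q}$ in \eqref{eq:NS-bound}, exactly as in \cite{Sokal:1980ey}; the extra $q^\beta$ only enters through subexponential prefactors and through a harmless shift in the optimal $q$, so that argument goes through essentially verbatim once these factors are tracked.

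\emph{Reconstruction of $f$.} With $B$ now defined and bounded by $K e^{|t|/R}$ on $\mathbb{R}_+$, the Laplace integral $\frac{1}{z}\int_0^\infty e^{-t/z}B(t)\,dt$ converges absolutely for $\mathrm{Re}(1/z)>1/R$, i.e.\ for $z\in\mathrm{Disk}_R$. To identify it with $f$ I would use the exact identity $R_q(z)=f(z)-\frac{1}{z}\int_0^\infty e^{-t/z}B_q(t)\,dt$ derived above, whence
\begin{equation}
f(z)-\frac{1}{z}\int_0^\infty e^{-t/z}B(t)\,dt = R_q(z)-\frac{1}{z}\int_0^\infty e^{-t/z}\big(B(t)-B_q(t)\big)\,dt \;.
\end{equation}
Both terms on the right are controlled by \eqref{eq:NS-bound} and the exponential bound on $B$; choosing the optimal truncation $q\simeq \rho/|z|$ and invoking Stirling makes $R_q(z)$ (and, after splitting the integral near $t\sim\rho$, the tail integral) of order $e^{-\rho/|z|}$, which vanishes as $q\to\infty$. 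Hence the left-hand side is zero and $f(z)=\frac{1}{z}\int_0^\infty e^{-t/z}B(t)\,dt$ on $\mathrm{Disk}_R$, the third bullet. The overall difficulty is concentrated in the second step; the first and third are, respectively, an elementary estimate and a soft consequence of the exponential bound together with optimal truncation.
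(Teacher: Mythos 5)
Your first bullet is fine and matches the paper's argument. The trouble starts exactly where you flag it: the analytic continuation of $B$ to the strip is not actually carried out, and the mechanism you gesture at would not produce it. You propose to ``deform the vertical contour and optimize $q$ as a function of $t$'', but $f$ is only analytic in ${\rm Disk}_R$, i.e.\ $s\mapsto f(1/s)$ is only analytic in the half-plane $\mathrm{Re}(s)>1/R$, so there is essentially no room to deform; and even granting convergence of a tilted contour for $t$ in some sector around $\mathbb{R}_+$, the strip of \emph{uniform width} $\rho$ (which covers points with $\mathrm{Re}(t)$ arbitrarily large and $|\mathrm{Im}(t)|$ up to $\rho$, and also the part of the disk $|t|<\rho$ with $\mathrm{Re}(t)<0$) cannot come from contour geometry -- its width is dictated by the Gevrey constant $\rho$ in \eqref{eq:NS-bound}. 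The paper's mechanism is different and is the missing idea here: from the contour representation $b_0(x)=\frac{1}{2\pi\imath}\oint_{\mathrm{Re}(z^{-1})=r^{-1}}dz\,e^{x/z}z^{-1}f(z)$ one bounds \emph{all derivatives} at real $x>0$, namely $|b_m(x)|\le K\,(m+1)!\,(m+1)^\beta\rho^{-m-1}e^{x/r}$ (by inserting the expansion to order $q=m+1$), and then re-expands $B_x(t)=\sum_m \frac{(t-x)^m}{m!}b_m(x)$, which converges in a disk of radius $\rho$ about each $x$ and glues into the continuation with the bound $Ke^{|t|/R}$. Without some version of these derivative estimates your second bullet is unproven.

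The third bullet also has a genuine logical gap. You write $f(z)-\frac{1}{z}\int_0^\infty e^{-t/z}B(t)\,dt = R_q(z)-\frac{1}{z}\int_0^\infty e^{-t/z}\bigl(B(t)-B_q(t)\bigr)dt$ and claim the right side is ``of order $e^{-\rho/|z|}$, which vanishes as $q\to\infty$''. It does not vanish as $q\to\infty$: the bound $K\,q!\,q^\beta\rho^{-q}|z|^q$ blows up for large $q$ at fixed $z$, is minimized at $q\simeq\rho/|z|$, and its infimum over $q$ is a strictly positive quantity of order $e^{-\rho/|z|}$. Since the left side is $q$-independent, all you obtain is that the difference is exponentially small -- and that does \emph{not} force it to be zero (this is precisely the ambiguity the Nevanlinna--Sokal theorem exists to resolve; a nonzero function can be exponentially small and asymptotic to the zero series). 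Promoting ``exponentially small in a Sokal disk'' to ``identically zero'' is a genuine uniqueness theorem, not a soft consequence. The paper avoids this entirely: once $b_0(x)$ is given by the contour integral of $f$, one computes
\begin{equation}
\frac{1}{z'}\int_0^\infty dx\,e^{-x/z'}\,\frac{1}{2\pi\imath}\oint_{\mathrm{Re}(z^{-1})=r^{-1}}dz\,e^{x/z}\,z^{-1}f(z)
=\frac{1}{2\pi\imath}\oint_{\mathrm{Re}(z^{-1})=r^{-1}}dz\,\frac{f(z)}{z-z'}=f(z')
\end{equation}
by Fubini and Cauchy's formula, which identifies the Laplace transform of $B$ with $f$ directly. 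You should either adopt that route or supply the uniqueness argument you are implicitly relying on.
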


\begin{proof}
See Appendix~\ref{app:Sokal}.
\end{proof}

We emphasize that both for series and for functions, Borel summability is directional:
\begin{itemize}
    \item  for series, Borel summability along a direction requires the unimpeded analytic continuation of $B(t)$ in a thin strip centered on that direction.
    \item for functions, Borel summability along a direction requires analyticity and bound on the Taylor rest terms in a Sokal disk (with $0$ on its boundary) centered on that direction.
\end{itemize}

   Clearly, the singularities of the Borel transform $B(t)$ are associated to directions along which the function $f(z)$ ceases to be Borel summable. 

%%%%%%%%%%%%%%%%%%%%%%%%%%%%%%%%%%%%%%%%%%%%%%%%%%%%%%%%
\section{The partition function $Z(g,N)$}
\label{sec:Z}
%%%%%%%%%%%%%%%%%%%%%%%%%%%%%%%%%%%%%%%%%%%%%%%%%%%%%%%%

In this section we collect some facts about the asymptotic expansion of the partition function \eqref{eq:Z_Phi4vec}. Most of them are known, or derivable from the expression of the $Z(g,N)$ in terms of special functions, but to the best of our knowledge the expression for the transseries at general $N$ is new.

We study $Z(g,N)$ by means of the Hubbard-Stratonovich intermediate field formulation \cite{Hubbard,Stratonovich}, which is crucial to the Loop Vertex Expansion \cite{Rivasseau:2007fr} of the free energy $W(g,N)$ that we will study below.
This is based on rewriting the quartic term of the action as a Gaussian integral over an auxiliary variable $\s$ (or field, in higher dimensions):
\be
e^{-\f{g}{4!} (\phi^2)^2} = \int_{-\infty}^{+\infty} [d\sigma]\ e^{-\f12 \sigma^2 + \imath \sqrt{\f{g}{12}} \sigma \phi^2} \;,
\ee
where the Gaussian measure over $\sigma$ is normalized, i.e.\ $[d\sigma]=d\sigma/\sqrt{2\pi}$ and 
$\imath = e^{\imath \frac{\pi}{2}}$. 
Note that $\sigma$ is a real number, not a vector.
With this trick, the integral over $\phi$ becomes Gaussian and can be performed for $g>0$, leading to a rewriting of the partition function \eqref{eq:Z_Phi4vec} as:
\begin{equation}\label{eq:Zsigma}
Z(g,N)=\int_{-\infty}^{+\infty} [d\sigma]  \;e^{-\frac{1}{2} \sigma^2} \frac{1}{\left( 1 - \imath \sqrt{ \frac{g}{3} } \sigma \right)^{ N/2} }   \;.
\end{equation}

Although the original partition function is defined only for integer $N$ and we have assumed that $g>0$, in the $\s$ representation \eqref{eq:Zsigma} it becomes transparent that $Z(g,N)$ can be analytically continued both in $N$ and in $g$.
%For $N/2\notin \mathbb{Z}$, we take the principal branch realization of the integrand, i.e.\ $\arg(1 - \imath \sqrt{g/3 } \sigma)\in(-\pi,\pi)$.

%%%%%%%%%%%%%%%%%%%%%%%%%%%%%%%%%%%%%%%%%%%%%%%%%%%%%%%%
\subsection{Analytic continuation and transseries} 
%%%%%%%%%%%%%%%%%%%%%%%%%%%%%%%%%%%%%%%%%%%%%%%%%%%%%%%%

As a matter of notation, we denote $\varphi\equiv \arg(g)$, and, in order to label some sets that will appear repeatedly in the rest of the paper, we define:
\begin{equation}
    \mathbb{C}_{\psi} \equiv \left\{ g\in\mathbb{C},\; g=|g| e^{\imath \varphi} :\;  \varphi \in \left(-\psi,\psi\right) \right\} \;.
\end{equation}
In particular, $\mathbb{C}_\pi=\mathbb{C} \setminus \mathbb{R}_-$ is the cut complex plane. For $\psi>\pi$, the set $ \mathbb{C}_{\psi}$ should be interpreted as a sector of a Riemann sheet, extending the principal sheet $\mathbb{C}_\pi$ into the next sheets.
%\begin{equation}
%    \mathbb{C}_\pi\equiv \mathbb{C} \setminus \mathbb{R}_- = \left\{ g\in\mathbb{C},\; g=|g| e^{\imath \varphi} :\;  \varphi \in(-\pi,\pi) \right\} \;,
%\end{equation}
%and, whenever $\mathbb{C}_\pi$ is associated to the first branch of a multi-valued function, its following extension into the adjacent Riemann sheets:
%\begin{equation}
%    \mathbb{C}_{3\pi/2} \equiv \left\{ g\in\mathbb{C},\; g=|g| e^{\imath \varphi} :\;  \varphi \in \left(-\f{3\pi}{2},\f{3\pi}{2}\right) \right\} \;.
%\end{equation}

Our first aim is to understand the analytic continuation of the partition function in the maximal possible domain of the Riemann surface. For later convenience, we introduce the following function, not to be confused with the partition function $Z(g,N)$:
\begin{equation}\label{eq:defZR}
    Z^{\mathbb{R}}(g,N) = \int_{-\infty}^{+\infty} [d\sigma]  \;e^{-\frac{1}{2} \sigma^2} \frac{1}{\left( 1 -  \imath  \, e^{\imath \frac{\varphi}{2}} \sqrt{ \frac{|g|}{3} } \sigma \right)^{ N/2} } \;,\qquad
    \forall \varphi \neq (2k+1)\pi \, ,\; k \in \mathbb{Z}  \;,
\end{equation}
which is an absolutely convergent integral for any $|g|$ and any $\varphi \neq (2k+1)\pi $. We have used a superscript $\mathbb{R}$ to distinguish it from $Z(g,N)$ and to emphasize that the integral is performed on the real line, irrespective of the value of $\varphi \neq (2k+1)\pi$.
Moreover, for $N/2\notin \mathbb{Z}$, the integrand is computed using the principal branch of the logarithm:
\begin{equation}
    \mathrm{ln} \left(  1 -  e^{\imath \frac{\pi + \varphi}{2}} \sqrt{ \tfrac{|g|}{3} } \sigma \right)
= \frac{1}{2} \ln \left[ 
( \cos\tfrac{\varphi}{2} )^2 + (\sin\tfrac{\varphi}{2}  +  \sqrt{ \tfrac{|g|}{3} } \sigma  )^2
\right] + \imath  \; \mathrm{Arg}(  1 -  e^{\imath \frac{\pi + \varphi}{2}} \sqrt{ \tfrac{|g|}{3} } \sigma) \;,
\end{equation}
where the $\mathrm{Arg}$ function is the principal branch of the argument, valued in $(-\pi,\pi)$. 
In particular, a change of variables $\sigma \to -\sigma$ shows that $ Z^{\mathbb{R}}(g,N) =  Z^{\mathbb{R}}( e^{2 \pi \imath} g,N)$, which is thus a single-valued function on $\mathbb{C}_{\pi}$, with a jump at $\mathbb{R}_-$.
The analytic continuation of the partition function $Z(g,N)$ will instead be a multi-valued function on $\mathbb{C}$, and thus require the introduction of a Riemann surface. We can view $Z^{\mathbb{R}}(g,N)$ as a periodic function on the same Riemann surface, with of course $Z(g,N) = Z^{\mathbb{R}}(g,N)$ for $g\in \mathbb{C}_{\pi}$, and $Z(g,N) \neq Z^{\mathbb{R}}(g,N)$ once one steps out of the principal Riemann sheet.

We collect all the relevant result concerning the partition function in Proposition~\ref{prop:Z}. For now we restrict to $N$ real, but the proposition can be extended to complex $N$ with little effort.\footnote{For $N$ complex with positive real for instance one uses the inequality $|z^{-N/2}| \le |\mathrm{Re}(z)|^{-\mathrm{Re}(N)/2} e^{\pi |\mathrm{Im}(N)|/2}$.} We will drop this assumption later.

\ 

\begin{proposition}[Properties of $Z(g,N)$] \label{prop:Z} Let $N \in \mathbb{R}$ be a fixed parameter. The partition function $Z(g,N)$ satisfies the following properties:

\begin{enumerate}

\item\label{propZ:1} for every $g\in\mathbb{C}_\pi$, the integral in Eq.~\eqref{eq:Zsigma} is absolutely convergent  and bounded from above by
\begin{equation}\label{eq:Zaprioribound}
 |Z(g,N)| \le \begin{cases}
 \left( \cos\f{\varphi}{2} \right)^{- N /2}   \;, \qquad & N \ge 0\\[5pt]
2^{|N|/2}+\f{2^{3|N|/4}}{\sqrt{\pi}} \frac{|g|^{N/4}}{3^{|N|/4}} \Gamma( \tfrac{|N|+2}{4}  ) 
 \;, \qquad & N < 0
 \end{cases} \; ,
\end{equation}
hence $Z(g,N)$ is analytic in $\mathbb{C}_\pi$.

\item\label{propZ:2} For $g\in\mathbb{C}_\pi$, the partition function is $Z(g,N) = Z^{\mathbb{R}}(g,N) $
and has the perturbative expansion:
\be \label{eq:Zpertseries}
Z(g,N) \simeq \sum_{n=0}^{\infty} \, \frac{\Gamma(2n+N/2) }{2^{2n}n! \, \Gamma(N/2) }    \; \left(- \frac{2g}{3}\right)^n \;,
\ee
where $\simeq$ means that the equation has to be interpreted in the sense of asymptotic series, i.e.:
\be
\lim_{\substack{g\to0 \\ g\in\mathbb{C}_\pi}} g^{-n_{\rm max}} \left| Z(g,N) - \sum_{n=0}^{n_{\rm max}}  \, \frac{\Gamma(2n+N/2) }{2^{2n}n! \, \Gamma(N/2) }   \; 
\left(- \frac{2g}{3}\right)^n  \right| = 0\;, \qquad \forall n_{\rm max}\geq 0\;.
\ee

\item\label{propZ:3} The function $Z(g,N)$ is Borel summable along all the directions in $\mathbb{C}_\pi$. 
% (see Section~\ref{sec:BS} for a short reminder on Borel summability).

\item\label{propZ:4} $Z(g,N)$ can be continued past the cut, on the entire Riemann surface. However, $\mathbb{R}_-$ is a Stokes line, that is, the anticlockwise and clockwise analytic continuations $Z_+(g,N)$ and $Z_-(g,N)$ are not equal and cease to be Borel summable at $\mathbb{R}_-$. A second Stokes line is found at $\mathbb{R}_+$ on the second sheet. For $\varphi \notin \pi\mathbb{Z}$, the analytic continuation of the partition function to the whole Riemann surface writes:
\begin{equation}
\begin{split}
& 2k\pi <  |\varphi| < (2k+1) \pi  :  \\[.5ex]
&\qquad Z(g,N )  =  \omega_{2k}  \;  Z(e^{\imath ( 2 k ) \tau \pi  } g,N) \\
&\qquad\hphantom{Z(g,N )  = } + \eta_{2k} \;   \frac{\sqrt{ 2\pi } }{\Gamma(N/2)}  \;  e^{\imath \tau \frac{\pi}{2}  } \;  e^{\frac{3}{2g}} \; \left(   e^{\imath  (2k + 1) \tau \pi  } \frac{ g} {3} \right)^{\frac{1-N}{2} }    \; Z ( e^{\imath  (2k + 1) \tau \pi  } g , 2-N) \\
&\qquad\hphantom{Z(g,N )} =  \omega_{2k}  \;  Z^{\mathbb{R}}( g,N)  + \eta_{2k} \;   \frac{\sqrt{ 2\pi } }{\Gamma(N/2)}  \;  e^{\imath \tau \frac{\pi}{2}  } \;  e^{\frac{3}{2g}} \; \left(   e^{\imath  (2k + 1) \tau \pi  } \frac{ g} {3} \right)^{\frac{1-N}{2} } \; Z^{\mathbb{R}} ( - g , 2-N)   \;,  \\[1ex]
& (2k+1)\pi < |\varphi| < (2k+2) \pi  :  \\[.5ex]
&\qquad Z(g,N ) = \omega_{2k+1}  \; Z(e^{\imath ( 2 k+2) \tau \pi  } g,N)  \\
&\qquad\hphantom{Z(g,N )  = }  +   \eta_{2k+1} \;  \frac{\sqrt{ 2\pi } }{\Gamma(N/2)}  \;  e^{\imath \tau \frac{\pi}{2}  } \;  e^{\frac{3}{2g}} \; \left(   e^{\imath  (2k + 1) \tau \pi  } \frac{ g} {3} \right)^{\frac{1-N}{2} }    \; Z ( e^{\imath (2k + 1) \tau \pi  } g , 2-N)   \\
&\qquad\hphantom{Z(g,N )} = \omega_{2k+1}  \; Z^{\mathbb{R}}( g,N) + \eta_{2k+1} \; \frac{\sqrt{ 2\pi } }{\Gamma(N/2)}  \;  e^{\imath \tau \frac{\pi}{2}  } \;  e^{\frac{3}{2g}} \; \left(   e^{\imath  (2k + 1) \tau \pi  } \frac{ g} {3} \right)^{\frac{1-N}{2} } \; Z^{\mathbb{R}} (  - g , 2-N)   
\;,
\end{split}
\end{equation}

where $\t=-\sgn(\varphi)$ and the Stokes parameters $(\omega,\eta)$ are defined recursively as:
 \begin{equation}
 (\omega_0 ,\eta_0) = (1,0)  \;,\qquad \begin{cases}
  \omega_{2k+1} &=  \omega_{2k} 
  \\
  \eta_{2k+1} & =     \eta_{2k} + \omega_{2k}
\end{cases} \; , \qquad
\begin{cases}
  \omega_{2(k+1)} &= \omega_{2k+1} + \tilde \tau \;  \eta_{2k+1}
  \\
  \eta_{2(k+1)} & =  e^{ \imath \tau \pi(N-1) }   \eta_{2k+1}
\end{cases} \; ,
 \end{equation}
 with $\tilde\tau = e^{\imath \tau \pi \frac{N +1 }{2}   } \; 2 \sin \frac{N\pi}{2}$.
The recursion gives:
\be \label{eq:rec_sol}
(\omega_{2k},\eta_{2k})=
\begin{cases}
e^{\imath\tau\pi N\f{k}{2}}\;(1,0) &,\,k\ \text{even} \\
e^{\imath\tau\pi N\f{k+1}{2}}\;(1,-1)&,\,k\ \text{odd}
\end{cases}\;.
\ee
 
The monodromy group of $Z(g,N)$ is of order 4  if $N$ is odd, and of order 2 if $N$ is even. More generally, we have a monodromy group of finite order if $N$ is a rational number, and an infinite monodromy otherwise.

\item\label{propZ:4.5} From Properties~\ref{propZ:2} and~\ref{propZ:4} we obtain that for $g$ in the sector 
$k\pi < |\varphi| < (k+1) \pi$ of the Riemann surface the partition function has the following transseries expansion:
 \begin{equation}\label{eq:fullZ}
 \begin{split}
     Z(g,N) \simeq & \omega_k \sum_{n=0}^{\infty} \, \frac{\Gamma(2n+N/2) }{2^{2n}n! \, \Gamma(N/2) }    \; \left(- \frac{2g}{3}\right)^n \crcr 
     & +  \eta_k \;  e^{\imath \tau \pi  (1- \frac{N}{2} ) }   \;  \sqrt{2\pi}  
    \left(  \frac{ g} {3} \right)^{\frac{1-N}{2} }
  e^{\frac{3}{2g}} 
    \sum_{q\ge 0} \frac{1}{ 2^{2q} q! \; \Gamma(\frac{N}{2} -2q ) } \left( \frac{2 g}{3} \right)^q \; ,
 \end{split} 
 \end{equation}
where we used:
\begin{equation}
\Gamma( 2q+ 1 - N/2)  \Gamma(N/2 - 2q) = \frac{ \pi}{ \sin( \pi \frac{N}{2} - 2\pi q ) } = \Gamma(1-N/2) \Gamma(N/2) \;.
\end{equation}
The transseries is resurgent, i.e.~the instanton series is obtained from the perturbative one by substituting $N\to 2-N$ and $g\to -g$ and vice versa.

\item\label{propZ:5} From Property~\ref{propZ:4.5}, the discontinuity of the partition function at the negative real axis:
\begin{equation}
    {\rm disc}_\pi \big(Z(g,N)\big)\equiv \lim_{ g \to \mathbb{R}_-} \bigg(Z_-(g,N) - Z_+(g,N) \bigg) \;,
\end{equation}
has the following asymptotic expansion: 
\be \label{eq:discZ}
\begin{split}
{\rm disc}_\pi \big(Z(g,N)\big) \simeq &
  \imath \sqrt{2\pi}  
    \left(  \frac{ | g | } {3} \right)^{\frac{1-N}{2} }
  e^{-\frac{3}{2|g|}} 
    \sum_{q\ge 0} \frac{1}{ 2^{2q} q! \; \Gamma(\frac{N}{2} -2q ) } \left( - \frac{2 |g|}{3} \right)^q 
 \crcr
& =  \imath \sqrt{\f{2}{\pi}}  \sin\frac{N\pi}{2} \; 
    \left(  \frac{ |g| } {3} \right)^{\frac{1-N}{2} }
  e^{- \frac{3}{2|g|}} 
    \sum_{q\ge 0} \frac{ \Gamma(2q + \frac{2-N}{2} ) }{ 2^{2q}\; q!  } \left( - \frac{2 |g|}{3} \right)^q 
\;,
\end{split}
\ee
where for $N$ even integer the sum truncates at $q=N/4-1$, if $N$ is a multiple of 4, and at $q=\lfloor N/4 \rfloor$ otherwise.

\item\label{propZ:7} The partition function obeys an homogenous linear ordinary differential equation:
\begin{equation} \label{PDEpartitionfunctionN}
	16 g^2 Z^{\prime \prime}(g,N)+\left( (8N+ 24)g+24 \right)Z^\prime(g,N)+N(N+2)Z(g,N)=0 \; ,
\end{equation}
which can be used to reconstruct the resurgent transseries expansion of $Z(g,N)$.

\end{enumerate}
\end{proposition}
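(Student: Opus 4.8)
The plan is to make the intermediate-field representation \eqref{eq:Zsigma} the single workhorse for the entire proposition: every analytic statement (Properties~\ref{propZ:1}--\ref{propZ:3}), the continuation and Stokes data (Property~\ref{propZ:4}), and the transseries, discontinuity and ODE (Properties~\ref{propZ:4.5}--\ref{propZ:7}) should be read off from the one-dimensional $\s$-integral rather than from the $N$-dimensional $\phi$-integral. The only external input is the extended Nevanlinna--Sokal theorem (Theorem~\ref{thm:Sokal}), which converts quantitative Taylor-remainder bounds on the $\s$-integral into Borel summability direction by direction.

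For Properties~\ref{propZ:1}--\ref{propZ:3} I would proceed as follows. Writing $\imath\sqrt{g/3}\,\s=e^{\imath(\pi+\varphi)/2}\sqrt{|g|/3}\,\s$, the modulus formula already displayed before the proposition shows $|1-\imath\sqrt{g/3}\,\s|\ge\cos\tfrac{\varphi}{2}$ uniformly in $\s$ for $|\varphi|<\pi$; since the measure is normalized this immediately gives the $N\ge 0$ line of \eqref{eq:Zaprioribound}. For $N<0$ the integrand grows, so I would bound $|1-\imath\sqrt{g/3}\,\s|^{|N|/2}\le(1+\sqrt{|g|/3}\,|\s|)^{|N|/2}$, split off the constant and pure-power pieces, and evaluate the Gaussian moments $\int[d\s]\,|\s|^{|N|/2}e^{-\s^2/2}$, which produce the $\Gamma\big(\tfrac{|N|+2}{4}\big)$ and $|g|^{N/4}$ factors; analyticity follows from differentiation under the integral sign. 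Property~\ref{propZ:2} comes from Taylor expanding $(1-\imath\sqrt{g/3}\,\s)^{-N/2}$ by the binomial series and integrating termwise: odd moments vanish and $\int[d\s]\,\s^{2n}e^{-\s^2/2}=(2n-1)!!$ converts the Pochhammer symbol $(N/2)_{2n}$ into $\Gamma(2n+N/2)/\Gamma(N/2)$, reproducing \eqref{eq:Zpertseries}. For Property~\ref{propZ:3} I would keep the integral remainder of the binomial series to order $n_{\max}$, bound it in a Sokal disk rotated onto the chosen ray of $\mathbb{C}_\pi$, and note that the factorial growth $\Gamma(2n+N/2)\sim(2n)!$ is exactly of Nevanlinna--Sokal type.

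The heart of the proposition, and the step I expect to be the main obstacle, is Property~\ref{propZ:4}. The integrand of \eqref{eq:Zsigma} has a branch point at $\s_\star=-\imath\sqrt{3/g}$, which for $g\in\mathbb{C}_\pi$ sits off the real contour but pinches it as $\varphi\to\pm\pi$; this is the geometric origin of the Stokes line at $\mathbb{R}_-$. To continue past it I would keep the contour fixed and, when $\s_\star$ crosses $\mathbb{R}$, detach a Hankel contour wrapping the cut emanating from $\s_\star$, as anticipated in the introduction. Evaluating this Hankel integral explicitly --- a change of variables straightening the cut turns it into a $Z^{\mathbb{R}}$-type integral with $N\mapsto 2-N$ --- produces the prefactor $\tfrac{\sqrt{2\pi}}{\Gamma(N/2)}e^{\imath\tau\pi/2}e^{3/(2g)}(g/3)^{(1-N)/2}$ multiplying $Z^{\mathbb{R}}(-g,2-N)$, which is the source of the $Z(g,N)\leftrightarrow Z(-g,2-N)$ duality. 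The delicate bookkeeping is then to iterate sheet by sheet: each crossing of a Stokes line either spawns a new Hankel term or rotates and recombines the existing ones, and tracking the background and instanton amplitudes separately yields the two-step recursion for $(\omega_k,\eta_k)$ whose solution is \eqref{eq:rec_sol}, while the phase $e^{\imath\tau\pi(N-1)}$ accumulated per double crossing dictates the order of the monodromy group. Keeping the branches of $(g/3)^{(1-N)/2}$ and of the argument consistent across sheets is where the argument is most error-prone.

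The remaining items follow quickly. Property~\ref{propZ:4.5} is the combination of Properties~\ref{propZ:2} and~\ref{propZ:4}: the background $Z^{\mathbb{R}}(g,N)$ contributes the perturbative series weighted by $\omega_k$, while the asymptotic expansion of $Z^{\mathbb{R}}(-g,2-N)$ supplies the instanton series weighted by $\eta_k$, with the reflection identity $\Gamma(2q+1-N/2)\Gamma(N/2-2q)=\pi/\sin(\pi N/2-2\pi q)$ recasting it into \eqref{eq:fullZ}. Property~\ref{propZ:5} is the specialization to $\mathbb{R}_-$: across the cut the $\omega$-weighted backgrounds cancel in $Z_--Z_+$ and only the Hankel term survives, giving \eqref{eq:discZ}, the truncation for even $N$ being the vanishing of $1/\Gamma(N/2-2q)$ at non-positive integer arguments. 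Finally, for Property~\ref{propZ:7} I would extract from \eqref{eq:Zpertseries} the two-term recursion $6(n+1)c_{n+1}+(2n+\tfrac{N}{2})(2n+1+\tfrac{N}{2})c_n=0$ and translate it, via $gZ'=\sum n c_n g^n$ and $g^2Z''=\sum n(n-1)c_n g^n$, directly into \eqref{PDEpartitionfunctionN}; to promote this from the formal series to the analytic function I would independently verify the ODE by differentiating \eqref{eq:Zsigma} under the integral sign and integrating by parts in $\s$, the boundary terms vanishing by the Gaussian weight.
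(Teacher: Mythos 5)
Your proposal follows essentially the same route as the paper: the uniform bound $|1-\imath\sqrt{g/3}\,\s|\ge\cos\tfrac{\varphi}{2}$ and the split Gaussian-moment estimate for $N<0$, termwise binomial expansion for the perturbative series, Taylor remainder plus Nevanlinna--Sokal for Borel summability, the detached Hankel contour rewritten via a change of variables (and a second Hubbard--Stratonovich) as $Z^{\mathbb{R}}(-g,2-N)$ to generate the instanton term, the sheet-by-sheet transfer-matrix recursion for $(\omega_k,\eta_k)$, and cancellation of the backgrounds in the discontinuity. The only points of divergence are minor: for Property~\ref{propZ:3} you must tilt the $\s$-contour (the paper takes $\theta=\varphi/6$) since for rays near $\pm\pi$ the Sokal disk leaves $\mathbb{C}_\pi$ and the untilted integral with the $\cos\tfrac{\varphi}{2}$ bound degenerates, and for Property~\ref{propZ:7} the paper derives the ODE from the shift identity $(N+4g\p_g)Z(g,N)=NZ(g,N+2)$ applied twice rather than from the coefficient recursion, though your integration-by-parts verification on Eq.~\eqref{eq:Zsigma} is an equally valid way to promote it to the analytic function.
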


\begin{proof}
See Appendix~\ref{app:proofZ}.
\end{proof}

\bigskip

The proof of this proposition is quite technical. The most interesting points come at Property~\ref{propZ:4}. While the full details can be found in Appendix~\ref{app:proofZ}, we discuss here how the Stokes phenomenon arises in the intermediate field representation. 

In order to obtain the asymptotic approximation of an integral we need to deform the integration contour to steepest-descent contours (or Lefschetz thimbles) where the Laplace method can be applied. An integration contour will in principle intersect several steepest-ascent (upwards) paths of several saddle points and it must then be deformed (i.e.~relaxed under the gradient flow) to run along the thimbles of these saddle points \cite{Witten:2010cx,Aniceto:2018bis}. When varying some parameter continuously the relevant thimbles can collide and change discontinuously leading to discontinuous changes of the asymptotic regimes at Stokes lines.
This is exactly the picture in the $\phi$ representation of the partition function, which we recall in Appendix~\ref{app:expansions}. 

In the $\sigma$ representation the picture is different.
Let us go back to Eq.~\eqref{eq:Zsigma} expressing $Z(g,N)$ as an integral over the real line.
The partition function $Z(g,N)$ is analytically continued to the extended Riemann sheet $\mathbb{C}_{3\pi/2}$ by tilting the integration contour to $e^{-\imath \theta}\sigma$ respectively $e^{\imath \theta}\sigma$ with $\theta > 0$ for its anticlockwise respectively clockwise analytic continuations $Z_+(g,N)$ and $Z_-(g,N)$. 

In this representation, the Lefshetz thimble is always the real axis, irrespective of $g$. In fact, as $g$ goes to zero, the Laplace method instructs us to look for the saddle point of the exponent in the integrand (the function $f(x)$ in Eq.~\eqref{eq:thimbles}), which in this case is a simple quadratic function,\footnote{This is perhaps more clear after rescaling $\sigma$ by $1/\sqrt{g}$ to cast our integral in the same form as Eq.~\eqref{eq:thimbles}.} while the subexponential function  (the function $a(x)$ in Eq.~\eqref{eq:thimbles}) is irrelevant for the dermination of the saddle points.
However, what happens is that the integrand has a branch point (or pole for $N$ a positive even integer) and this point crosses the thimble at the Stokes line. This is depicted in Fig.~\ref{fig:sigma-contour}.
\begin{figure}[htb]
    \centering
    \begin{tikzpicture}[font=\footnotesize]
        \node at (0,0) {\includegraphics[scale=.5]{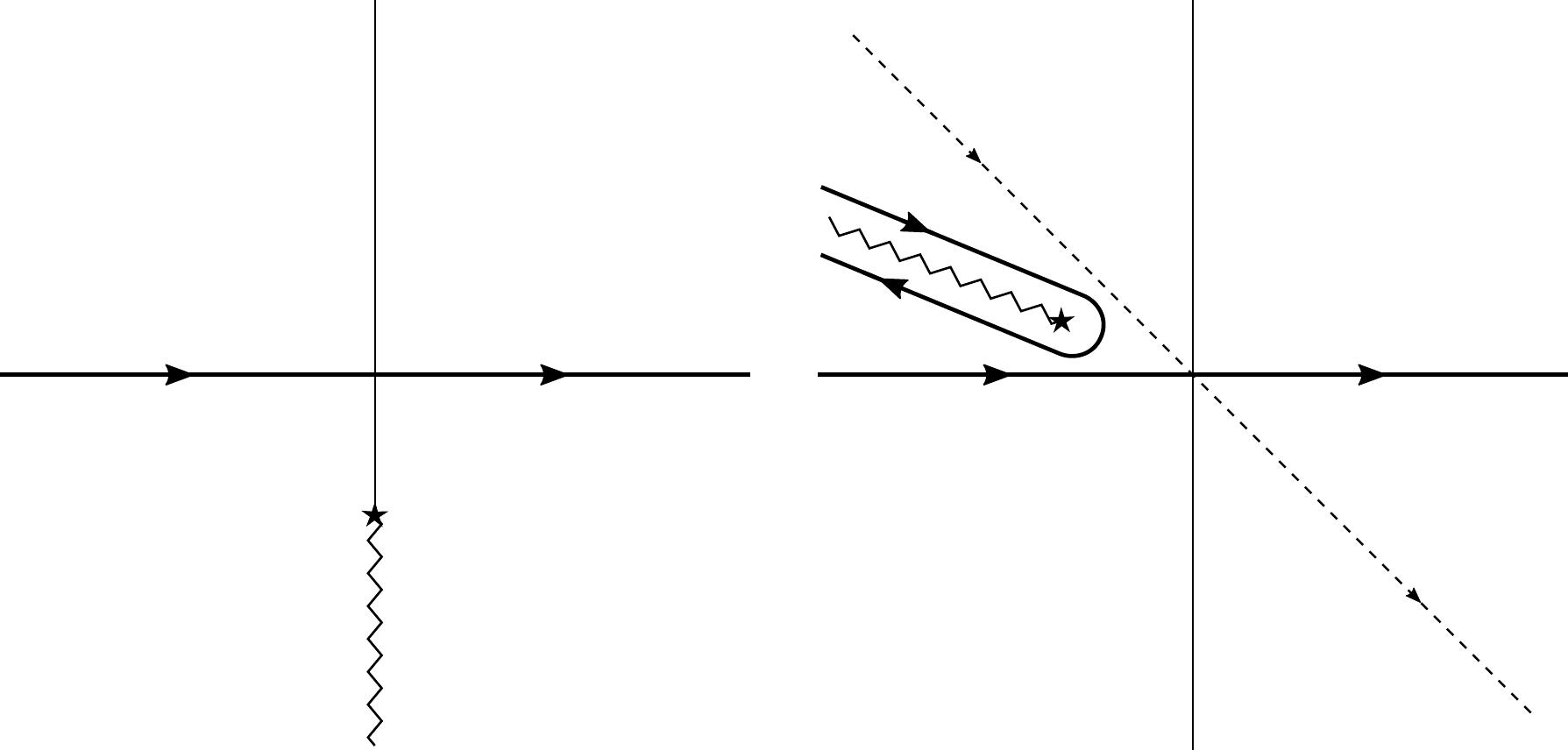}};
        \node at (-2.4,2.8) {\normalsize$\arg (g)=0$};
        \node at (2.4,2.8) {\normalsize$\arg (g) > \pi$};
        \node at (-4.9,0) {$\mathbb{R}$};       
        \node at (-2.7,-0.8) {$\sigma_\star$};
        \node at (0,1.3) {$C$};
        \node at (5,-1.8) {$\mathbb{R}e^{-i\theta}$};
        %\draw[step=1.0,blue,thin] (5,3) grid (-5,-3);
    \end{tikzpicture}
    \caption{As $\arg(g)$ increases, the branch cut moves clockwise in the complex $\sigma$-plane. When $g$ crosses the negative real axis the tilted contour is equivalent to a Hankel contour $C$ plus the original contour along the real line \eqref{eq:sigma-contour1}.}
    \label{fig:sigma-contour}
\end{figure}

In detail, as long as $g = |g| e^{\imath\varphi}\in \mathbb{C}_{\pi}$, the integral converges because the branch point $\sigma_\star = -\imath \sqrt{ 3/g }  $
with branch cut $\sigma_\star \times(1,+\infty)$,
lies outside the integration contour. As $g$ approaches $\mathbb{R}_-$,  the branch point hits the contour of integration: for $\varphi \nearrow \pi$ the branch point hits the real axis at $ - \sqrt{3/|g|}$.
The analytic continuation $Z_+(g,N)$ consists in tilting the contour of integration in $\sigma$ to avoid the collision with the branch point. However,  in order to derive the asymptotic behaviour of $Z_+(g,N)$, we need to deform the integration contour back to the thimble, which is always the real axis. Once $g$ passes on the second Riemann sheet ($ \varphi > \pi$), when deforming the tilted contour to the the real axis we generate an additional Hankel contour (see Fig.~\ref{fig:sigma-contour}):
\begin{equation}\label{eq:sigma-contour1}
\begin{split}
    & Z_\pm(g,N)\big{|}_{\varphi\genfrac{}{}{0pt}{}{>\pi}{< -\pi}}=\int_{ e^{  \mp \imath \theta} \mathbb{R}} [d\sigma]  \;\frac{e^{-\frac{1}{2} \sigma^2} }{\left( 1 - \imath \sqrt{ \frac{g}{3} } \sigma\right)^{N/2} } = Z^{\mathbb{R}}(g,N) + Z^C_\pm(g,N)  \crcr
    & Z^{\mathbb{R}}(g,N) = 
    \int_{ \mathbb{R}} [d\sigma]  \;e^{-\frac{1}{2} \sigma^2} \frac{1}{\left( 1 - \imath \sqrt{ \frac{g}{3} } \sigma\right)^{N/2} } \;,\qquad  Z^C_\pm(g,N) = \int_{C} [d\sigma] \;e^{-\frac{1}{2} \sigma^2} \frac{1}{\left( 1 - \imath \sqrt{ \frac{g}{3} } \sigma\right)^{N/2} } \;,
\end{split}
\end{equation}
where the Hankel contour $C$ turns clockwise around the cut $\sigma_\star \times (1,+\infty)$, i.e.~starting at infinity with argument $ \frac{3\pi}{2} -\frac{\varphi}{2}$ and going back with argument $ -\frac{\pi}{2} -\frac{\varphi}{2}$ after having encircled the branch point $\sigma_\star$.
We kept a subscript $\pm$ for the contribution of the Hankel contour, because, even though the definition of $Z^C_\pm(g,N)$ and $C$ might suggest that it is one single function of $g$, in fact the integral around the cut is divergent for $|\varphi|<\pi/2$, and therefore the integrals at $\pi<\varphi<3\pi/2$ and  at  $-\pi>\varphi>-3\pi/2$ are \emph{not} the analytic continuation of each other. This fact is reflected in the $\t$-dependence of the asymptotic expansion in Eq.~\eqref{eq:fullZ}. 

The appearance of the Hankel contour marks a discontinuity of the contour of integration in $\sigma$ as a function of the argument of $g$, which translates into a discontinuity of the asymptotic expansion of $Z(g,N)$, that is, a Stokes phenomenon. We insist that the Hankel contour is \emph{not} a thimble for the integral in Eq.~\eqref{eq:Zsigma}, but it \emph{contributes} to the asymptotic evaluation of the integral, providing the one-instanton contribution in the transseries of $Z(g,N)$.

In order to go beyond $|\varphi|=3\pi/2$, one notices that we can analytically continue  separately $Z^{\mathbb{R}}(g,N)$ and $Z^C_\pm(g,N)$. The first is analytic in the range $|\varphi|\in(\pi,3\pi)$, where its asymptotic expansion is just the standard perturbative series.
The analytic continuation of $Z^C_\pm(g,N)$ is not immediate in the Hankel contour representation, which is only convergent for $\pi/2<|\varphi|<3\pi/2$, but after resolving the discontinuity at the cut 
and using again the Hubbard-Stratonovich trick
(as detailed in Appendix~\ref{app:proofZ}), it turns out that it can be rewritten as: 
\begin{equation}\label{eq:Zpmmain}
     Z^C_\pm(g,N)  =    e^{\imath \tau \pi (1-\frac{N}{2})  }\left(\frac{ g} {3} \right)^{\frac{1-N}{2} }
  e^{\frac{3}{2g}}  \frac{\sqrt{ 2\pi } }{\Gamma(N/2)}  \; Z ( e^{\imath \tau \pi} g , 2-N) 
  \;.
\end{equation}
with $\tau = -\sgn(\varphi)$. In this form it is manifest that $ Z^C_\pm(g,N)$ is analytic in $g$ as long as 
$ e^{\imath \tau \pi} g $ belongs to the principal sheet of the Riemann surface, that is for $\pi<|\varphi|<2 \pi$.
 We have thus shown that, when going from $|\varphi| <\pi$ to $\pi < |\varphi| < 2\pi$ our analytic continuation of $Z(g,N)$ switches:
\begin{equation}\label{eq:nicecontmain}
\begin{split}
Z(g,N) \xrightarrow[]{ |\varphi| \nearrow \pi_+} &\; Z^{\mathbb{R}}(g,N) + \frac{\sqrt{ 2\pi } }{\Gamma(N/2)}  \;  e^{\imath \tau \frac{\pi}{2}  } \;  e^{\frac{3}{2g}} \; \left( e^{\imath \tau \pi }   \frac{ g} {3} \right)^{\frac{1-N}{2} }  Z^{\mathbb{R}} ( - g , 2-N) 
  \crcr 
&  =  Z( e^{\imath (2\tau \pi) } g,N) +  
\frac{\sqrt{ 2\pi } }{\Gamma(N/2)}  \;  e^{\imath \tau \frac{\pi}{2}  } \;  e^{\frac{3}{2g}} \; \left( e^{\imath \tau \pi }   \frac{ g} {3} \right)^{\frac{1-N}{2} }  \; Z ( e^{\imath \tau \pi} g , 2-N) 
  \;,
  \end{split}
\end{equation}
where we explicitly exhibited the argument at which the switching takes place.
In the second line above,  for $\pi< |\varphi| < 2\pi$, both arguments $  e^{\imath (2\tau \pi) } g$ and $ e^{\imath \tau \pi} g$ belong to the principal sheet of the Riemann surface, where $Z(g,N)$ has already been constructed and proven to be analytic.  The first term in Eq.~\eqref{eq:nicecontmain} is regular up to $|\varphi| = 3\pi$, but the second one has a problem when $e^{\imath \tau \pi} g $ reaches the negative real axis and retracing our steps we conclude that the analytic continuation switches again:
\begin{equation}\label{eq:niceothercontmain}
\begin{split}
Z ( e^{\imath \tau \pi} g , 2-N)   \xrightarrow[]{ |\varphi| \nearrow 2\pi_+}\ & Z ( e^{\imath (3\tau \pi)} g , 2-N)  \\
& +  \frac{\sqrt{ 2\pi } }{\Gamma(1-N/2)}  \;  e^{\imath \tau \frac{\pi}{2}  } \;  e^{\frac{3}{2ge^{\imath \tau \pi}}} \; \left( e^{\imath (2\tau \pi) }   \frac{ g} {3} \right)^{\frac{N-1}{2} }   \; Z ( e^{\imath (2 \tau \pi ) } g , N) \; ,
\end{split}
\end{equation}
where this time the arguments at which $Z$ is evaluated on the right hand side stay in the principal sheet for $2\pi < |\varphi| < 3\pi$. Iterating, one obtains the analytic continuation to the whole Riemann surface.

\begin{remark} \upshape
The differential equation \eqref{PDEpartitionfunctionN} can be solved in terms of special functions.
For $N=1$, setting $Z(g,1)=\sqrt{\f{3}{2\pi g}} e^{\f{3}{4g}} f(\f{3}{4g})$, we find that the equation reduces to a modified Bessel's equation for $f(z)$:
\begin{equation} \label{eq:Bessel}
    z^2 f^{\prime \prime}(z) + z f^{\prime}(z) - (z^2 +\f{1}{16}) f(z)=0 \;.
\end{equation}
Its two lienarly independent solutions are the modified Bessel functions of the first and second kind of order $1/4$. However, only the second, $K_{1/4}(z)$, decays for $z\to\infty$, hence the initial condition $Z(0,1)=1$ fixes $f(z)=K_{1/4}(z)$.

Similarly, for general $N$, we find that setting $Z(g,N)=(\f{3}{2 g})^{N/4} f(\f{3}{2 g})$ the differential equation reduces to Kummer's equation  for $f(z)$:
\begin{equation}
    z f^{\prime \prime}(z) + (\f12-z) f^{\prime}(z) - \f{N}{4} f(z)=0 \;.
\end{equation}
With the addition of the initial condition condition $Z(0,N)=1$, we find that the solution is given by the Tricomi confluent hypergeometric function $f(z)=U(N/4,1/2,z)$.

However, for our purposes, such expressions of the partition function in terms of special functions do not gain us much, as actually such functions are defined either as integrals (so this just amounts to some manipulation and change of variables in Eq.~\eqref{eq:Zsigma}) or as a series in $z$, meaning that they are obtained from a large-$g$ expansion (see \cite{Aniceto:2018bis}). Moreover, the analytic continuation of Tricomi's function is usually given in terms of $U(a,b,z)$ itself and of Kummer's function $M(a,b,z)$, so that it is not obvious how to translate that for $Z(g,N)$.
\end{remark}

%%%%%%%%%%%%%%%%%%%%%%%%%%%%%%%%%%%%%%%%%%%%%%%%%%%%%%%%
\subsection{Convergent small-$N$ series of $Z(g,N)$ and transseries of its coefficients $Z_n(g)$} 
%%%%%%%%%%%%%%%%%%%%%%%%%%%%%%%%%%%%%%%%%%%%%%%%%%%%%%%%

We will now study the discontinuity of $Z(g,N)$ from a different perspective. We expand the integrand of Eq.~\eqref{eq:Zsigma} in powers of $N$ and exchange the order of summation and integration:
\be \label{eq:Z-logseries}
Z(g,N) = \sum_{n\geq 0} \f{1}{n!} \left(-\frac{N}{2}\right)^n Z_n(g)\; ,\quad Z_n(g)= \int_{-\infty}^{+\infty} [d\s] e^{-\f12 \s^2} \left( \ln(1-\imath \sqrt{\f{g}{3}} \s ) \right)^n \;.
\ee
Unlike the usual perturbative expansions in $g$, this is a convergent expansion: from the bound 
in Property~\ref{propZn:1} of Proposition~\ref{prop:Zn} below, the Gaussian integral and the sum can be commuted due to Fubini's Theorem. 
As a function of $N$, we can regard $Z(g,N)$ as a generating function of ``moments": unlike the usual moments, we are dealing with expectations of powers of the logarithm.

\begin{proposition}[Properties of $Z_n(g)$] \label{prop:Zn} 
The $Z_n(g)$, $n\in\mathbb{N}_{\ge 0}$ satisfy the following properties:
\begin{enumerate}
\item\label{propZn:1} $Z_n(g)$ is analytic in the cut plane $\mathbb{C}_\pi$. Indeed, for every $g\in\mathbb{C}_\pi$, the integral \eqref{eq:Z-logseries} is absolutely convergent and bounded from above by: 
\begin{equation}
| Z_n(g) | \le K^n \frac{ \bigg(  |\ln ( \cos\frac{\varphi}{2}) |+1 \bigg)^n   }{ \ve^n  }
\bigg(1 + |g|^{\frac{n\ve}{2}} \Gamma(\tfrac{n\ve + 1}{2}  ) \bigg) \; ,
\end{equation}
for any $\ve >0$ and with $K$ some $g$-independent constant. Using this bound with some fixed $\ve<2$ shows that, $\forall g \in \mathbb{C}_\pi$ the series in Eq.~\eqref{eq:Z-logseries} has infinite radius of convergence in $N$.

\item\label{propZn:2} For $g\in\mathbb{C}_\pi$, $Z_n(g)$ has the perturbative expansion:
\be \label{eq:Znpertseries}
Z_n(g)\simeq   \sum_{m\geq n/2} \left( -\f{2g}{3}\right)^m  \, \f{(2m)!}{2^{2m} m!}
 \sum_{\substack{m_1,\ldots, m_{2m-n+1} \ge 0  \\ \sum k  m_k=2m , \; \sum m_k=n}} \f{(-1)^n n!}{\prod_k k^{m_k}m_k!} \, \equiv \, Z_n^{{\rm pert.}}(g) \; .
\ee

\item\label{propZn:3} The functions $Z_n(g)$ are Borel summable along all the directions in $\mathbb{C}_\pi$.

\item\label{propZn:4} $Z_n(g)$ can be continued past the cut on the extended Riemann sheet $\mathbb{C}_{3\pi/2}$,
and the small-$N$ series has infinite radius of convergence in $N$ in this domain. 
However, $\mathbb{R}_-$ is a Stokes line and the anticlockwise and clockwise analytic continuations $Z_{n+}(g)$ and $Z_{n-}(g)$ are not equal and cease to be Borel summable at $\mathbb{R}_-$.

\item\label{propZn:5} 
For $g\in\mathbb{C}_{3\pi/2}$, $Z_n(g)$ has a the following transseries expansion:
 \begin{equation}\label{eq:transsereiesZn}
     Z_n(g) \simeq Z_n^{{\rm pert.}}(g) + \eta \, e^{\frac{3}{2g}} Z_n^{(\eta)}(g) \;, 
\end{equation}
with $Z_n^{{\rm pert.}}(g)$ as in Eq.~\eqref{eq:Znpertseries}, and:
 \begin{equation}
 \begin{split}
 Z_n^{(\eta)}(g) = &    \f{\imath }{\sqrt{2\pi}}\sqrt{ \f{  g}{3}}
\sum_{q=0}^{\infty} \sum_{p=0}^{n} \f{1}{q!} \left(\f{g}{6}\right)^q
\binom{n}{p} \f{d^p \G(z)}{dz^p}\Big|_{z=2q+1}  \crcr
& \qquad \qquad \qquad \qquad \qquad \left[\left(\ln( e^{\imath \tau \pi } \f{g}{3}) -i\pi \right)^{n-p}
-\left(\ln( e^{\imath \tau \pi } \f{g}{3})  +i\pi \right)^{n-p}\right]\; ,
\end{split} 
\end{equation}
with $\tau = -\sgn(\varphi)$ and $\eta$ a transseries parameter which is zero on the principal Riemann sheet and one if $|\varphi|>\pi$.
Proceeding in parellel to Proposition~\ref{prop:Z}, one can study the full monodromy of $Z_n(g)$.

\item\label{propZn:6} The discontinuity on the negative axis has the following asymptotic expansion:
\be \label{eq:discZn} \begin{aligned}
{\rm disc}_\pi \big(Z_n(g)\big) \simeq\f{e^{-\f{3}{2|g|}}}{\sqrt{2\pi}}\sqrt{\f{|g|}{3}}
\sum_{q=0}^{\infty} \sum_{p=0}^{n} \f{1}{q!} &\left(-\f{|g|}{6}\right)^q
\binom{n}{p} \f{d^p \G(z)}{dz^p}\Big|_{z=2q+1} 
\\ & \left[\left(\ln|\f{g}{3}| -i\pi \right)^{n-p}
-\left(\ln|\f{g}{3}| +i\pi \right)^{n-p}\right]\; .
\end{aligned}\ee
Summing over $n$, the discontinuity of the partition function \eqref{eq:discZ} is recovered. 

\item\label{propZn:7}   The functions $Z_n(g)$  obey a tower of linear, inhomogenous ordinary differential equations:
\begin{equation} \label{PDEZn}
\begin{split}
    & Z_0(g)=1 \;,\\
    & 4 g^2 Z_1^{\prime\prime}(g)+6 \left(g+1 \right) Z_1^{\prime}(g)  = 1 \;,\\
    & 4 g^2 Z_n^{\prime\prime}(g)+6 \left(g+1 \right) Z_n^{\prime}(g)  =  n \left( 4 g Z_{n-1}^{\prime}(g)
    + Z_{n-1}(g) \right) -  n (n-1) Z_{n-2}(g) \;.
\end{split}
\end{equation}
which can be used to reconstruct the resurgent transseries expansion of $Z_n(g)$.
\end{enumerate}
\end{proposition}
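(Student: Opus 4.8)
The plan is to read off the tower~\eqref{PDEZn} from the single homogeneous ODE~\eqref{PDEpartitionfunctionN} for $Z(g,N)$ (established in Property~\ref{propZ:7}), by substituting the convergent small-$N$ expansion~\eqref{eq:Z-logseries} and matching powers of $N$. Writing $Z(g,N)=\sum_{n\ge 0} c_n\, N^n\, Z_n(g)$ with $c_n=\frac{1}{n!}\left(-\frac12\right)^n$, the arithmetic that drives the whole computation is the pair of ratios
\begin{equation*}
\frac{c_{n-1}}{c_n}=-2n\;,\qquad \frac{c_{n-2}}{c_n}=4n(n-1)\;,
\end{equation*}
which convert the $N$-linear and $N$-quadratic coefficients $(8N+24)g$ and $N(N+2)$ of~\eqref{PDEpartitionfunctionN} into the index shifts $n\to n-1$ and $n\to n-2$ appearing on the right-hand side of~\eqref{PDEZn}.

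Before extracting coefficients I would justify differentiating the $N$-series term by term in $g$. Each $Z_n$ is analytic on $\mathbb{C}_\pi$ (Property~\ref{propZn:1}) and the series has infinite radius of convergence in $N$ there, so $N\mapsto Z(g,N)$ is entire while $g\mapsto Z(g,N)$ is analytic on $\mathbb{C}_\pi$; equivalently $c_n Z_n(g)=\frac{1}{2\pi\imath}\oint_{|N|=r} dN\, N^{-n-1}\,Z(g,N)$ is analytic in $g$, and one may differentiate under this contour integral, giving $\partial_g^k Z(g,N)=\sum_n c_n N^n\,\partial_g^k Z_n(g)$ for $k=1,2$. Substituting into~\eqref{PDEpartitionfunctionN} and collecting the coefficient of $N^n$, the term $16g^2 Z''$ gives $16g^2 c_n Z_n''$, the grouping $24(g+1)Z'$ gives $24(g+1)c_n Z_n'$, the piece $8gN\,Z'$ shifts the index to give $8g\,c_{n-1}Z_{n-1}'$, while $N(N+2)Z$ gives $c_{n-2}Z_{n-2}+2c_{n-1}Z_{n-1}$. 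Dividing by $c_n$, inserting the two ratios above, and dividing by $4$ yields exactly~\eqref{PDEZn}. The base cases then come for free: $Z_0=\int[d\sigma]\,e^{-\frac12\sigma^2}=1$ since the $n=0$ integrand is the normalized Gaussian, and at $n=1$ the $Z_{-1}$ term is killed by its prefactor $n(n-1)=0$ while $4gZ_0'+Z_0=1$, reproducing $4g^2 Z_1''+6(g+1)Z_1'=1$.

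I do not expect a genuine obstacle here; the only delicate point is the interchange of $\partial_g$ with the $N$-summation, which is why I would dispatch it first via the joint-analyticity / contour-integral argument above. I would also note why I avoid a direct derivation straight from the integral~\eqref{eq:Z-logseries}: since $\frac{d}{d\sigma}\ln\!\big(1-\imath\sqrt{g/3}\,\sigma\big)\propto\big(1-\imath\sqrt{g/3}\,\sigma\big)^{-1}$ is not a power of the logarithm, integration by parts in $\sigma$ fails to close on the $Z_n$ alone, whereas projecting the already-established ODE~\eqref{PDEpartitionfunctionN} onto powers of $N$ closes the recursion by construction.
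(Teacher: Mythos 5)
Your proposal establishes only Property~\ref{propZn:7} of the seven-part proposition. For that item the argument is correct and is essentially the paper's own (the paper disposes of it in one sentence: substitute the small-$N$ expansion into Eq.~\eqref{PDEpartitionfunctionN} and collect powers of $N$); your extra care in computing the ratios $c_{n-1}/c_n=-2n$, $c_{n-2}/c_n=4n(n-1)$ and in justifying term-by-term differentiation via the Cauchy contour integral in $N$ is sound, and the resulting recursion matches Eq.~\eqref{PDEZn}, including the base cases. Note, however, that even this piece leans on the infinite radius of convergence in $N$, i.e.\ on the bound of Property~\ref{propZn:1}, which you invoke but do not prove.

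The genuine gap is everything else. Properties~\ref{propZn:1}--\ref{propZn:6} carry the substantive analytic content of the proposition and are not addressed at all: the explicit bound on $|Z_n(g)|$ obtained by controlling $|\ln(1-\imath e^{\imath\varphi/2}x)|$ and splitting the integration range (Property~\ref{propZn:1}); the perturbative coefficients via Fa\`a di Bruno's formula (Property~\ref{propZn:2}); Borel summability, which in the paper requires a Taylor-rest bound on the tilted contour $e^{-\imath\theta}\mathbb{R}$ with $\theta=\varphi/6$ and an appeal to the extended Nevanlinna--Sokal theorem (Property~\ref{propZn:3}); the continuation to $\mathbb{C}_{3\pi/2}$ and the decomposition $Z_{n\pm}=Z_n^{\mathbb{R}}+Z_{n\pm}^C$ with a Hankel contour detaching as the branch point $\sigma_\star$ crosses the real axis (Property~\ref{propZn:4}); the explicit evaluation of the Hankel integral yielding $Z_n^{(\eta)}(g)$ and the $\tau$-dependence (Property~\ref{propZn:5}); and the discontinuity formula together with its resummation over $n$ (Property~\ref{propZn:6}). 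These cannot be recovered from the ODE tower alone: the differential equations determine the transseries only up to the choice of transseries parameters, and fixing those parameters (the Stokes data, the fact that $\eta$ jumps from $0$ to $1$ at $|\varphi|=\pi$, the sign of $\tau$) is precisely the nontrivial content that the paper extracts from the contour analysis. As a proof of the full proposition the proposal is therefore incomplete in an essential way.
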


\begin{proof}
See Appendix~\ref{app:proofZn}
\end{proof}

As with Proposition~\ref{prop:Z}, the most interesting points are Properties~\ref{propZn:4} and ~\ref{propZn:5}. Again the analytic continuations $Z_{n\pm}(g)$ of $Z_n(g)$ to the extended Riemann sheet $\mathbb{C}_{3\pi/2}$ are obtained by tilting the integration contour to $e^{\mp \imath \theta}\sigma$ with $\theta > 0$.  The branch point $\sigma_{\star}$ of the integrand in Eq.~\eqref{eq:Z-logseries} crosses the real axis when $g$ reaches $\mathbb{R}_-$, and deforming the tilted contours back to the real axis detaches Hankel contours around the cut $ \sigma_\star \times (1,+\infty)$:
\begin{equation}\label{eq:sigma-contour12}
\begin{split}
    & Z_{n\pm}(g)\big{|}_{\varphi\genfrac{}{}{0pt}{}{>\pi}{< -\pi}}=\int_{ e^{  \mp \imath \theta} \mathbb{R}} [d\sigma]  \; e^{-\frac{1}{2} \sigma^2} \left( \ln( 1 - \imath \sqrt{ \frac{g}{3} }  \sigma) \right)^{n}  = Z^{\mathbb{R}}_n(g) + Z^C_{n\pm}(g)  \crcr
   & Z^{\mathbb{R}}_n(g) = 
    \int_{ \mathbb{R}} [d\sigma]  \;e^{-\frac{1}{2} \sigma^2}  \left( \ln( 1 - \imath \sqrt{ \frac{g}{3} }  \sigma ) \right)^{n}  \;,\qquad  Z^C_{n\pm}(g) = \int_{C} [d\sigma] \;e^{-\frac{1}{2} \sigma^2} \left( \ln(1 - \imath \sqrt{ \frac{g}{3} } \sigma)\right)^{n}  \; .
\end{split}
\end{equation}
The transeries of $Z_n(g)$ is obtained by summing the asymptotic expansions of the two pieces:
\[
Z_{n}^{\mathbb{R}}(g) \simeq Z_n^{{\rm pert.}}(g) \;,\qquad Z_{n\pm}^C(g) \simeq e^{\frac{3}{2g}} Z^{(\eta)}_{n}(g)\big{|}_{\tau = \mp 1} \;.
\]

\bigskip

Notice that the homogeneous equation in Property~\ref{propZ:7} in Proposition~\ref{prop:Zn} is the same for all $n\geq 1$, and it admits an exact solution in the form of a constant plus an imaginary error function:
\begin{equation}\label{eq:homDiffEqZ0}
    4g^2 Z_1^{\prime \prime}(g) + 6(g+1)Z_1^\prime(g)=0 \qquad \Rightarrow \qquad    Z_1(g)=c_1+ c_2  \int_0^{\imath \sqrt{ \frac{3}{2g} } }e^{-t^2} dt \; .
\end{equation}
The asymptotic expansion of the error function reproduces the one-instanton contribution of Eq.~\eqref{eq:transsereiesZn} for $n=1$.
For $n>1$, instead, this is only part of the instanton contribution, the rest being generated by the recursive structure of the inhomogeneous equations.
Similarly, the perturbative expansion comes from the special solution to the inhomogeneous equation, even at $n=1$, as for those we cannot match exponential terms with the right-hand side.
For $n>1$ the homogeneous equation remains the same, but the inhomogeneous part depends on the solutions to previous equations, and thus it can also contain exponential terms.

%%%%%%%%%%%%%%%%%%%%%%%%%
\section{The free energy $W(g,N)$}
\label{sec:W}
%%%%%%%%%%%%%%%%%%%%%%%%%

We now turn to the free energy $W(g,N) = \ln Z(g,N)$. 
Our aim is find the equivalent of the results listed in Proposition \ref{prop:Z}, in the case of $W(g,N)$. Taking the logarithm has drastic effects: the nonperturbative effects encountered in $W(g,N)$ are significantly more complicated than the ones encountered for $Z(g,N)$. 
One can understand this from the fact that the linear differential equation satisfied by $Z(g,N)$ translates into a nonlinear one for $W(g,N)$, leading to an infinite tower of multi-instanton sectors in the transseries \cite{Aniceto:2018bis}.
Here we will follow a different route, based on the small-$N$ expansion.

Much like the partition function $Z(g,N)$, its logarithm $W(g,N)$ can also be expanded in $N$:
\begin{equation}\label{eq:ZW}
     W(g,N) = \ln(Z(g,N)) \equiv \sum_{n\ge 1}\frac{1}{n!} \left( -\frac{N}{2}\right)^n W_n(g) \;.
\end{equation}

The coefficients $W_n(g)$ can be computed in terms of $Z_n(g)$. As already mentioned, $Z_n(g)$ are the moments of the random variable $\ln(1-\imath \sqrt{g/ 3} \, \sigma)$, hence $W_n$ are the cumulants of the same variable and can be computed in terms of $Z_n(g)$ by using the M\"obius inversion formula (which in this case becomes the moments-cumulants formula). Let us denote $\pi$ a partition of the set $\{1,\dots n\}$, $b\in \pi$ the parts in the partition, $|\pi|$ the number of parts of $\pi$ and $|b|$ the cardinal of $b$. Then:
\begin{equation}
  Z_n (g)= \sum_{\pi} \prod_{b\in \pi} W_{|b|} (g)   \;,\qquad W_n (g) = \sum_{\pi} \lambda_{\pi} \prod_{b\in \pi}  Z_{|b|} (g) \;,
\end{equation}
where $\lambda_{\pi} = (-1)^{|\pi|-1} (|\pi|-1)!$ is the M\"oebius function on the lattice of partitions.
Grouping together the partitions having the same number $n_i$ of parts of size $i$ this becomes:\footnote{This can also be obtained directly from Fa{\`a} di Bruno's formula:
\[
    \frac{d^n}{du^n} \ln Z|_{u=0}
  =\sum_{\substack{n_1,\dots n_n \geq 0\\\sum i n_i = n }}
     \frac{n!}{\prod_i n_i! (i!)^{n_i}} \; (n_1 + \dots +n_n-1)! \;
      \frac{(-1)^{n_1 + \dots +n_n-1}}{ Z^{ n_n + \dots +n_n } } \prod_{i\ge 1} [ Z^{(i)} ]^{n_i}|_{u=0} \; ,
\]
noticing that for $u=-N/2$, we have $Z^{(i)}=Z_i$.
}
\begin{align}\label{eq:Moeb}
Z_n(g) & = \sum_{\substack{ n_1,\dots,n_n \ge 0\\ \sum in_i = n } } \frac{n!}{\prod_i n_i! (i!)^{n_i}}
      \prod_{i=1}^{n} W_i(g)^{n_i} \;,\crcr
W_n(g) & = \sum_{k= 1}^{n} (-1)^{k-1} (k-1)! \sum_{ \substack{n_1,\dots, n_{n-k+1}\ge 0 \\ \sum in_i = n ,\, \sum n_i = k }}  \frac{n!}{\prod_i n_i! (i!)^{n_i}} \prod_{i=1}^{n-k+1} Z_i(g)^{n_i} \;.
\end{align}

Eq.~\eqref{eq:Moeb} relates $W_n(g)$ and $Z_n(g)$ as analytic functions of $g$. However, this translates into a relation between $W(g,N)$ and $Z(g,N)$ which holds only \emph{in the sense of formal power series in $N$}. Even though $Z(g,N)$ is analytic in some domain, one can not conclude that $W(g,N)$ is also analytic in the same domain: convergence of the series defining $Z(g,N)$ does not imply convergence of the series defining $W(g,N)$ in Eq.~\eqref{eq:ZW}. This can most readily be seen at the zeros of the partition function, the so-called Lee-Yang zeros, which are singular points for the free energy. In order to study the analyticity properties of $W(g,N)$ one needs to use a completely different set of tools. However, as we will see below, the M\"obius inversion has its own uses: it is the most direct way to access the transseries expansion of $W(g,N)$. 

\subsection{Constructive expansion}

The following Proposition~\ref{prop:LVE+bounds} is a slight variation on the Loop Vertex Expansion (LVE) introduced in \cite{Rivasseau:2007fr} (see also \cite{Gurau:2014vwa} for more details). 
It gives an integral representation for $W_n(g)$ in Eq.~\eqref{eq:ZW} which allows us to prove that $W(g,N)$ is convergent (hence analytic) in a bounded domain on the extended Riemann sheet $\mathbb{C}_{3\pi/2}$, wrapping around the branch point at the origin. 

In Proposition~\ref{prop:Z} we fixed $N$ to be a real parameter. However, Eq.~\eqref{eq:Z-logseries} writes $Z(g,N)$ as an expansion in $N$ with a nonzero (infinite!) radius of convergence in $N$, as long as 
$|\varphi| < 3\pi/2$ (note that the bound in Property \ref{propZn:1} of Proposition~\ref{prop:Zn} suffices only for $|\varphi|< \pi$; in order to reach $|\varphi| < 3\pi/2$, one needs to use the improved bound in Eq.~\eqref{eq:problemeq}). We can therefore extend $N$ to a larger domain in the complex plane. 
As the following proposition shows, something similar applies also to $W(g,N)$, but with a finite radius of convergence.

\paragraph{Notation.}
Let us denote $T_n$ the set of combinatorial trees with $n$ vertices labeled $1,\dots n$. There are $\frac{ (n-2)! }{ \prod_{i=1}^n (d_i-1)!}$ trees over $n$ labelled vertices with coordination $d_i$ at the vertex $i$ and $\sum_i d_i = 2(n-1)$. The total number of trees in $T_n$ is $n^{n-2}$. 
Let $\cT\in T_n$ be such a tree. We denote $P_{k-l}^{\cT} $ the (unique) path in the tree $\cT$ connecting the vertices $k$ and $l$. If we associate to each edge $(k,l)\in \cT$ a variable $u_{kl}$ between 0 and 1, we can define the $n\times n$ matrix $w^\cT$:
\begin{equation}
      w^\cT_{kl} \equiv \begin{cases}
                     1 \;,\qquad & \text{if} \;\; k=l \\
                     \inf_{(i,j)\in   P_{k-l}^{\cT} } \{ u_{ij} \}  \;, \qquad &\text{else} 
                  \end{cases}  \; .
\end{equation}
The matrix $w^\cT$ is a positive matrix for any choice of $u$ parameters, and is stirctly positive outside a set of measure $0$ (see Appendix~\ref{app:BKAR} for more details). Of course the matrix $w$ depends on $u$ but we suppress this in order to simplify the notation. 

\begin{proposition}[The LVE expansion, analyticity]\label{prop:LVE+bounds}
Let $N$ be a fixed \emph{complex} parameter and let us denote $g= |g|e^{\imath \varphi}$. The cumulants $W_n(g)$ can be written as:
\be \label{eq:W_BKARterm}
\begin{split}
 W_1(g) & =Z_1(g)= \int_{-\infty}^{+\infty}  [d\sigma]  \; e^{-\frac{1}{2} \sigma^2 } 
   \ln\Big[ 1 -\imath \sqrt{\tfrac{g}{3}}   \sigma \Big] \;, \crcr
W_n(g) &  = - \left( \frac{g}{3} \right)^{n-1} \sum_{ \cT \in T_n } 
 \int_{0}^1 \prod_{ (i,j) \in \cT } du_{ij} 
\int_{-\infty}^{+\infty} \frac{\prod_i   [d\sigma_i] }{\sqrt{ \det w^\cT }}  \; e^{ - \frac{1}{2} \sum_{i,j} \sigma_i ( w^{\cT})_{ij}^{-1}  \sigma_j } 
\;\; \prod_{i} \frac{(d_i-1)!}{ \left(   1 -\imath \sqrt{\tfrac{g}{3}}   \sigma_i \right)^{d_i} }  \;,
\end{split}
\ee
where we note that the Gaussian integral over $\sigma$ is well defined, as $w^\cT$ is positive, and normalized. Furthermore:
\begin{enumerate}
\item\label{propLVE:1} The functions $W_n(g), n\ge 2$ are bounded by:
        \begin{equation}
        |W_n(g)| \le  \frac{(2n-3)!}{(n-1)!}
        \left| \frac{g}{3 (\cos\frac{\varphi}{2} )^{2} } \right|^{n-1} \;.
        \end{equation}
        Therefore, they are analytic in the cut plane $\mathbb{C}_{\pi}$.
\item\label{propLVE:2} The series
        \begin{equation} \label{eq:W_BKAR}
        W(g,N)  = \sum_{n\ge 1} \frac{1}{n!} \left( -\frac{N}{2} \right)^{n} W_n(g) 
        \end{equation}
        is absolutely convergent in the following cardioid domain:
        \be \label{eq:cardioid}
        \mathbb{D}_0 = \left\{ g\in\mathbb{C},\; g=|g| e^{\imath \varphi} :\;  
        |g|<\frac{1}{|N|} \; \frac{3}{2}  (\cos\frac{\varphi}{2} )^{2}\right\}
        \;. 
        \ee
\item\label{propLVE:3} $W_n(g)$ can be analytically continued to a subdomain of the extended Riemann sheet $\mathbb{C}_{3\pi/2}$ by tilting the integration contours to $\sigma\in e^{-\imath \theta} \mathbb{R}$:
    \be\begin{split}\label{eq:Wncontint}
    W_{1\theta}(g) = &   e^{-\imath \theta}\int_{-\infty}^{+\infty} [d\sigma]  \; e^{-\frac{1}{2} e^{-2\imath \theta} \sigma^2 } 
    \ln\Big( 1 -\imath \sqrt{\tfrac{g}{3}}   e^{-\imath \theta }\sigma \Big) \;, \crcr
    W_{n\theta}(g) = &  - \left( \frac{g}{3} \right)^{n-1}  \sum_{ \cT \in T_n  } 
    \int_{0}^1  \prod_{ (i,j) \in \cT } du_{ij} \crcr
    & \qquad \qquad \int_{\mathbb{R}} \frac{\prod_i e^{-\imath \theta} [ d\sigma_i]  }{\sqrt{ \det w^\cT }}  \; e^{ - \frac{1}{2} e^{-2\imath \theta}\sum_{i,j} \sigma_i ( w^\cT)_{ij}^{-1}  \sigma_j } 
    \;\; \prod_{i} \frac{(d_i-1)!}{ \left(   1 -\imath \sqrt{\tfrac{g}{3}}   e^{-\imath \theta}\sigma_i \right)^{d_i} }  \;.
    \end{split}\ee
\item\label{propLVE:4} For $n\ge 2$ we have the following bound:
    \begin{equation}\label{eq:boundW}
    |W_{n\theta}(g)|  \le \frac{(2n-3)!}{(n-1)!}  \frac{1}{\sqrt{ \cos(2\theta)  }} \left| \frac{g}{3
    \sqrt{\cos(2\theta)} \left( \cos\frac{\varphi-2\theta}{2} \right)^2
    } \right|^{n-1}    \;.    
    \end{equation}
\item\label{propLVE:5} The series
     \begin{equation}
     W_{\theta}(g,N)  = \sum_{n\ge 1} \frac{1}{n!} \left( -\frac{N}{2} \right)^{n} W_{n\theta}(g)   \;,
     \end{equation}
    is absolutely convergent in the following domain:
    \be \label{eq:W-domain}
    \mathbb{D}_{\theta} = \left\{ g\in\mathbb{C},\; g=|g| e^{\imath \varphi} :\;  
    |g| < \frac{1}{|N|} \; \frac{3}{2} \;  \left(\cos\frac{\varphi-2 \theta}{2} \right)^{2}  \; \sqrt{ \cos(2\theta) } \right\} \;.
    \ee
\end{enumerate}

Consequently,  $W_n(g)$ and $W(g,N)$ can be analytically extended to the following respective domains:
\begin{align}
 W_{n}(g) : \qquad  &   |2\theta| < \frac{\pi}{2} \; , \;\; |\varphi - 2\theta| < \pi    \;, \crcr
  W(g,N) : \qquad  &   |2\theta| < \frac{\pi}{2} \; , \;\; |\varphi - 2\theta| < \pi      \;,
\qquad   |g| < \frac{1}{|N|} \; \frac{3}{2} \;  \left(\cos\frac{\varphi-2 \theta}{2} \right)^{2}  \; \sqrt{ \cos(2\theta) }  \; .
\end{align}

Pushing $\theta \to \pm \pi/4$ allows us to write a convergent expansion for all $|\varphi|< \frac{3\pi}{2}$.
\end{proposition}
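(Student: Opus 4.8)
The plan is to obtain the tree representation \eqref{eq:W_BKARterm} by a Loop Vertex Expansion, to control it by three elementary estimates, to sum the resulting series in $N$, and finally to rotate the integration contour in order to enlarge the analyticity domain. Since $W_n(g)$ is the $n$-th cumulant of the random variable $u(\sigma)=\ln(1-\imath\sqrt{g/3}\,\sigma)$ with respect to the normalized Gaussian, I would introduce $n$ replica fields $\sigma_1,\dots,\sigma_n$ with the interpolated covariance $C(x)_{ij}=x_{ij}$ for $i\neq j$ and $C_{ii}=1$, and apply the BKAR forest formula of \cite{Rivasseau:2007fr}, under which each $\partial_{x_{ij}}$ acts as $\partial_{\sigma_i}\partial_{\sigma_j}$ inside the expectation and $x_{ij}$ is replaced by the infimum $w^{\cT}_{ij}$ along the path in $\cT$; for the connected cumulant only spanning trees $\cT\in T_n$ survive. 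A vertex of coordination $d_i$ is then differentiated $d_i$ times, and with $u^{(d)}(\sigma)=-(d-1)!\,(\imath\sqrt{g/3})^{d}/(1-\imath\sqrt{g/3}\,\sigma)^{d}$ together with the tree identity $\sum_i d_i=2(n-1)$ — so that $\prod_i(\imath\sqrt{g/3})^{d_i}=(-g/3)^{n-1}$ and the $n$ minus signs combine with it into the overall factor $-(g/3)^{n-1}$ — one recovers exactly the vertex weights and the prefactor of \eqref{eq:W_BKARterm}, with the $W_1$ term being the bare first cumulant.

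The bounds of Properties~\ref{propLVE:1} and~\ref{propLVE:4} rest on three estimates. First, a pointwise lower bound: minimizing the real quadratic $|1-\imath\sqrt{g/3}\,\sigma|^2 = 1 + 2\sqrt{|g|/3}\,\sigma\sin\tfrac{\varphi}{2} + \tfrac{|g|}{3}\sigma^2$ over $\sigma\in\mathbb{R}$ gives $|1-\imath\sqrt{g/3}\,\sigma|\ge\cos\tfrac{\varphi}{2}$ for $|\varphi|<\pi$, and the substitution $\sigma\to e^{-\imath\theta}\sigma$ shifts $\varphi\to\varphi-2\theta$, yielding $\ge\cos\tfrac{\varphi-2\theta}{2}$; by $\sum_i d_i=2(n-1)$ the product of vertex denominators is then bounded by $(\cos\tfrac{\varphi-2\theta}{2})^{-2(n-1)}$. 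Second, a Gaussian estimate: because $w^{\cT}$ is positive and the measure normalized, the expectation of a bounded function is bounded by its supremum; in the tilted case one uses $|e^{-\frac12 e^{-2\imath\theta}\sigma^{\mathsf T}(w^{\cT})^{-1}\sigma}| = e^{-\frac12\cos(2\theta)\,\sigma^{\mathsf T}(w^{\cT})^{-1}\sigma}$, which demands $\cos(2\theta)>0$ and leaves a Gaussian integral producing $(\cos 2\theta)^{-n/2}$, split as $(\cos2\theta)^{-1/2}(\cos2\theta)^{-(n-1)/2}$ to match \eqref{eq:boundW}. Third, the combinatorial identity $\sum_{\cT\in T_n}\prod_i(d_i-1)! = (n-2)!\,\binom{2n-3}{n-1}=\tfrac{(2n-3)!}{(n-1)!}$, obtained from the degree-sequence count $(n-2)!/\prod_i(d_i-1)!$ and stars-and-bars for $\sum_i(d_i-1)=n-2$. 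Combining these with $\int_0^1\prod du=1$ gives the stated bounds.

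Summing in $N$ (Properties~\ref{propLVE:2} and~\ref{propLVE:5}) is then a ratio test on the general term $\tfrac{1}{n!}|N/2|^n\,\tfrac{(2n-3)!}{(n-1)!}\,x^{\,n-1}$ with $x=\big|g/(3\sqrt{\cos2\theta}\,(\cos\tfrac{\varphi-2\theta}{2})^2)\big|$: the ratio tends to $4\cdot|N/2|\,x=2|N|x$, so the series converges precisely on $\mathbb{D}_\theta$, and on the cardioid $\mathbb{D}_0$ at $\theta=0$. The continuation of Property~\ref{propLVE:3} is justified by Cauchy's theorem: the integrand is analytic in $\sigma$ away from the cut emanating from $\sigma_\star=-\imath\sqrt{3/g}$ and is Gaussian-damped once $\cos2\theta>0$, so the contour may be rotated to $e^{-\imath\theta}\mathbb{R}$ provided it does not sweep the branch point, which is exactly the positivity condition $|\varphi-2\theta|<\pi$. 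Analyticity in $g$ of each $W_{n\theta}$ follows from the uniform bounds by dominated convergence and Morera, and these same bounds make the $N$-series analytic on $\mathbb{D}_\theta$. Finally, letting $\theta$ range over $(-\tfrac{\pi}{4},\tfrac{\pi}{4})$ and taking the union of the conditions $|\varphi-2\theta|<\pi$ covers $|\varphi|<\tfrac{3\pi}{2}$; as $\theta\to\pm\tfrac{\pi}{4}$ the radius $\tfrac{3}{2|N|}(\cos\tfrac{\varphi-2\theta}{2})^2\sqrt{\cos2\theta}\to0$, so the continuation wraps around the branch point in a neighborhood of the origin that pinches as $\varphi\to\pm\tfrac{3\pi}{2}$.

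The main obstacle I anticipate is twofold. Conceptually, the decisive point is that, after summing over the trees and restoring the $\tfrac{1}{n!}$ of the small-$N$ expansion, the $n$-th coefficient grows only geometrically — the combination $\tfrac{1}{n!}\tfrac{(2n-3)!}{(n-1)!}$ behaves like $4^n$ up to a polynomial — rather than factorially as a naive cumulant (partition) expansion would give; this geometric growth is exactly what upgrades the mere Borel summability of the perturbative series to genuine convergence and analyticity of $W(g,N)$ in a full neighborhood of the origin. Technically, the delicate bookkeeping lies in the tilted estimate: one must track the complex-Gaussian normalization $(\cos2\theta)^{-n/2}$ together with the $\varphi\to\varphi-2\theta$ shift and the constraint $\sum_i d_i=2(n-1)$ so that all powers land exactly as in \eqref{eq:boundW}, and then verify that the continuations $W_{n\theta}$ for different admissible $\theta$ agree on overlapping domains, so that they glue into a single analytic function on the claimed subdomain of $\mathbb{C}_{3\pi/2}$.
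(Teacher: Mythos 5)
Your proposal is correct and follows essentially the same route as the paper's proof: the BKAR forest formula on replica fields reducing to spanning trees, the pointwise bound $|1-\imath\sqrt{g/3}\,e^{-\imath\theta}\sigma|\ge\cos\tfrac{\varphi-2\theta}{2}$ together with the $(\cos 2\theta)^{-n/2}$ Gaussian normalization, and the Cayley/stars-and-bars count $\sum_{\cT}\prod_i(d_i-1)!=\tfrac{(2n-3)!}{(n-1)!}$. The only cosmetic difference is that you use a ratio test where the paper invokes Stirling to identify the convergence domain; the two are equivalent.
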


\begin{proof}
See Appendix~\ref{app:LVE}
\end{proof}

\begin{figure}[!htb]
	\centering
	\includegraphics[width=0.3\linewidth]{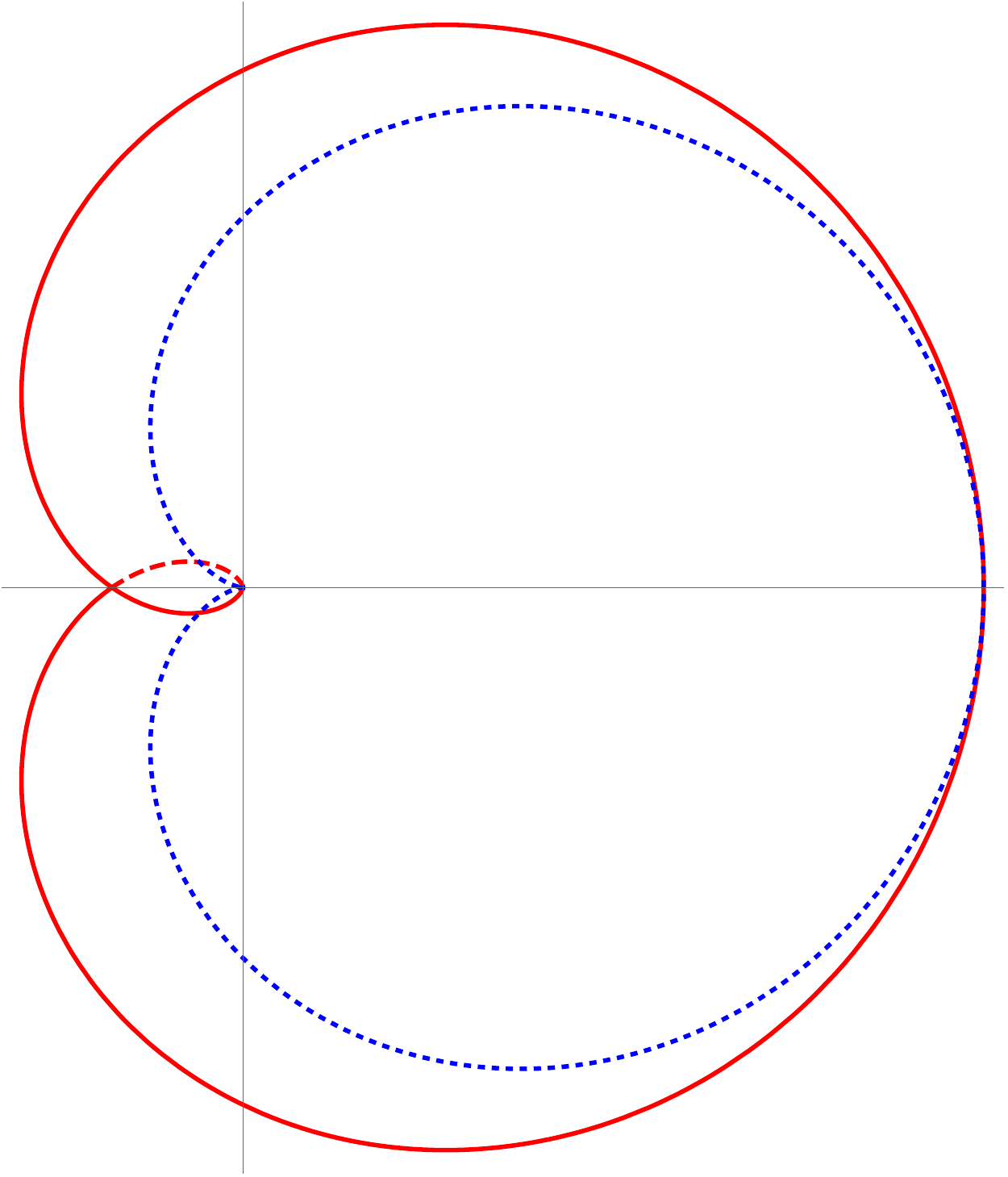} 
	\caption{The cardioid domain $\mathbb{D}_0$ of Eq.~\eqref{eq:cardioid} (dotted blue line) and the extended cardioid $\mathbb{D}_{\theta}$ of Eq.~\eqref{eq:W-domain} (red line), for $\th=\varphi/6$, in the complex $g$-plane. The branch cut is on the negative real axis, thus the portions of $\mathbb{D}_{\theta}$ going beyond it are to be understood as being on different Riemann sheets.}
	\label{fig:W-domain}
\end{figure}

The main point of the proposition is that by constructive methods we can prove analyticity of $W(g,N)$ in a nontrivial domain. In a first step, without touching the integration contours, we prove that such domain is the cardioid of Eq.\eqref{eq:cardioid}. However, the cardioid does not allow us to reach (and cross) the branch cut.
Tilting the integration contours by $\th$, we are able to extend the original cardioid domain to the larger domain of Eq.~\eqref{eq:W-domain} (see  Fig.~\ref{fig:W-domain}), going beyond the cut on a subdomain of the extended Riemann sheet $\mathbb{C}_{3\pi/2}$. The optimal domain can be found by maximizing the right-hand side of Eq.~\eqref{eq:W-domain} with respect to $\th$, at fixed $\varphi$, but a simpler and qualitatively similar choice is to take $\th=\varphi/6$.

Note that the domain of analyticity of $W(g,N)$, Eq.~\eqref{eq:W-domain}, depends on $N$ and shrinks to zero for $N\to \infty$. Results uniform in $N$ can only be established if one keeps the 't Hooft coupling $g_{t} = g N$ fixed \cite{Gurau:2014vwa}. On the other hand, for any $g$ on the extended Riemann sheet $\mathbb{C}_{3\pi/2}$, the radius of convergence of the LVE expansion in $N$ is nonzero.

\begin{remark} \upshape
It is also worth noticing that the explicit expressions for the partition function in terms of special functions, discussed around Eq.~\eqref{eq:Bessel}, provide us with some useful information about the  zeros of $Z(g,N)$ (Lee-Yang zeros), and hence about the singularities of $W(g,N)$.
For example, in the case $N=1$, the partition function is expressed in terms of a modified Bessel function of the second kind, whose zeros have been studied in some depth. In particular, from what is known about $K_{\nu}(z)$  (e.g.\ \cite{Watson-book}) we deduce that $Z(g,1)=\sqrt{\f{3}{2\pi g}} e^{\f{3}{4g}} K_{1/4}(\f{3}{4g})$  has no zeros in the principal sheet $\mathbb{C}_{\pi}$, while on each of the two following sheets it has an infinite sequence of zeros approaching the semiaxis at $|\varphi|=3\pi/2$ from the left, and accumulating towards $g=0$ (see Fig.~\ref{fig:LeeYang-zeors}).
Therefore, it should come as no surprise that $W(g,N)$ cannot be analytically continued around the origin beyond $|\varphi|=3\pi/2$.
\end{remark}

\begin{figure}[!htb]
	\centering
	\includegraphics[width=0.6\linewidth]{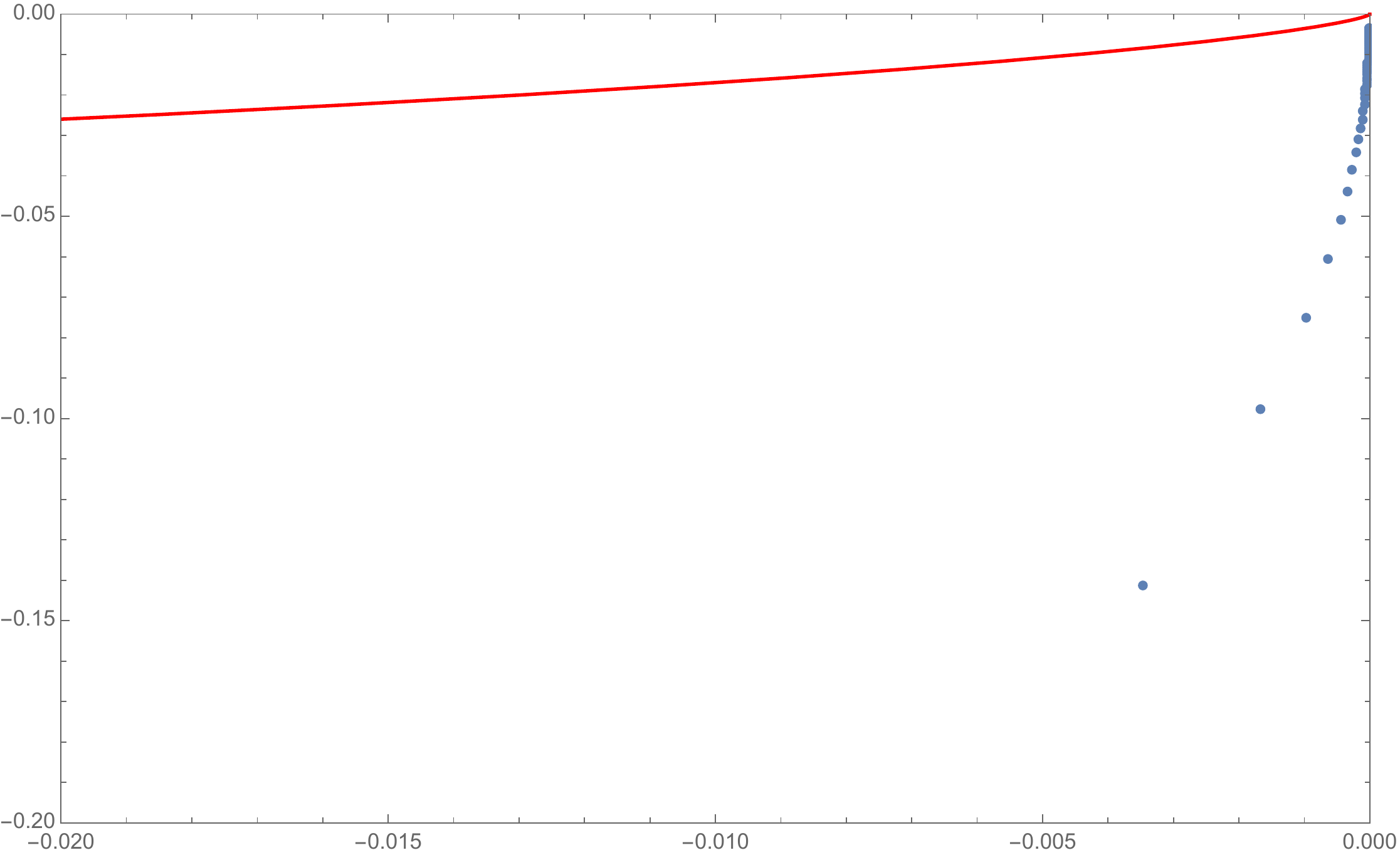}
	\caption{Approximate location (see \cite{Watson-book}) of the Lee-Yang zeros of $Z(g,1)$ (blue dots) in the quadrant $\pi<\varphi<3\pi/2$ of $\mathbb{C}_{3\pi/2}$, together with the boundary of the domain $\mathbb{D}_{\theta}$ (in red).}
	\label{fig:LeeYang-zeors}
\end{figure}

\begin{remark}\label{rem:Wnintegral} \upshape 
Integrating out the $u$ parameters and performing the sum over trees, one should be able to prove that integral expressions~\eqref{eq:Wncontint} reproduce the moment-cumulant relation in Eq.~\eqref{eq:Moeb}. In particular this would provide an alternative proof that the moment cumulant relation holds in the sense of analytic functions on the Riemann surface. The proof that this indeed happens is involved as the summation over trees requires the use of combinatoiral techniques similar to the ones discussed in Appendix~\ref{app:BKAR}. We postpone this for future work.
\end{remark}

\bigskip

In \cite{Rivasseau:2007fr} the LVE is used to prove the Borel summability of $W(g,1)$ along the positive real axis. Building on the techniques introduced in \cite{Rivasseau:2007fr}, we now generalize this result. 

\begin{proposition}[Borel summability of $W_n(g)$ and $W(g,N)$ in $\mathbb{C}_\pi$]\label{propW:BS}
The cumulants $W_n(g)$ and the free energy $W(g,N)$ at any fixed complex $N$ are Borel summable along all the directions in the cut complex plane $\mathbb{C}_\pi$.
\end{proposition}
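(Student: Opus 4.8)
The plan is to verify, along every ray $\arg(g)=\varphi_0\in(-\pi,\pi)$, the hypotheses of the extended Nevanlinna--Sokal theorem (Theorem~\ref{thm:Sokal}), first for the individual cumulants $W_n(g)$ and then for the resummed free energy $W(g,N)$. Recall that Borel summability along the direction $\varphi_0$ requires two ingredients: analyticity in a Sokal disk $\mathrm{Disk}_R$ whose boundary passes through the origin along $\varphi_0$, and a Taylor-remainder bound of the type \eqref{eq:NS-bound}, $|R_q(g)|\le K\,q!\,q^\beta\rho^{-q}|g|^q$, uniform on that disk. The analyticity ingredient is already supplied by Proposition~\ref{prop:LVE+bounds}: for $W_n$, the tilted representation \eqref{eq:Wncontint} together with Properties~\ref{propLVE:3}--\ref{propLVE:4} yields analyticity in the full sector $|\varphi|<3\pi/2$ with no restriction on $|g|$, so a Sokal disk along any $\varphi_0\in(-\pi,\pi)$ (whose angular extent reaches at most $\varphi_0\pm\pi/2\in(-3\pi/2,3\pi/2)$) fits for every $R$; for $W(g,N)$, the same disk fits inside the union of the tilted cardioids $\mathbb{D}_\theta$ of \eqref{eq:W-domain}, provided $R$ is chosen small enough, which is always possible for each fixed $\varphi_0$, although $R$ must be sent to zero as $\varphi_0\to\pm\pi$ since $\mathbb{D}_\theta$ pinches to the origin there (as $\sqrt{\cos 2\theta}\to0$ when $\theta\to\pm\pi/4$).

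The heart of the argument is the remainder bound, which I would extract directly from the loop-vertex representation \eqref{eq:W_BKARterm}, following the constructive strategy of \cite{Rivasseau:2007fr}. I would Taylor expand each resolvent factor $(1-\imath\sqrt{g/3}\,\sigma_i)^{-d_i}$ in $\sqrt{g}$ to the appropriate order, keeping an integral (Taylor--Lagrange) remainder. Two facts then conspire to produce \eqref{eq:NS-bound}. First, the uniform resolvent bound $|1-\imath\sqrt{g/3}\,\sigma|\ge\cos\tfrac{\varphi}{2}$ (the estimate underlying Property~\ref{propLVE:1}, replaced by its tilted analogue $|1-\imath\sqrt{g/3}\,e^{-\imath\theta}\sigma|\ge\cos\tfrac{\varphi-2\theta}{2}$ on a tilted contour) controls every denominator in the remainder uniformly in $\sigma$. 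Second, the surviving numerator carries a factor $(\sqrt{g}\,\sigma)^{2q}\sim g^q\sigma^{2q}$ --- only even powers of $\sqrt g$ survive the simultaneous symmetry $\sigma_i\to-\sigma_i$ of the Gaussian measure, so the expansion is genuinely in integer powers of $g$ --- and the Gaussian moment $\int\sigma^{2q}e^{-\sigma^2/2}[d\sigma]=(2q-1)!!$ supplies exactly the factorial $q!$ (up to a harmless $2^q$) demanded by \eqref{eq:NS-bound}. Collecting the explicit prefactor $(g/3)^{n-1}$, the tree combinatorics, and these estimates gives a bound $|R_q^{(n)}(g)|\le A_n\,q!\,\rho^{-q}|g|^q$, with $\rho$ set by $\tfrac{3}{2}(\cos\tfrac{\varphi-2\theta}{2})^2$ and $A_n$ the same combinatorial weight appearing in Property~\ref{propLVE:1}.

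For $W(g,N)$ I would then sum these per-$n$ estimates against the weights $\frac{1}{n!}(-N/2)^n$. The $n$-dependence of $A_n$ is controlled exactly as in the proof of Properties~\ref{propLVE:1}--\ref{propLVE:2}: the tree count $n^{n-2}$ and the degree factorials are absorbed so that $A_n$ grows no faster than $\frac{(2n-3)!}{(n-1)!}(\text{const})^{n}$, and restricting $g$ to the Sokal disk inside $\mathbb{D}_\theta$ is precisely what makes $\sum_n\frac{|N/2|^n}{n!}A_n|g|^{n-1}$ converge, yielding a single bound $|R_q(g)|\le K\,q!\,\rho^{-q}|g|^q$ for the full free energy. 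Theorem~\ref{thm:Sokal} then delivers Borel summability along $\varphi_0$, and letting $\varphi_0$ range over $(-\pi,\pi)$ covers all of $\mathbb{C}_\pi$.

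I expect the main obstacle to be the simultaneous bookkeeping of the two independent orders --- the perturbative order $q$ and the expansion order $n$ in $N$: one must extract the factorial $q!$ from the Gaussian moments while keeping the tree-combinatorial constant $A_n$ summable against $1/n!$ on the chosen $g$-disk, and verify that a single pair $(\rho,R)$ (equivalently a single exponent $\beta$) works uniformly there. A secondary subtlety is the degeneration near the cut: as $\varphi_0\to\pm\pi$ both the resolvent bound $\cos\tfrac{\varphi-2\theta}{2}$ and the factor $\sqrt{\cos 2\theta}$ deteriorate, forcing $R\to0$ and $\rho\to0$; one must check that for each fixed $\varphi_0\in(-\pi,\pi)$ an admissible tilt $\theta$ (for instance $\theta=\varphi_0/6$) still yields finite constants, so that Borel summability holds along every interior ray even though it fails precisely at $\varphi_0=\pm\pi$, where $\mathbb{R}_-$ is a Stokes line.
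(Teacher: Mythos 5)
Your proposal is correct in substance and follows the same overall skeleton as the paper's proof: the tilted LVE representation of Proposition~\ref{prop:LVE+bounds}, the uniform resolvent bound $|1-\imath\sqrt{g/3}\,e^{-\imath\theta}\sigma|\ge\cos\tfrac{\varphi-2\theta}{2}$, verification of the hypotheses of Theorem~\ref{thm:Sokal} direction by direction, and the observation that the constants degenerate only as $\varphi_0\to\pm\pi$. Where you diverge is in how the order-$q$ Taylor remainder is organized. You propose a joint Taylor--Lagrange expansion of the resolvents in $\sqrt{g}\,\sigma$ (note it must be a single expansion in a common scaling parameter multiplying all $\sqrt{g}\,\sigma_i$, not an independent expansion of ``each resolvent factor'', otherwise the total order in $g$ is not controlled), extracting $q!$ from the Gaussian moments --- this is exactly the mechanism the paper uses for $Z(g,N)$ in Property~\ref{propZ:3}. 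The paper's proof of Proposition~\ref{propW:BS} instead splits the remainder as $R^q_\theta=P^q_\theta+Q^q_\theta$: the trees with at least $q$ edges land entirely in the remainder because of the explicit $g^{n-1}$ prefactor and, crucially, contribute \emph{without any} $q!$ growth, while only the small trees are further expanded (by Taylor expansion of the Gaussian measure, generating loop edges whose corner-counting produces $(2q-1)!/(2n-3)!$ and hence the factorial). The paper's route makes transparent that the factorial divergence originates solely from the graph counting of the loop edges and yields slightly sharper constants (the footnote to Property~\ref{propZ:3} notes that the Gaussian-measure expansion removes a spurious $1/(\cos 2\theta)^q$); your route is more uniform with the treatment of $Z(g,N)$ and avoids the large-tree/small-tree case split, at the price of a cruder per-$n$ bound carrying $q!$ even for $n>q$, which is still summable against $1/n!$ on the Sokal disk inside $\mathbb{D}_\theta$ and therefore suffices.
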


\begin{proof}
See Appendix~\ref{app:BSW}.
\end{proof}

%%%%%%%%%%%%%%%%%%%%%%%%%%%%%%%%%%%%%%%%%%%%%%%%%%%%%%%%%%%%
\subsection{Transseries expansion}
\label{sec:AsymptoticSeries}
%%%%%%%%%%%%%%%%%%%%%%%%%%%%%%%%%%%%%%%%%%%%%%%%%%%%%%%%%%%%

It is well known that the (perturbative) asymptotic series of $W(g,N)$ at $g=0$ is a sum over connected Feynam graphs. The connection between the LVE expansion of $W(g,N)$ presented in 
Proposition~\ref{prop:LVE+bounds} and the Feynman graphs is discussed in Appendix~\ref{app:FeynGr}. 
On the other hand, the power series in each multi-instanton sector of the transseries of $W(g,N)$ has no simple diagrammatic interpretation; they can be constructed from the nonlinear differential equation obeyed by $W(g,N)$, or more formally by expanding the logarithm of the transseries expansion of $Z(g,N)$ in powers of the transseries monomial $\exp\{3/(2g)\}$ (e.g.\ \cite{Aniceto:2018bis}). The latter is however only a meaningful operation in the sense of formal power series.

In this section we take a different route and derive rigorously the transseries expansion of $W(g,N)$ by exploiting the analytical control we have on the small-$N$ expansion. We first notice that, from Propositions \ref{prop:Zn} and \ref{prop:LVE+bounds}, $W_n(g)$ and $Z_n(g)$ are analytic functions on the extended Riemann sheet $\mathbb{C}_{3\pi/2}$. Next, we use Eq.~\eqref{eq:Moeb} to construct $W_n(g)$ as a finite linear combination of finite products of $Z_i(g)$'s. Each such product is in fact a (factored) multidimensional integral, hence we can apply to it the steepest descent method to obtain its asymptotic expansion. In $\mathbb{C}_\pi$, the asymptotic expansion of each factor $Z_i(g)$ is of the perturbative type, Eq.~\eqref{eq:Znpertseries}, and $W_n(g)$ is just a finite linear combination of Cauchy products of such series.

When turning $g$ past the negative real axis, each integration contour in this multidimensional integral must be deformed past a cut and each $Z_{i\pm}(g) = Z_i^{\mathbb{R}}(g) + Z^C_{i\pm}(g)$ (see the disucussion below Proposition \ref{prop:Zn}).
It follows that $W_i(g)$ is a linear combination of products involving $Z_i^{\mathbb{R}}(g)$'s and $Z^C_{i\pm}(g)$, and this representation holds in the sense of analytic functions on the Riemann surface. In order to obtain the transseries of $W_n(g)$, one needs to build the transseries expansion of each of the terms in the linear combinations. As the multidimensional integrals are factored, this is just the Cauchy product of the transseries $Z_i^{\rm pert.}(g)$ and $e^{\frac{3}{2g}}Z_i^{(\eta)}(g)$ corresponding to $Z_i^{\mathbb{R}}(g)$ and  $Z^C_{i\pm}(g)$, respectively.

The summation over $n$ is more delicate, as it is an infinite series.
As we have seen in Proposition~\ref{prop:LVE+bounds}, the small-$N$ series of $W(g,N)$ converges in the domain $\mathbb{D}_0$ of Eq.~\eqref{eq:W-domain}, thus yielding $W(g,N)$ in terms of $W_n(g)$ as an analytic function on such domain. Therefore, we can apply the steepest descent method term by term to the small-$N$ series, and hence
the transseries of $W(g,N)$ is rigorously reconstructed  by substituting the transseries for $W_n(g)$ in Eq.~\eqref{eq:ZW}. 

 Unsurprisingly, at the end we recover the formal transseries of $W(g,N)$ which can be obtained by direct substitution of the transseries expansion of $Z(g,N)$,  taking formally its logarithm, and then expanding in powers of $Z_i^{(\eta)}(g)$ and $Z_i^{\rm pert.}(g)-1$. What we gained in the process is that we replaced a formal manipulation on transseries with a rigorous manipulation on analytic functions.

\begin{proposition}\label{prop:TSW}
The cumulant $W_n(g)$ and the full free energy $W(g,N)$ have transseries expansions, that can be organized into instanton sectors. The instanton counting parameter is denoted by $p$.
\begin{enumerate}
\item\label{propTSW:1} For $g\in\mathbb{C}_{3\pi/2}$,, the cumulant $W_n(g)$ has the transseries expansion:
\begin{equation} \label{eq:Wn-trans}
    W_n(g)= \sum_{p=0}^n e^{\frac{3}{2g}p}
    \;\Big(\eta\sqrt{2\pi} \sqrt{\f{g}{3}}\Big)^{p}\; \sum_{l'= 0}^{n-p} \left(\ln\left(\tfrac{g}{3}\right)\right)^{l'} \sum_{l\geq 0} g^l\, 
    W^{(p)}_{n;l,l'} \,,
\end{equation}
where $\mathbb{R}_-$ is a Stokes line, $\tau = -\sgn(\varphi)$ and $\eta$ is a transseries parameter which is zero on the principal Riemann sheet and is one when $|\varphi|>\pi$.
The $g$-independent coefficient $W^{(p)}_{n;l,l'}$ is given by the following nested sum:
\begin{equation}\label{eq:Prop5coeff}\begin{aligned}
W^{(p)}_{n;l,l'} = &
\sum_{\substack{k = p \\ k+p\geq 1}}^{n} (-1)^{k-1} (k-1)! \sum_{ \substack{n_1, \ldots, n_{n-k+1} \ge 0 \\ \sum in_i = n ,\, \sum n_i = k }} \ \sum_{\substack{\{0\leq p_i\leq n_i\} \\ _{i=1,\dots,n-k+1} \\ \sum p_i=p}}
\frac{n!}{\prod_i (n_i-p_i)!p_i! (i!)^{n_i}}
\\ & \quad
\sum_{\substack{\{a^i_j\geq 0\}^{i=1,\dots,n-k+1}_{j=1,\dots,n_i} \\ \sum_i\sum_j a^i_j=l}} \ \sum_{\substack{\{0\leq c^i_j\leq i-1\}^{i=1,\dots,n-k+1}_{j=1,\dots,p_i} \\ \sum_i\sum_j c^i_j=l'}}
\left(\f{1}{6}\right)^{\sum_{i=1}^{n-k+1}\sum_{j=1}^{p_i}a^i_j} \left(-\f{2}{3}\right)^{\sum_{i=1}^{n-k+1}\sum_{j=p_i+1}^{n_i}a^i_j} 
\\ & \qquad\qquad\qquad\qquad\qquad\qquad
\prod_{i=1}^{n-k+1} \Bigg(\prod_{j=1}^{p_i} G(a^i_j,c^i_j;i)\Bigg)\Bigg(\prod_{j=p_i+1}^{n_i} G(a^i_j;i)\Bigg)
\; ,
\end{aligned}\end{equation}
with
\begin{equation}
\begin{aligned}
    & G(a;i) = \f{(2a)!}{2^{2a} a!} \sum_{\substack{a_1,\ldots,a_{2a-i+1} \ge 0  \\ \sum k  a_k=2a , \; \sum a_k=i}} \f{(-1)^i i!}{\prod_k k^{a_k}a_k!} \; , \\
    & G(a,c;i) = \sum_{b=0}^{i-1} \left(\imath\tau 2\pi \right)^{i-1-b-c} \f{i!}{a!\,b!\,c!\,(i-b-c)!} \f{d^b \G(z)}{dz^b}\Big|_{z=2a+1} \;.
\end{aligned}
\end{equation}

\item\label{propTSW:2}  For $g\in\mathbb{D}_{\theta}$, the full free energy $W(g,N)$ has the transseries expansion:
\begin{equation}
\begin{split}
    W(g,N) & = \sum_{n\ge 1} \frac{1}{n!} \left( -\frac{N}{2}\right)^n W_n(g) \crcr 
    & = \sum_{p\ge 0} e^{\frac{3}{2g}p} \; \Big(\eta\sqrt{2\pi}e^{\imath\tau\f{\pi}{2}} \left(\frac{e^{\imath\tau\pi} g}{3}\right)^{\frac{1-N}{2}}\Big)^{p} \; \sum_{l\ge 0} \left(- \frac{2g}{3}\right)^{l} W^{(p)}_{l}(N) \; ,
\end{split}
\end{equation}
where
\begin{equation}
\begin{split}
    W^{(p)}_{l}(N) =& \sum_{\substack{q\geq 0\\ p+q\geq 1}} (-1)^{p+q-1} \frac{(p+q-1)!}{p!q!} \\ & \qquad
\sum_{\substack{n_1,\dots,n_q\geq 1 \\ m_1,\dots,m_p\geq 0 \\ \sum n_i+\sum m_j=l }}
\Bigg( \prod_{i=1}^{q} \frac{\Gamma(2n_i+N/2) }{2^{2n_i}n_i! \, \Gamma(N/2) }\Bigg)
\Bigg( \prod_{j=1}^{p} \frac{(-1)^{m_j}}{ 2^{2m_j} m_j! \; \Gamma(\frac{N}{2} -2m_j)}\Bigg) \;.
\end{split}
\end{equation}
\end{enumerate}
\end{proposition}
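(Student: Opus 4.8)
The plan is to build the transseries of $W_n(g)$ from the Möbius inversion formula \eqref{eq:Moeb}, which writes $W_n(g)$ as a finite linear combination of finite products of the $Z_i(g)$, and then to resum over $n$, the infinite summation being controlled by the convergence of the small-$N$ series of Proposition~\ref{prop:LVE+bounds}. Thus Part~\ref{propTSW:1} is established first and feeds directly into Part~\ref{propTSW:2}.

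For Part~\ref{propTSW:1} I would work on the extended sheet $\mathbb{C}_{3\pi/2}$, where by the discussion following Proposition~\ref{prop:Zn} each factor splits as an identity of analytic functions, $Z_{i\pm}(g)=Z_i^{\mathbb{R}}(g)+Z^C_{i\pm}(g)$, the two pieces carrying respectively the perturbative asymptotics $Z_i^{\rm pert.}(g)$ of \eqref{eq:Znpertseries} and the one-instanton asymptotics $e^{3/(2g)}Z_i^{(\eta)}(g)$ of Property~\ref{propZn:5}. Since every product in \eqref{eq:Moeb} is a factored multidimensional integral, its steepest-descent expansion is the Cauchy product of the expansions of its factors. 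I would then expand each power $(Z_i^{\mathbb{R}}+Z^C_{i\pm})^{n_i}$ binomially, selecting $p_i$ of the $n_i$ factors to be of instanton type; this produces the weight $n!/\prod_i (n_i-p_i)!\,p_i!\,(i!)^{n_i}$ and an overall $e^{3p/(2g)}$ with $p=\sum_i p_i$ the total instanton number. After factoring the transseries monomial $(\eta\sqrt{2\pi}\sqrt{g/3})^{p}$ out of the $p$ instanton factors — each of which carries a $\sqrt{g/3}$ together with a numerical constant $\imath/(2\pi)$ that I absorb into the coefficients — and grading the remaining Cauchy product by the power of $g$ (index $l$) and of $\ln(g/3)$ (index $l'$), one should recover the nested sum \eqref{eq:Prop5coeff}. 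The building blocks are transparent: $G(a;i)$ is the coefficient of $(-2g/3)^a$ in $Z_i^{\rm pert.}(g)$, while $G(a,c;i)$ is the graded coefficient of $Z_i^{(\eta)}(g)$. In particular the difference of the two logarithm powers in Property~\ref{propZn:5} collapses because $\ln(e^{\imath\tau\pi}g/3)\mp\imath\pi$ equals either $\ln(g/3)$ or $\ln(g/3)\pm 2\imath\pi$ according to $\tau=-\sgn(\varphi)$, so that only one branch survives; separating the genuine $\ln(g/3)$-power (counted by $c$) from the accumulated constant phases yields precisely the factor $(\imath\tau 2\pi)^{i-1-b-c}$, the Taylor coefficients $d^b\Gamma/dz^b|_{z=2a+1}$, and the multinomial $i!/(a!\,b!\,c!\,(i-b-c)!)$.

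For Part~\ref{propTSW:2} I would invoke Proposition~\ref{prop:LVE+bounds}: the small-$N$ series \eqref{eq:W_BKAR} converges absolutely on $\mathbb{D}_\theta$, so $W(g,N)$ is analytic there and its steepest-descent expansion may be carried out term by term in $n$. Substituting the transseries of $W_n(g)$ from Part~\ref{propTSW:1} into \eqref{eq:ZW} and resumming, the combinatorial weights reorganize into the expansion of the logarithm of the transseries of $Z(g,N)$. Concretely, writing the transseries of $Z$ from \eqref{eq:fullZ} as $Z^{\rm pert.}$ plus its one-instanton partner and expanding $\ln(1+x)=\sum_{K\ge1}\frac{(-1)^{K-1}}{K}x^K$, the monomial selecting $q$ perturbative and $p$ instanton factors carries weight $\frac{1}{p+q}\binom{p+q}{p}(-1)^{p+q-1}=(-1)^{p+q-1}\frac{(p+q-1)!}{p!\,q!}$, exactly the prefactor in $W^{(p)}_l(N)$; the products of $\Gamma(2n_i+N/2)/(2^{2n_i}n_i!\,\Gamma(N/2))$ and of $(-1)^{m_j}/(2^{2m_j}m_j!\,\Gamma(N/2-2m_j))$ are the perturbative and instanton coefficients of $Z$ read off from \eqref{eq:Zpertseries} and \eqref{eq:fullZ}. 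This simultaneously delivers the stated formula and cross-checks that the rigorously resummed series coincides with the formal logarithm of the $Z$-transseries.

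The main obstacle I expect is bookkeeping rather than analysis. The delicate analytic point — expanding term by term in $n$ — is handled cleanly by the absolute convergence on $\mathbb{D}_\theta$ of Proposition~\ref{prop:LVE+bounds}, and the fact that a product of asymptotic expansions is asymptotic to the Cauchy product is standard. What requires genuine care is the combinatorial collapse in Part~\ref{propTSW:1}: tracking the binomial split into perturbative and instanton factors uniformly over all partitions in \eqref{eq:Moeb}, isolating the $\ln(g/3)$-powers from the accumulated constant phases, and verifying that the triple grading by $(l,l',p)$ reassembles into precisely \eqref{eq:Prop5coeff}. Keeping the phase conventions consistent with $\tau=-\sgn(\varphi)$ throughout — so that the $\imath\tau\pi$ shifts in $Z_i^{(\eta)}$ and the absorbed $\imath/(2\pi)$ conspire to give the single power $(\imath\tau 2\pi)^{i-1-b-c}$ — is the point most likely to hide sign and factor-of-two errors.
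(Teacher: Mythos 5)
Your proposal follows essentially the same route as the paper's proof: Möbius inversion combined with the transseries split $Z_{i\pm}=Z_i^{\mathbb{R}}+Z^C_{i\pm}$, a binomial selection of $p_i$ instanton factors per block producing the multinomial weight and the monomial $(\eta\sqrt{2\pi}\sqrt{g/3})^p$, the extraction of the $(\imath\tau 2\pi)^{i-1-b-c}$ phases from the difference of shifted logarithms, and for Part~\ref{propTSW:2} the term-by-term resummation justified by the absolute convergence on $\mathbb{D}_\theta$, which reorganizes into the formal logarithm of the $Z$-transseries with weight $(-1)^{p+q-1}(p+q-1)!/(p!\,q!)$. The only imprecision is the phrase that ``only one branch survives'' in the logarithm difference — in fact both branches survive and it is their \emph{difference}, expanded binomially, that cancels the top power and restricts $c\le i-1-b$ — but your stated end result shows you are using the correct mechanism.
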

%%%%
\begin{proof} See Appendix~\ref{app:Trans}
\end{proof}

While the expressions in Proposition \ref{prop:TSW} are not the most amenable to computations, one feature is striking. Expanding the cumulant $W_n(g)$ into $p$ instanton sectors, we observe that only the first $n$ instantons contribute to $W_n$, that is the sum in Eq.~\eqref{eq:Wn-trans} truncates to $p=n$. The $n$ instanton contribution to $W_n$ comes from $n=p=k$ in Eq.~\eqref{eq:Prop5coeff} which implies $n_1=n$ and all the others $0$, hence:\footnote{This formula is most easily derived from Eq.~\eqref{eq:Prop5proof1}.}
\begin{equation}
    W_n^{(n)}(g) \simeq e^{\f{3}{2g} n } \left(\eta\sqrt{2\pi} \sqrt{\f{g}{3}}\right)^n (-1)^{n-1} (n-1)! \left( \sum_{q=0}^{\infty}  \f{(2q)!}{q!} \left(\f{g}{6}\right)^q \right)^n \;.
\end{equation}

This is genuinely new phenomenon. Usually, for quantities that are interesting for physics, one either deals with functions having only one instanton, like $Z(g,N)$ (or $Z_n(g)$) or with function receiving contributions from all the instanton sectors, like $W(g,N)$. This is, to our knowledge, the first instance when some physically relevant quantity receiving contributions from a finite number of instantons strictly larger than one is encountered.

\subsection{Differential equations}

The exotic behaviour of $W_n(g)$ can also be understood in terms of differential equations. By rewriting the partition function as $Z(g,N)=e^{W(g,N)}$ it is straightforward to turn (\ref{PDEpartitionfunctionN}) into a differential equation for $W(g,N)$, which in turn implies a  tower of equations for $W_n(g)$.

\begin{proposition}\label{prop:DEW}
The function $W(g,N)$ obeys the non linear differential equation:
\begin{equation}
    16 g^2 W^{\prime \prime}(g,N) + 16g^2\left(W^{\prime}(g,N)\right)^2+ \left((8N+24)g+24 \right)W^\prime(g,N)+N(N+2)=0 \; .
\end{equation}

The functions $W_n(g)$ obey the tower of differential equations:
\begin{equation}\label{eq:TowerDiffEqWn}
    \begin{split}
        & 4 g^2 W_1^{\prime\prime}(g)+6(g+1)W_1^\prime(g)-1=0, \\
        & 4 g^2 W_2^{\prime\prime}(g)+6(g+1)W_2^\prime(g)+8g^2\left(W_1^\prime(g)\right)^2 -8g W_1^\prime(g)+2=0, \\
        & 4 g^2 W_n^{\prime\prime}(g)+6(g+1)W_n^\prime(g)+4 g^2 \sum_{k=1}^{n-1}\binom{n}{k}W_{n-k}^\prime W_{k}^{\prime}-4n g  W_{n-1}^\prime(g)=0 \;.
    \end{split}
\end{equation}

\end{proposition}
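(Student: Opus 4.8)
The plan is to derive the differential equation for $W(g,N)$ directly from the homogeneous linear ODE \eqref{PDEpartitionfunctionN} for $Z(g,N)$ by the substitution $Z=e^{W}$, and then to extract the tower \eqref{eq:TowerDiffEqWn} for $W_n(g)$ by expanding in powers of $N$. For the first part, I would compute $Z'=W' e^{W}$ and $Z''=(W''+(W')^2)e^{W}$, insert these into \eqref{PDEpartitionfunctionN}, and cancel the overall factor $e^{W}\neq 0$. This yields
\begin{equation}
16g^2\big(W''+(W')^2\big)+\big((8N+24)g+24\big)W'+N(N+2)=0\,,
\end{equation}
which is exactly the stated nonlinear equation. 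This step is purely mechanical.

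For the tower, the approach is to substitute the small-$N$ expansion $W(g,N)=\sum_{n\ge 1}\frac{1}{n!}\left(-\frac{N}{2}\right)^n W_n(g)$ into the nonlinear equation and match powers of $N$. Since this series converges on $\mathbb{D}_\theta$ (Proposition~\ref{prop:LVE+bounds}), the manipulation is legitimate as an identity of analytic functions, and termwise differentiation in $g$ is justified. I would handle the linear terms $16g^2 W''$ and $\big((8N+24)g+24\big)W'$ by reading off coefficients directly, being careful that the explicit factor $(8N+24)g$ couples the coefficient of $N^{n}$ in $W'$ to both $W_n'$ and $W_{n-1}'$. The constant source $N(N+2)$ contributes only at orders $N^1$ and $N^2$, producing the inhomogeneous terms $-1$ (rescaled) in the $W_1$ equation and the constant $+2$ in the $W_2$ equation.

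The one genuinely nontrivial ingredient is the quadratic term $16g^2(W')^2$. Here I would use the Cauchy product
\begin{equation}
\big(W'(g,N)\big)^2=\sum_{m\ge 2}\left(-\frac{N}{2}\right)^m \sum_{k=1}^{m-1}\frac{1}{k!(m-k)!}W_k'(g)W_{m-k}'(g)\,,
\end{equation}
and then compare the coefficient of $\left(-\frac{N}{2}\right)^n$. Multiplying through by $n!$ and using $\frac{n!}{k!(n-k)!}=\binom{n}{k}$ converts the convolution into the binomial sum $4g^2\sum_{k=1}^{n-1}\binom{n}{k}W_{n-k}'W_k'$ appearing in the stated tower (the factor $16\to4$ and the sign arising from the normalization $\left(-\tfrac{N}{2}\right)^n$ and the $2^{-n}$ bookkeeping). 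The main obstacle—really the only place where care is required—is tracking these combinatorial prefactors consistently: verifying that the $\left(-\tfrac12\right)^n$ factors, the $n!$ from the cumulant normalization, and the $\binom{n}{k}$ from the Cauchy product combine to reproduce exactly the coefficients $4g^2$, $6(g+1)$, $-4ng$ displayed in \eqref{eq:TowerDiffEqWn}, and checking the low-order cases $n=1,2$ against the explicit equations (in particular that the $n=2$ equation picks up both the quadratic term $8g^2(W_1')^2$ and the cross term $-8gW_1'$ from the $8Ng\,W'$ piece). Once the bookkeeping is pinned down at $n=1,2$, the general-$n$ equation follows by the same matching with no further difficulty.
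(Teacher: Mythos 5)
Your proposal is correct and follows exactly the route the paper indicates (the paper offers no separate appendix proof, only the remark that substituting $Z=e^{W}$ into Eq.~\eqref{PDEpartitionfunctionN} and then expanding in $N$ is straightforward): the substitution, the Cauchy product for $(W')^2$, and the coefficient matching all check out, reproducing the $n=1,2$ equations and the general tower with the stated prefactors.
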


The differential equation for $W_{1}(g)$ is, unsurprisingly, identical to the one for $Z_1(g)$ (the connected $1-$point function equals to the full $1-$point function). Note that although the differential equation for $W(g,N)$ is non linear, the one for $W_n(g)$ is linear. In fact, since $W_0(g)=0$, the non linear term $\left(W^\prime(g,N)\right)^2$ produces only source terms in (\ref{eq:TowerDiffEqWn}). The linearity of the equations provides another point of view on why only a finite number of instantons arise in each $W_n(g)$.

\section*{Acknowledgements}
R. G., H. K. and D. L. have been supported by the European Research
Council (ERC) under the European Union's Horizon 2020 research and innovation program (grant agreement No818066) and by the Deutsche Forschungsgemeinschaft (DFG, German Research Foundation) under Germany's Excellence Strategy EXC--2181/1 -- 390900948 (the Heidelberg STRUCTURES Cluster of Excellence). We thank Ricardo Schiappa for comments on an early version of this work.

\newpage

\appendix

%%%%%%%%%%%%%%%%%%%%%%%%%%%%%%%%%%%%%
\section{Asymptotic expansions}
\label{app:expansions}
%%%%%%%%%%%%%%%%%%%%%%%%%%%%%%%%%%%%%

Analytic continuations of exponential integrals like $Z(g,N)$ can be carried out by deforming the original real integration cycles in the complex plane. Complex Morse theory (Picard-Lefschetz-Theory) provides a systematic framework for decomposing the original integration cycle $\cC$ into a sum of more convenient cycles $\cJ_i$ called Lefschetz thimbles:
\be \label{eq:thimbles}
I(g)=\int_{\cC}dx\; e^{ \frac{1}{g} f(x)} a(x) = \sum_{i} \int_{\cJ_i}dx\; e^{\frac{1}{g} f(x)} a(x)\; .
\ee 
Generically, each thimble intersects one critical (or saddle) point $x^\ast_i\in\mathbb{C}^m$ and consists of the union of downward flows with respect to the real part of $f(x)$ originating at $x^\ast_i$. From a topological point of view $\{\cJ_i\}$ generate the $m$'th relative homology group of the underlying $2m$-dimensional space. Crucially, the imaginary part of $f(x)$ is constant along each $\cJ_i$ and the integral along a thimble is absolutely convergent. Since the individual integrals are non-oscillating, it is possible to apply Laplace's method to each term, expanding the integrand around the critical points:
\[
I(g)=\sum_{i} e^{\frac{1}{g} f(x^\ast_i)} \Phi^{(i)}(g) \; ,
\] 
where $\Phi^{(i)}(g)$ is an asymptotic series, possibly containing logarithms and non-integer powers of $g$.
As $g$ is varied, the thimbles are deformed and the number of thimbles appearing in the decomposition of the original contour $\cC$ may vary discontinuously. These discrete changes, happening at values of $g$ for which different thimbles intersect each other, are connected to the so called Stokes jumps.

Lefschetz thimble techniques are standard tool in resurgence analysis \cite{Sternin1997,Dunne:2015eaa,Aniceto:2018bis} and have many other applications to path integrals \cite{Witten:2010zr,Fujii:2013sra,Aarts:2014nxa,Bluecher:2018sgj,Tanizaki:2014tua,Tanizaki:2017yow}. More details and a nice pedagogical introduction can be found in \cite{Witten:2010cx}.

In the following we review the derivation of the asymptotic expansions around the critical points of the zero-dimensional $\phi^4$ theory for $N=1$.

%%%%%%%%%%%%%%%%%%%%%%%
\subsection{The $\phi$ representation of the partition function}
%%%%%%%%%%%%%%%%%%%%%%%
\paragraph{The vacuum expansion.}
Our starting point is the partition function of the model in the $\phi$ representation. We set $N=1$ and consider:
\begin{equation} \label{eq:Z_Phi4N1}
Z(g)=\int_{-\infty}^{+\infty} [d\phi]  \; e^{-S[\phi] } \;, \qquad S[\phi]=\frac{1}{2}\phi^2+\frac{g}{4!} \phi^4 \; ,
\end{equation}
where again we set $[d\phi]=d\phi/\sqrt{2\pi}$.
There are three solutions of the equations of motions, i.e.~critical points $S'[\phi_\star]=0$, namley $\phi_0=0$ and $\phi_{\pm}=\pm i \sqrt{\frac{6}{g}}$, and each of them has an attached thimble, $\cJ_0$ and $\cJ_\pm$ (see Fig.~\ref{fig:phi-thimble}). For symmetry reasons it is clear that $Z_{\mathcal{J}_+}(g)=Z_{\mathcal{J}_-}(g)$, thus there are only two asymptotic expansions.

Eq.~\eqref{eq:Z_Phi4N1} is absolutely convergent if $g$ is in the right half complex plane $\mathrm{Re}(g)>0$, hence in this domain it defines an analytic function $Z(g)$. In order to analytically continue this function we turn in the complex plane and parameterize $g=|g|e^{i \varphi}$ but we tilt the contour of integration by $e^{-i \theta}$ to compensate. In detail, we define:
\begin{equation} \label{eq:Zphi-tilt}
Z_\theta(g)=\int_{ \mathbb{R} e^{-i \theta} } [d\phi] \  e^{-\frac{1}{2}\phi^2-\frac{|g| e^{i \varphi}}{4!}\phi^4}=e^{-i\theta}\int_{ \mathbb{R} }[d\phi] \  e^{-\frac{1}{2}\phi^2 e^{-2 i \theta}-\frac{|g| e^{i (\varphi-4\theta)}}{4!}\phi^4} \; ,
\end{equation}
which is absolutely convergent if both $\varphi -4\theta \in (-\pi/2,\pi/2)$ and $-2\theta \in (-\pi/2,\pi/2)$, and is independent of $\theta$ as long as it converges. As $Z_0(g) = Z(g)$, it follows that $Z_\theta(g)$ is the analytic continuation of $Z(g)$ and it is easy to check that 
the integral in Eq.~\eqref{eq:Zphi-tilt} defines it for all $-3\pi/2\leq \varphi<3\pi/2$. We denote:
\begin{equation}
Z_+(g) = Z_{\theta}(g) \text{ for }\theta>0 \;,\qquad Z_-(g) = Z_{\theta}(g) \text{ for }\theta < 0 \; ,
\end{equation}
the anticlockwise respectively clockwise analytic continuations. Both $Z_+(g)$ and $Z_-(g)$ are defined for any $g$ with $\mathrm{Re}(g)<0$, but they are not equal. That is $Z_\theta(g)$ is a multi-valued function in the complex $g$-plane, with a branch point at the origin. We chose the range $-\pi < \varphi<\pi$ for the principle Riemann sheet, with a cut along the negative real axis. 

For $|\varphi|<\pi$ the integration contour in Eq.~\eqref{eq:Zphi-tilt} is homotopic to just the perturbative thimble $\mathcal{J}_0$, which at the origin is tangent to the real axis. In this case, the Laplace method applied to $Z_{\mathcal{J}_0}(g)$ amounts to Taylor expanding the quartic interaction and computing the Gaussian integral:\footnote{As explained for example in \cite{Bender:1999}, in the Laplace method we restrict the integration to a small neighborhood of the saddle point, we expand  the integrand, keeping only the first nontrivial term in the exponent while expanding the rest, and lastly exchange sum and integral, extending the integration domain to an infinite line along which the integrals are convergent and computable.}
\begin{equation}\label{eq:Zpert_phi}
\begin{aligned}
Z(g)&=Z_{\mathcal{J}_0}(g)=\int_{\mathcal{J}_0} [d\phi] e^{-\frac{1}{2}\phi^2-\frac{g}{4!}\phi^4}= \sum_{n=0}^{\infty}  \frac{1}{n!} \left(-\frac{g}{4!}\right)^n \int_{-\infty}^{\infty} [d\phi] e^{-\frac{1}{2}\phi^2} \phi^{4n} \\
&=\sum_{n=0}^{\infty}\left(-\frac{2}{3}\right)^n \frac{(4n)!}{2^{6n}(2n)!n!}g^n\equiv\sum_{n=0}^{\infty}A_n^{\rm pert.}g^n \;.
\end{aligned}
\end{equation}

%%%%%%%%%%%%%%%%%%%%%%%
\paragraph{The instanton sector.}
At $\varphi=\pm\pi$, i.e.\ at $g<0$, the perturbative thimble intersects the instanton thimbles. At $|\varphi|>\pi$ they split again, but on the opposite side, so that the perturbative thimble effectively has a jump at $|\varphi|=\pi$, leading to a jump in the decomposition of the original contour $\cC$. 

As the evaluation of of the integrals along the instanton thimbles, $Z_{\cJ_\pm}$, is well behaved and continuous across the Stokes line at $\varphi=\pi$,
for the asymptotic expansion around the instanton, we consider the case $g<0$. In this case the thimbles $\cJ_\pm$ are described by $\mathrm{Re}(\phi)=\pm \sqrt{(\mathrm{Im}(\phi))^2+\frac{6}{|g|}}$. Since we know the analytic expression of the thimbles, we take an explicit parametrization of the curve and use it to compute the integral. For example we can chose:
\begin{equation}
\gamma_\pm: \ t\in(-\infty, \infty) \rightarrow \left(\pm \sqrt{t^2+\frac{6}{|g|}}\right)+i \, t \;.
\end{equation}
and the integral reads:
\begin{equation}
\begin{split}
	Z_{\mathcal{J}_+} &=\int_{\mathcal{J_+}}[d\phi] e^{-S(\phi)}=\int_{-\infty}^{\infty}[dt]\; \dot{\gamma}_+(t)e^{-S\left(\gamma_+(t)\right)} \\
	&=\int_{0}^{\infty}[dt]\; \dot{\gamma}_+(t)e^{-S\left(\gamma_+(t)\right)}+\int_{0}^{\infty}[dt]\; \dot{\gamma}_+(-t)e^{-S\left(\gamma_+(-t)\right)} \;.
\end{split}
\end{equation}
The imaginary part of the action is zero along the thimbles. Also we have that $\dot{\gamma}_\pm(-t)=\frac{\mp t}{\sqrt{t^2+\frac{6}{|g|}}}+i$, and thus the real parts of the two integrals cancel. In the end we find:
\begin{equation}\label{eq:niceresult}
		Z_{\mathcal{J}_+}=2i\int_{0}^{\infty} [dt]\, e^{-\frac{3}{2|g|}-t^2-\frac{|g|}{6}t^4}
		=i e^{\frac{3}{2 g}} \int_{-\infty}^{+\infty} [dt]\, e ^{-t^2+\frac{g}{6}t^4} \;.
\end{equation}
Now we can rescale the $t$ by $\frac{1}{\sqrt{2}}$ and find the same integral as for $\cJ_0$ but with the opposite sign for $g$. In the end we have:
\begin{equation}\label{eq:niceresult1}
		Z_{\mathcal{J}_\pm}=\frac{i}{\sqrt{2}} e^{\frac{3}{2g}} \sum_{p=0}^{\infty}(-1)^p A_p^{\rm pert.}g^p \;.
\end{equation}

\begin{figure}
    \centering
    \begin{tikzpicture}[font=\footnotesize]
        \node at (0,0) {\includegraphics[scale=.5]{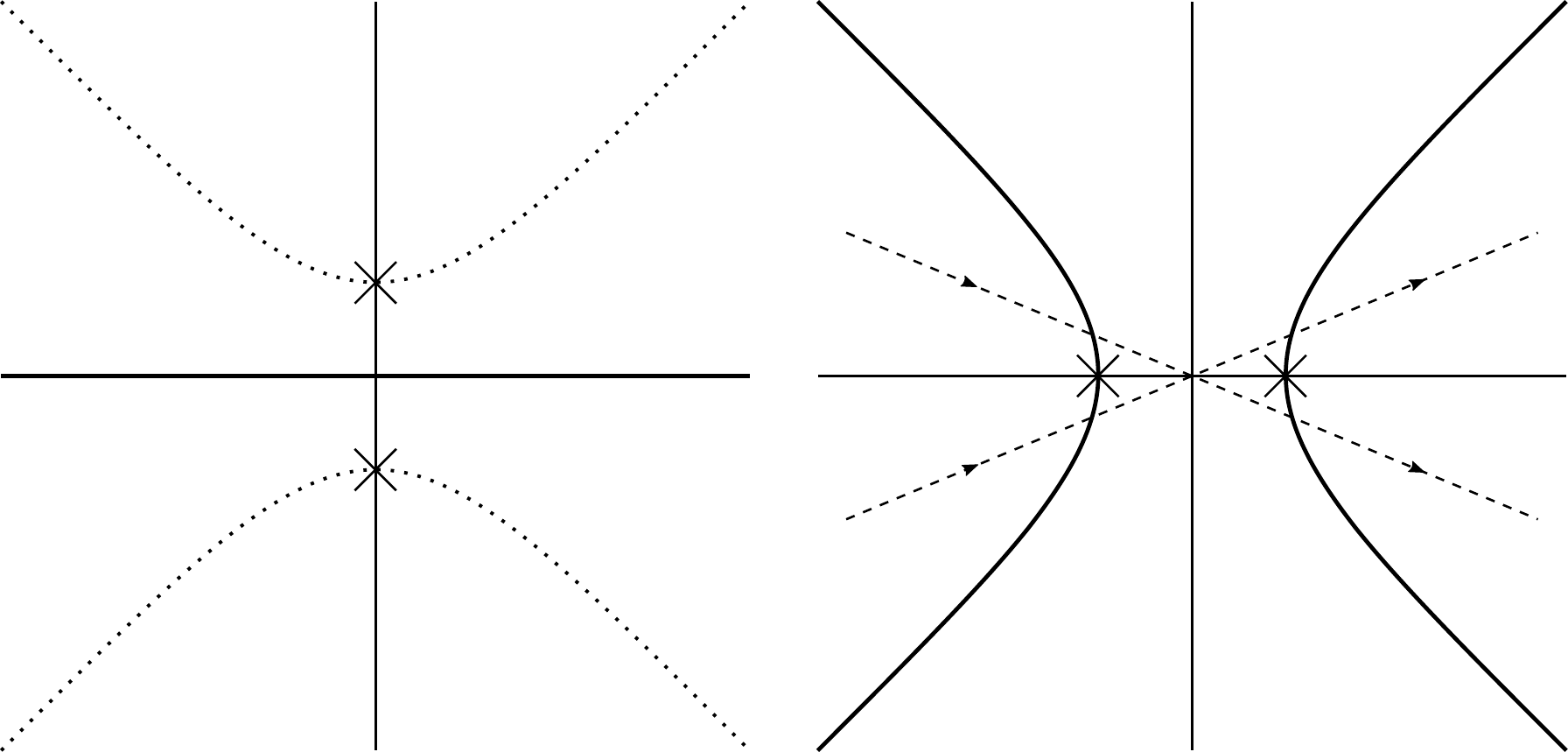}};
        \node at (-2.4,2.8) {\normalsize$g>0$};
        \node at (2.4,2.8) {\normalsize$g<0$};
        \node at (-5,0) {$\cJ_0$};
        \node at (5,2) {$\cJ_+$};
        \node at (1,-2) {$\cJ_-$};
        \node[anchor=west] at (4.5,0.8) {$\mathbb{R}e^{i\pi/8}$};
        \node[anchor=west] at (4.5,-0.8) {$\mathbb{R}e^{-i\pi/8}$};
%        \draw[step=1.0,blue,thin] (5,3) grid (-5,-3);
    \end{tikzpicture}
    \caption{Critical points and thimbles (thick lines) in the complex $\phi$-plane. The crosses mark the positions of the instantons and the dashed lines are the tilted contours of Eq.~\eqref{DiscontinuityNegativeAxis}. 
    }
    \label{fig:phi-thimble}
\end{figure}

%%%%%%%%%%%%%%%%%%%%%%%
\paragraph{Discontinuity.}

The discontinuity at the cut can be computed as:
\begin{equation}\label{DiscontinuityNegativeAxis}
Z_-(-|g|)-Z_+(-|g|)=\int_{\mathbb{R}e^{i\pi/8}}\f{d\phi}{\sqrt{2\pi}} \  e^{-\frac{1}{2}\phi^2+\frac{|g|}{4!}\phi^4}-\int_{\mathbb{R}e^{-i\pi/8}} \f{d\phi}{\sqrt{2\pi}} \  e^{-\frac{1}{2}\phi^2+\frac{|g|}{4!}\phi^4} \;,
\end{equation}
where we tilted the contours by the $\theta$ of minimal absolute value which ensures convergence. Each of the two contours can be deformed to the perturbative thimble alone for $|\varphi|<\pi$.
However, in the limit of $|\varphi|\nearrow \pi$ the difference of the two perturbative thimbles approaches the sum of the two instanton thimbles (see Fig.~\ref{fig:HalfThimbles}), leading to the asymptotic expansion in Eq.~\eqref{eq:discZ}, with $N=1$.

\begin{figure}[!htb]
	\centering
	\begin{tikzpicture}[font=\footnotesize]
        \node at (0,0) {\includegraphics[scale=0.5]{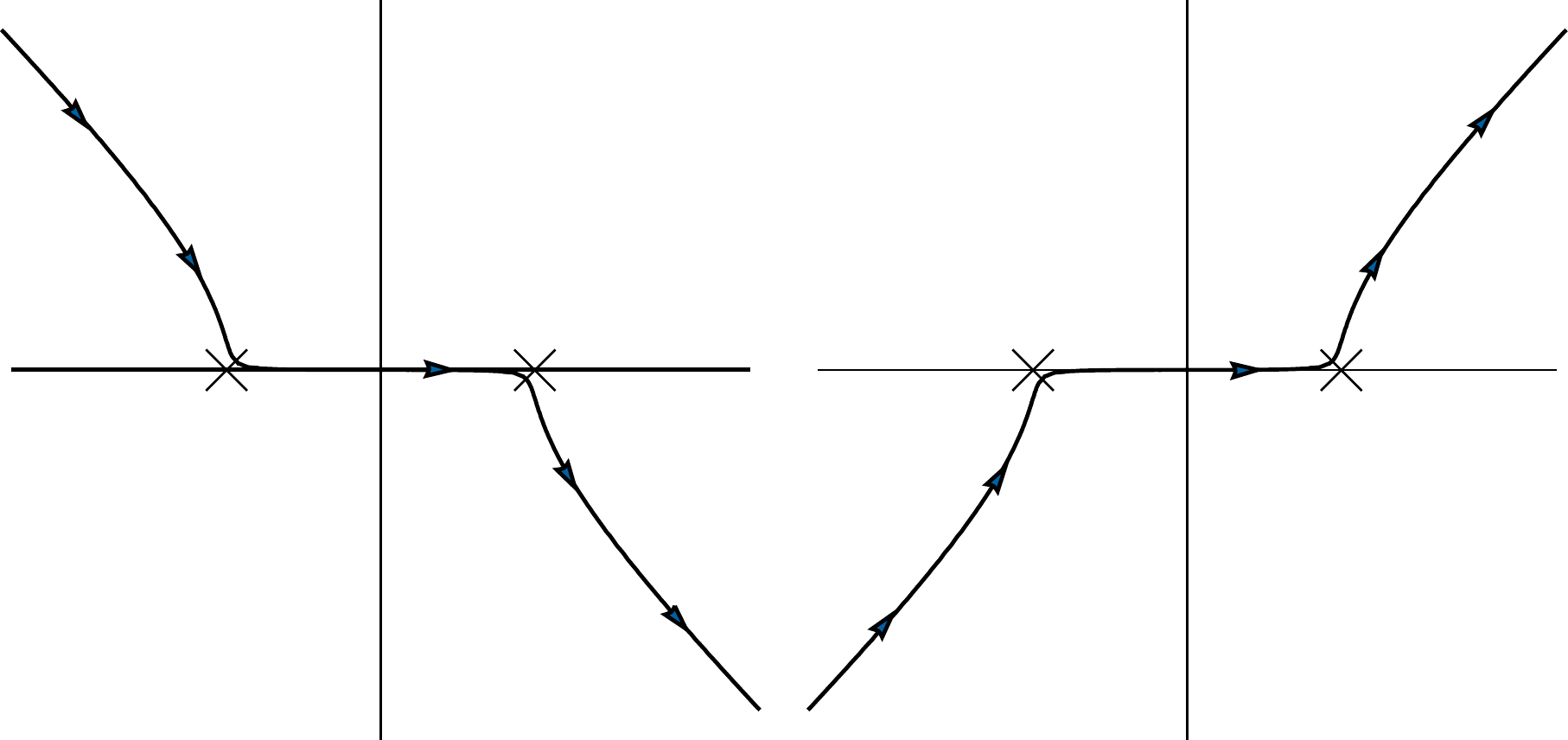}};
        \node at (-2.4,2.8) {\normalsize$\varphi\nearrow\pi$};
        \node at (2.4,2.8) {\normalsize$\varphi\searrow -\pi$};
    \end{tikzpicture}
	\caption{The thimble $\mathcal{J}_{0}$ for $Z_{+}(-|g|)$ (left) and $Z_{-}(-|g|)$ (right) as $|\varphi|\nearrow \pi$.}
	\label{fig:HalfThimbles}
\end{figure}

%%%%%%%%%%%%%%%%%%%%%%%%%%%%%%%%%%%%%
\section{A simple generalization of the Nevanlinna-Sokal theorem}
\label{app:Sokal}
%%%%%%%%%%%%%%%%%%%%%%%%%%%%%%%%%%%%%

\begin{proof}[Proof of Theorem~\ref{thm:Sokal}] The proof is an infinitesimal variation of the proof of \cite{Sokal:1980ey} which deals with the case 
$\beta=0$. We parallel the notation of \cite{Sokal:1980ey}. 

\bigskip

We first observe that:
\begin{equation}
|a_q | = \lim_{z\to 0, z\in {\rm Disk}_R} \left| \frac{ a_q z^q +  R_{q+1} (z) }{z^q} \right| = \lim_{z\to 0, z\in {\rm Disk}_R} \frac{ | R_q(z) |}{|z|^q} \le
K \, q! \;q^\beta \;  \rho^{-q}  \;, 
\end{equation}
hence $B(t)$ is an integer power series which converges in the disk $|t|<\rho$ and defines an analytic function in this domain.

\bigskip

Let us recall Hankel's contour integral representation of the inverse of the Gamma function:
\begin{equation}
    \frac{ 1 }{2\pi \imath} \oint_{\mathrm{Re}(z^{-1}) =r^{-1}} dz  \; e^{x/z} z^{ k - 1}
     = \frac{ 1 }{2\pi \imath} \int_{r^{-1}-\imath \infty}^{r^{-1}+\imath \infty} dw \; e^{x w} \; w^{-k -1}
      = \frac{x^k} {\Gamma(k+1)} \; ,
\end{equation}
which holds for any $r,x\in \mathbb{R}_+$ and $k\in \mathbb{C}$. We define for $x\in \mathbb{R}_+$ the function:
\begin{equation}
 b_0(x)  =  \frac{ 1  }{2\pi \imath} \oint_{\mathrm{Re}(z^{-1}) =r^{-1}} dz \; e^{x/z} \; z^{ -1} \; f(z) \;,
\end{equation}
which is an $r$ independent function for $r<R$ as long as the integral converges because the integration contour is fully contain in ${\rm Disk}_R$. Substituting the asymptotic expansion of $f(z)$ up to order $q$ we obtain:
\begin{equation}\label{eq:curest}
    b_0 (x) = \sum_{k=0}^{q-1} \frac{a_k}{ k ! } x^k + \frac{ 1 }{2\pi \imath} \oint_{\mathrm{Re}(z^{-1}) = r^{-1}} dz \; e^{x/z} \; z^{-1}  R_q(z) \; .
\end{equation}
Changing variables to $w=1/z$ and using the bound on $R_q$, the reminder term above is bounded by:
\begin{equation}
\begin{split}
& K  \; q! \; q^{\beta} \; \rho^{-q} \; e^{x/r}  \; \int_{-\infty}^{\infty} dv \; \frac{1}{  | r^{-1} +\imath v| ^{q + 1} }  \le K   \; q! \; q^\beta \; \rho^{-q} \; e^{x/r} \; r^{q} 
 \; \int_{-\infty}^{\infty} d v  \frac{1}{ (1+v^2)^{\frac{q+1}{2}} }\;  ,
 \end{split}
\end{equation}
and the integral in the last line is always bounded by the case $q=1$ in which case it is $\pi$ and can be absorbed in $K$. Choosing $r = x/ q$ (which is possible for $q$ large enough $q> x/R $) and using the Stirling upper bound on the Gamma function, the reminder is finally bounded by $K \;  q^{\beta + 1/2} (x/\rho)^q  $ hence goes to zero in the $q\to \infty$ limit as long as $x<\rho$. It follows that for $0<x<\rho$, $b_0(x) = B(t)|_{t=x}$.

\bigskip

We now define $b_m(x) = \frac{d^m}{dx^m} b_0(x) $, and using Eq.~\eqref{eq:curest} with $q=m+1$ we get:
\begin{align}
 b_m(x) =a_m  +     \frac{ 1 }{2\pi \imath} \oint_{\mathrm{Re}(z^{-1}) = r^{-1}} dz \; e^{x/z} z^{-m-1}  R_{m+1}(z) \;, 
 \end{align}
and we have the bound $ |b_m(x)|\le K \; (m+1)! \; (m+1)^\beta  \; \rho^{-m-1} e^{x/r} $ (note that this bound covers also the term $|a_m|< K \; m! \; m^{\beta} \; \rho^{-m}$).

 \bigskip
 
The sum $B_x(t)=\sum_{m\ge 0}\frac{(t-x)^m}{m!}  b_m(x)$ defines an analytic function in $t$ as long as it converges. As:
\begin{equation}
|B_x(t)| \le K e^{x/r} \sum_{m\ge 0} (m+1)^{\beta+1} ( |t-x| / \rho)^m \;,
\end{equation}
we conclude that $B_x(t)$ is analytic in a disk or radius $\rho$ centerd at $x$. It is immediate to check that $B_x(t)=B_{x'}(t)$ as long as they both converge, hence $B_x(t)$ is the analytic continuation of $B(t)$ to the strip $ \{ t\in \mathbb{C} \mid  {\rm dist}(t,\mathbb{R}_+ ) < \rho \}$ and it obeys the appropriate exponential bound. The last point follows by noting that for $z'\in {\rm Disk}_r$:
\begin{equation}
 \frac{1}{z' }\int_{0}^{\infty} dx \; e^{-x/z'} \; \frac{ 1 }{2\pi \imath} \oint_{\mathrm{Re}(z^{-1}) =r^{-1}} dz \; e^{x/z} \; z^{-1} \; f(z)  = \frac{ 1 }{ 2\pi \imath} \oint_{\mathrm{Re}(z^{-1}) =r^{-1}} dz \; \frac{1}{z-z'}\; f(z)  = f(z') \;,
\end{equation}
by Cauchy's theorem.
\end{proof}

\section{Proofs of Propositions}

In this appendix we gather the proofs of various Propositions in the main body of the paper.

\subsection{Properties of $Z(g,N)$}
\label{app:proofZ}

\begin{proof}[Proof of Proposition~\ref{prop:Z}]The proof of this proposition is linear.

\paragraph{Property~\ref{propZ:1}.} This follows by bounding the square root and taking into account that the Gaussian integral is normalized to 1. For $N\ge 0$ and $g = |g| e^{\imath\varphi }$ with $-\pi < \varphi < \pi$, we have the uniform bound: 
\begin{equation}
|1 -\imath \sqrt{\tfrac{g}{3}}   \sigma  |  = | e^{-\imath \frac{\varphi}{2} } - \imath  \sqrt{\tfrac{|g|}{3}}   \sigma | \ge \cos \frac{ \varphi}{2} \; .
\end{equation}
For $N<0$ we use $|1 -\imath \sqrt{\tfrac{g}{3}}   \sigma  | \le 1 +  \sqrt{ \frac{|g|}{3}} |\sigma| $ and splitting the integration interval in regions where $ \sqrt{ \frac{|g|}{3} }  |\sigma| \le 1$ respectively $\sqrt{ \frac{|g|}{3} } |\sigma| \ge 1$ and then re-extending the integration intervals to cover $(-\infty,\infty)$ we get: 
\begin{equation}
|Z(g,N)|
\le  \int_{ -\infty}^{\infty} [d\sigma]  \;e^{-\frac{1}{2} \sigma^2} (1 + \sqrt{ \tfrac{|g|}{3} }  |\sigma| )^{|N|/2} \le 
2^{|N|/2}+\f{2^{3|N|/4}}{\sqrt{\pi}} \frac{|g|^{N/4}}{3^{|N|/4}} \Gamma( \tfrac{|N|+2}{4}  ) \;.
\end{equation}

\paragraph{Property~\ref{propZ:2}.}
The perturbative expansion is obtained using $(1- x)^{-N/2} =\sum_{q\ge 0} \binom{q+N/2-1}{q}x^q$ and commuting (formally) the sum and integral:
\begin{equation}
Z(g,N)  =\sum_{n=0}^{\infty}\binom{2n + \frac{N}{2}-1}{2n} \left( - \f{g}{3} \right)^n \int_{\mathbb{R}} [d\sigma]  \;e^{-\frac{1}{2} \sigma^2} \sigma^{2n}= \sum_{n=0}^{\infty} \;  \frac{\Gamma(2n+N/2) }{2^{2n}n! \; \Gamma(N/2) }    \; \left( - \frac{2g}{3}\right)^n  \;.
\end{equation}
For the case $N=1$ for instance we have $\frac{ \Gamma(2n+ 1/2)}{\Gamma(1/2)} = \frac{(4n)!}{4^{2n}(2n)!}$ (see also Eq.~\eqref{eq:Zpert_phi} in Appendix~\ref{app:expansions}). 

\paragraph{Property~\ref{propZ:3}.}
Properties~\ref{propZ:3},~\ref{propZ:4},~\ref{propZ:4.5} and~\ref{propZ:5} are closely related and require that we deal carefully with the integration contour. We define:
\begin{align}
Z_\theta(g,N)=\int_{ e^{  -\imath \theta} \mathbb{R}} [d\sigma]  \;e^{-\frac{1}{2} \sigma^2} \frac{1}{\left( 1 - \imath \sqrt{ \frac{g}{3} } \sigma \right)^{N/2} }       = 
\int_{ \mathbb{R}} e^{-\imath \theta}[d\sigma]  \;e^{-\frac{1}{2}e^{-2\imath\theta} \sigma^2} \frac{1}{\left( 1 - \imath \sqrt{ \frac{|g|}{3} }e^{\imath \frac{\varphi-2\theta}{2}} \sigma \right)^{N/2} }    \;,
\end{align}
which is absolutely convergent if both $\varphi -2\theta \in (-\pi,\pi)$ and $-2\theta \in (-\pi/2,\pi/2)$. Moreover,  as long as it converges, it is independent of $\theta$, as can be verified by noticing that the derivative with respect to $\theta$ can be rewritten as the integral of a total derivative in $\sigma$. Thus $Z_{\theta}(g,N)$ is the analytic continuation of $Z(g,N)$ and optimizing on $\theta$ the partition function can be continued to the extended Riemann sheet $\mathbb{C}_{3\pi/2}$ with a branch point at $0$.

\bigskip

Using a Taylor formula with integral rest we have $ Z_\theta(g,N) - \sum_{k=0}^{q-1} \frac{1}{k!} Z_{\theta}^{(k)}(0,N) g^k  = R_{\theta}^q(g,N) $ with:
\begin{align}
R_{\theta}^q(g,N)   & =  \int_0^1 du \; \frac{(1-u)^{2q-1}}{(2q-1)!}  \;
 \int_{ \mathbb{R}} e^{-\imath \theta}[d\sigma]  \;e^{-\frac{1}{2}e^{-2\imath\theta} \sigma^2} 
  \left(\frac{d}{du} \right)^{2q}\left( \frac{1}{\left( 1 - \imath \sqrt{ \frac{|g|}{3} }e^{\imath \frac{\varphi-2\theta}{2}} \sigma u \right)^{N/2}  }  \right) \\
 & = \int_0^1 du \; \frac{(1-u)^{2q-1}}{(2q-1)!} 
 \int_{ \mathbb{R}} e^{-\imath \theta} [d\sigma]  \;e^{-\frac{1}{2}e^{-2\imath\theta} \sigma^2}  
 \frac{ \left( - \imath \sqrt{ \frac{|g|}{3} }e^{\imath \frac{\varphi-2\theta}{2}} \sigma  \right)^{n} 
   (-1)^{n} \frac{ \Gamma(n + N/2  ) }{  \Gamma(N/2)}  
 }
 { \left(1 - \imath \sqrt{ \frac{|g|}{3} }e^{\imath \frac{\varphi-2\theta}{2}} \sigma u \right)^{ 
 \frac{ 2n+N }{ 2 } } } \Bigg{|}_{n=2q}  \; , \nonumber
\end{align}
where for $N<0$ we need to chose $q > -N/4$.
Using
$| 1 - \imath \sqrt{ \frac{|g|}{3} }e^{\imath \frac{\varphi-2\theta}{2}} \sigma u | \ge \cos\frac{\varphi-2\theta}{2} $ the rest term is bounded as:\footnote{The factor $1/(\cos(2\theta))^{q}$ can be improved to $1$ by Taylor expanding in the Gaussian measure along the lines of the proof of Proposition \ref{propW:BS}.}
\begin{equation}
 | R_\theta^q(g,N) |  \le  \frac{1}{ \left( \cos (2\theta) \right)^{\frac{1}{2} + q} } \; \frac{
  \left( \frac{|g|}{3}\right)^{q}
 }{ \left( \cos\frac{\varphi-2\theta}{2}  \right)^{ 2q + N/ 2 } }  \; \frac{1}{(2q)!} \; \frac{ \Gamma(2q + N/2  ) }{  \Gamma(N/2)} \;\frac{ (2q)! }{ 2^{q} q !} \; ,
\end{equation}
and using the Stirling formula as upper/lower bound for the $\Gamma$ function\footnote{In detail:
\begin{align}\label{eq:Stibound}
    1 \le \frac{\Gamma(x+1)}{ \sqrt{2\pi x} (x/e)^x } \le e^{\frac{1}{12 x}} \le e^{1/12}\; ,\qquad x \in [1, \infty) \;.
\end{align}} we have for $q$ large enough (larger that $\max\{1,N\}$):
\begin{equation}
\begin{split}
\frac{ \Gamma(2q+\tfrac{N}{2}) }{2^q q!} & \le K \frac{ (2q+\tfrac{N}{2}-1)^{\frac{1}{2} + 2q+\tfrac{N}{2}-1 } e^{-(  2q+\tfrac{N}{2}-1) } }{2^q q^{\frac{1}{2} + q} e^{-q}} \crcr 
& \le K \; 2^q \;  q^{q +\frac{N}{2} - 1 } e^{-q} \;
  \left(1 +  \frac{ \frac{N}{2} -1 }{ 2q }\right)^{\frac{4q + N-1}{2}}
\; 
\le K \; q! \; q^{\frac{N -3 }{2} } \; 2^q \; ,
\end{split}
\end{equation}
for $K$ some $q$ independent constant.\footnote{Note that $\left(1 +  \frac{ \frac{N}{2} -1 }{ 2q }\right)^{\frac{4q + N-1}{2}} \le \exp\{  \frac{4q + N-1}{2} \ln( 1 +  \frac{ \frac{N}{2} -1 }{ 2q } ) \} \le \exp\{ \frac{ ( 4q + N-1) ( \frac{N}{2} -1 )  }{4q} \} \le K$ for $q\ge 1$.} Conveniently choosing $\theta = \varphi/6$, we get the following bounds on $Z_{\varphi/6}(g,N)$ and $R_{\varphi/6 }^q(g,N)$: 
\begin{equation}
    | R_{\varphi/6 }^q(g,N) | 
     \le  \frac{K}{ \left(\cos\frac{\varphi}{3}\right)^{\frac{ N+1}{2}  }  } \; q! \; q^{\frac{N-3}{2}} \; \left( \frac{1}{ \frac{3}{2}   \left( \cos\frac{\varphi}{3} \right)^3 } \right)^q \; |g|^q \;, 
    \qquad
|Z_{ \varphi / 6} (g,N) | \le \f{1 }{  \left(\cos\f{\varphi}{3}\right)^{\f{ N+1}{2}  }} \;.
\end{equation}
Observe that $Z_{\theta}(g,N)$ is independent on $\theta$ only as long as $\varphi$ and $\theta$ are independent, but the choice $\theta = \varphi/6$ fixes $\theta$ in terms of the argument of $g$ and $Z_{\varphi/6}(g,N)$ depends on $\varphi$.

We are now in the position to prove that $Z(g,N)$ is Borel sumable along all the directions in the cut plane
$\mathbb{C}_\pi$ by verifying the conditions of Theorem~\ref{thm:Sokal}, Appendix~\ref{app:Sokal}. This comes about as follows:
\begin{itemize}
    \item let us fix some $\alpha\in (-\pi,\pi)$. As already noted in Properties~\ref{propZ:1} and~\ref{propZ:2}, $Z(g,N)$ is analytic in $\mathbb{C}_\pi$, hence in particular at $|g|e^{\imath \alpha}$ and its asymptotic expansion at 0 is known.
    \item $Z(g,N)$ is analytically continued to any $g$ in a Sokal disk (with $0$ on its boundary) tilted by $\alpha$, that is $g \in {\rm Disk}^{\alpha}_R = \{ z \mid  \mathrm{Re}(e^{\imath \alpha} / z)  > 1/R  \} $ via $Z_{\varphi/6}(g,N)$. Note that this Sokal disk extends up to $g$ with argument $\varphi = \alpha \pm \pi/2$. In the entire Sokal disk the rest term obeys the bound:
    \begin{equation}
| R_{\varphi/6 }^q(g,N) |  \le K \; q! \; q^{\frac{N-3}{2}} \; |g|^q \;
    \max_{\pm} \bigg\{ \f{1}{\left(\cos \f{\alpha \pm \frac{ \pi}{2} }{3}\right)^{\f{ N+1}{2}  } } \; \left( \f{1}{ \f{3}{2}   \left( \cos\f{\alpha \pm \frac{\pi}{2}}{3} \right)^3 } \right)^q  \bigg\}
      \; .
    \end{equation}
    For any fixed $\alpha\in(-\pi,\pi)$, $ \min_{\pm} \big\{ \frac{3}{2} \left( \cos\frac{\alpha \pm \frac{\pi}{2}}{3} \right)^3 \big\} = \rho>0$ for some $\rho$ hence the Taylor rest obeys the bound in Eq.~\eqref{eq:NS-bound}.
\end{itemize}

\paragraph{Property~\ref{propZ:4}.} Although this point is discussed in the main body of the paper, we include it also here for completeness. We denote the analytic continuation of $Z(g,N)$ to the extended Riemann sheet $\mathbb{C}_{3\pi/2}$ by:
\begin{align}
\qquad \theta >0 : \qquad  Z_+(g,N) = Z_{\theta}(g,N)  \;, \qquad \qquad
 Z_-(g,N) = Z_{ - \theta}(g,N) \; . 
\end{align}
Observe that the factorial bound on the Taylor rest term can not be satisfied (for any choice of $\theta$) when $\varphi \to \pm 3\pi/2$. As Borel summability along a direction $\alpha$ requires analytic continuation and bound on the rest term in a Sokal disk centred on that direction, hence extending up to $\alpha\pm \pi/2$, $Z(g,N)$ looses Borel summability at $g\in \mathbb{R}_-$.

From Eq.~\eqref{eq:Zsigma} we have $Z(g,N)$:
\begin{equation} 
Z(g,N)=\int_{-\infty}^{+\infty} [d\sigma]  \;e^{-\frac{1}{2} \sigma^2} \frac{1}{\left( 1 - \imath \sqrt{ \frac{g}{3} } \sigma\right)^{N/2} }  =\int_{-\infty}^{+\infty} [d\sigma]  \;e^{-\frac{1}{2} \sigma^2} \frac{1}{\left( 1 - e^{ \imath \frac{\pi}{2} }  \sqrt{ \frac{g}{3} } \sigma\right)^{N/2} }  \; ,
\end{equation}
and the Lefschetz thimble of the integral is the real axis irrespective of $g$ (comparing to Eq.~\eqref{eq:thimbles}, we see that in the present case thimbles are defined by the saddle points of $\s^2$). As long as $g = |g| e^{\imath\varphi}\in \mathbb{C}_{\pi}$, the integral converges as the singularity at
\begin{equation}
\sigma_\star = -\imath \sqrt{ \f{3}{g} } =  e^{- \imath\frac{\pi}{2} - \imath  \frac{ \varphi}{2}} \sqrt{\f{3}{|g|} } \;,
\end{equation}
lies outside the integration contour.
Notice that the singularity is a pole for $N$ even, and otherwise it is a branch point with branch cut $\sigma_\star \times (1,+\infty)$.
We will discuss in detail the general case with a branch cut, as the case of even $N$ turns out to be readable as a special case of the results at general $N$.

As $g$ approaches $\mathbb{R}_-$ the branch point hits the contour of integration: for $\varphi \nearrow \pi$ (that is we approach the cut in the $g$-plane counterclockwise) the branch point hits the real axis at $ - \sqrt{3/|g|}$ while for $\varphi \searrow -\pi$ (that is we approach the cut in the $g$-plane clockwise) the branch point hits the real axis at $ \sqrt{3/|g|}$. 
The analytic continuation $Z_+ (g,N)$ (resp. $Z_-(g,N)$) consists in tilting the contour of integration in $\sigma$ by some clockwise roatation $-\theta < 0$ (resp. counterclockwise, $\theta >0$) to avoid the collision with the branch point. However, once $g$ passes on the second Riemann sheet $ \varphi > \pi$ (resp. $\varphi < -\pi$) the tilted contour is no longer a thimble and in order to derive the asymptotic behaviour of $Z_\pm(g,N)$ we need to  to rotate it back to the real axis. This costs us a Hankel contour $C$ along the cut (see Fig.~\ref{fig:sigma-contour}):
\begin{equation}\label{eq:sigma-contour}
\begin{split}
    & Z_\pm(g,N)\big{|}_{\varphi\genfrac{}{}{0pt}{}{>\pi}{< -\pi}}=\int_{ e^{  \mp \imath \theta} \mathbb{R}} [d\sigma]  \;\frac{e^{-\frac{1}{2} \sigma^2} }{\left( 1 - \imath \sqrt{ \frac{g}{3} } \sigma\right)^{N/2} } = Z^{\mathbb{R}}(g,N) + Z^C_\pm(g,N)  \crcr
    & 
%    Z^{\mathbb{R}}(g,N) = \int_{ \mathbb{R}} [d\sigma]  \;e^{-\frac{1}{2} \sigma^2} \frac{1}{\left( 1 - \imath \sqrt{ \frac{g}{3} } \sigma\right)^{N/2} } \;,\qquad  
    Z^C_\pm(g,N) = \int_{C} [d\sigma] \;e^{-\frac{1}{2} \sigma^2} \frac{1}{\left( 1 - \imath \sqrt{ \frac{g}{3} } \sigma\right)^{N/2} } \;.
\end{split}
\end{equation}

The integral $Z^{\mathbb{R}}(g,N)$, defined in Eq.~\eqref{eq:defZR}, is absolutely convergent, and hence analytic, in the range $|\varphi|\in(\pi,3\pi)$, where it is bounded from above as in Eq.~\eqref{eq:Zaprioribound}.

The Hankel contour $C$ turns clockwise around the cut $\sigma_\star\times (1,+\infty)$, i.e.~starting at infinity with argument $ \frac{3\pi}{2} -\frac{\varphi}{2}$ and going back with argument $ -\frac{\pi}{2} -\frac{\varphi}{2}$ after having encircled the branch point $\sigma_\star$.
We kept a subscript $\pm$ for the contribution of the Hankel contour, because, even though the definition of $Z^C_\pm(g,N)$ and $C$ might suggest that it is one single function of $g$, in fact the integral around the cut is divergent for $|\varphi|<\pi/2$ and therefore the integrals at $\pi<\varphi<3\pi/2$ and  at  $-\pi>\varphi>-3\pi/2$ are not the analytic continuation of each other.

We will now rewrite $Z^C_\pm(g,N)$ in a more useful form. With the change of variables $\sigma = e^{- \imath\frac{\pi}{2} -\imath \frac{\varphi}{2} }  \sqrt{ 3/|g| }   \; \sigma' $ the contour $C$ becomes a Hankel contour $C'$ turning clockwise around $(1,+\infty)$ and a shift to $\sigma' = 1+t$ brings $C'$ to $C''$, a clockwise oriented Hankel contour around the positive real axis, starting at infinity with argument $2\pi$ and going back with vanishing argument after having encircled the origin:
\begin{equation}
\begin{split}
Z^C_\pm(g,N) & = \int_{C} [d\sigma] \;e^{-\frac{1}{2} \sigma^2} \frac{1}{\left( 1 - e^{\f{\imath \pi}{2} } \sqrt{ \frac{g}{3} } \sigma\right)^{N/2} }
= \left(  \frac{3}{ e^{\imath \pi} g} \right)^{1/2}  \int_{C'} [d\sigma'] \;e^{\frac{3}{2g} (\sigma')^2} \frac{1}{( 1 -  \sigma' )^{N/2} }  \crcr 
& = \frac{1}{\sqrt{2\pi}}  \left(  \frac{3}{ e^{\imath \pi} g} \right)^{1/2}    \int_{C''} dt \; ( e^{-\imath \pi} t)^{-\frac{N}{2}} \; e^{\frac{3}{2g} (1+t)^2 } 
\; ,
\end{split}
\end{equation}
 where we have made explicit the choice of branch by expressing minus signs as phases.
Notice that the integral converges because for $ \pi <|\varphi| <  \frac{3\pi}{2} $ the exponent has a negative real part.

Next, we make the change of variables $t =  e^{\imath \tau \pi} \frac{g}{3}  u$, with $\tau = -\sgn(\varphi)$ (i.e.~$\tau = -$ for $\varphi>\pi$, that is for $Z_{+}^{C}(g)$ and $\tau=+$ for $\varphi<-\pi$ i.e.\ for $Z_{-}^{C}(g)$), obtaining:
\begin{equation} \label{eq:ZCpm-intermediate}
  Z^C_\pm(g,N)  =\frac{e^{\imath\tau \pi (1-\frac{N}{2}) } }{\imath \sqrt{2\pi}}  \left(  \frac{ g} {3} \right)^{\frac{1-N}{2} }
  e^{\frac{3}{2g}}  \int_{e^{-\imath \tau \pi-\imath\varphi} C''} du \; 
  ( e^{-\imath \pi} u)^{ -\frac{N}{2}}  e^{-u+\f{g}{6}u^2} \;.
\end{equation}

The two choices of $\tau$ are dictated by the fact that for $\pi<|\varphi|<3\pi/2$ the contour of integration should stay in the domain of convergence continuously connected to $C''$. The fact that this entails two different choices of $\tau$ reflects what we anticipated about the need of keeping a $\pm$ subscript.
Lastly, the contour of integration can be deformed back to $C''$, where we easily evaluate the discontinuity (for $N<2$) as:\footnote{This is obtained by writing, for $\mathrm{Re}(z)<1$,
\[
\int_{C''} du \; e^{-S(u)}\, (e^{-\imath\pi} u)^{-z} = 
\int_{+\infty}^0  du \; e^{-S(u)}\, e^{-z (\ln|u|+\imath \pi)} + \int_0^{+\infty}  du \; e^{-S(u)}\, e^{-z (\ln|u|-\imath \pi)} = 
2\imath \sin(\pi z) \int_0^{+\infty}du \;e^{-S(u)}\, u^{-z} \;.
\]
}
\begin{equation} \label{eq:ZC-nonpert}
\begin{split}
  Z^C_\pm(g,N) & =\frac{e^{\imath\tau \pi (1-\frac{N}{2}) } }{ \sqrt{2\pi}}  \left(  \frac{ g} {3} \right)^{\frac{1-N}{2} }
  e^{\frac{3}{2g}} \;  2 \sin(\pi\f{N}{2})  \int_0^{+\infty} du \;
    e^{-u+\f{g}{6}u^2} u^{ -\frac{N}{2}} \\
    &= \frac{e^{\imath\tau \pi (1-\frac{N}{2}) } }{ \sqrt{2\pi}}  \left(  \frac{ g} {3} \right)^{\frac{1-N}{2} } e^{\frac{3}{2g}} 
  \; 2^{1+N/2} \sin(\pi\f{N}{2})  \int_0^{+\infty} d\r \;
    e^{-\f12 \r^2+\f{g}{24}\r^4} \r^{1 -N}
    \;.
\end{split}
\end{equation}
In the last step, we performed the change of variables $u=\r^2/2$ in order to make explicit that for $N=1$ we reproduce Eq.~\eqref{eq:niceresult} (times two, because the Hankel contour is only one, while there are two instanton thimbles in the $\phi$ representation). For general $N$, the integral resembles that of the $O(N)$ model in polar coordinates, except that in that case we would have the opposite sign for the power of the insertion, i.e.\ $\r^{N-1}$. This also explains the relation between the coefficients of the perturbative and nonperturbative series in Eq.~\eqref{eq:fullZ}, which are related by the transformation $N\to 2-N$ (up to an area of $S^{N-1}$ of the missing angular integration in Eq.~\eqref{eq:ZC-nonpert}).

The integral over $u$ (or $\r$) in Eq.~\eqref{eq:ZC-nonpert} is convergent as long as $\mathrm{Re}(g)<0$, i.e.~for $\pi/2<|\varphi|<3\pi/2$. One can use again the Hubbard-Stratonovich  trick to write (for $N<2$):
\begin{equation}
\begin{split}
    \int_0^{+\infty} du \;
    e^{-u+\f{g}{6}u^2} u^{ -\frac{N}{2}} 
    &=\int_{-\infty}^{+\infty} [d\s] \; e^{-\f{\s^2}{2}} \int_0^{+\infty} du \;
    e^{-u( 1  +  \sqrt{\f{g}{3}}\s)} u^{ -\frac{N}{2}}  \\
    &= \G(1-N/2)\int_{-\infty}^{+\infty} [d\s] \; e^{-\f{\s^2}{2}} \left( 1 +  \sqrt{\f{g}{3}}\s \right)^{\f{N}{2}-1}  \;,
\end{split}
\end{equation}
where, in order to ensure uniform convergence of the $u$ integral, we keep $|\varphi|=\pi$. Note that the integral is independent of choice of branch of $\sqrt{g}$, as a sign can be absorbed in $\s$. The reader will note that this is proportional to our integral in  Eq.~\eqref{eq:defZR} with arguments $Z^{\mathbb{R}}(-g,2-N)$, which is unambiguous as $Z^{\mathbb{R}}(g,N)$ is periodic with period $2\pi$ in the argument of $g$, hence one can chose any determination of $-g$.
The advantage of the manipulation above is that the integral over $\sigma$ now converges for $ 0< |\varphi|< 2 \pi$, hence it allows us to analytically continue $Z^C_{\pm}(g,N)$ beyond $|\varphi| = 3\pi/2$ up to $|\varphi| \nearrow 2\pi$.

Using Euler's reflection formula, $\Gamma( 1 - z) \sin(\pi z  ) =  \pi / \Gamma(z) $, we get:\footnote{Notice that this allows us also to analytically continue the result to $N\geq 2$.}
\begin{equation}\label{eq:Zpm}
\begin{split}
     Z^C_\pm(g,N)  & = \frac{e^{\imath \tau \pi (1-\frac{N}{2}) } }{ \sqrt{2\pi}}  \left(  \frac{ g} {3} \right)^{\frac{1-N}{2} }
  e^{\frac{3}{2g}}  \frac{2\pi}{\Gamma(N/2)}\int_{-\infty}^{+\infty} [d\s] \; e^{-\f{\s^2}{2}} \left( 1 - \imath \sqrt{\f{ -g}{3}}\s \right)^{\f{N}{2}-1}
  \crcr
  & =    e^{\imath \tau \pi (1-\frac{N}{2})  }\left(\frac{ g} {3} \right)^{\frac{1-N}{2} }
  e^{\frac{3}{2g}}  \frac{\sqrt{ 2\pi } }{\Gamma(N/2)}  \; Z^{\mathbb{R}} (- g , 2-N) 
  \crcr
  & =    e^{\imath \tau \pi (1-\frac{N}{2})  }\left(\frac{ g} {3} \right)^{\frac{1-N}{2} }
  e^{\frac{3}{2g}}  \frac{\sqrt{ 2\pi } }{\Gamma(N/2)}  \; Z ( e^{\imath \tau \pi} g , 2-N) 
  \;.
 \end{split}
\end{equation}
In the last line above we have chosen a determination of $-1$ such that $e^{\imath \tau \pi} g$ belongs to the principal sheet of the Riemann surface, where $Z=Z^{\mathbb{R}}$, hence $ Z ( e^{\imath \tau \pi} g , 2-N)   = 
Z^{\mathbb{R}} ( e^{\imath \tau \pi}  g , 2-N) = Z^{\mathbb{R}} (- g , 2-N)   $, where in the last equality we used the fact that $Z^\mathbb{R}$ is single-valued.
 We have thus shown that, when going from $|\varphi| <\pi$ to $\pi < |\varphi| < 2\pi$ our analytic continuation of $Z(g,N)$ switches:
\begin{equation}\label{eq:nicecont}
\begin{split}
Z(g,N) \xrightarrow[]{ |\varphi| \nearrow \pi_+}\   & Z^{\mathbb{R}}(g,N) + \frac{\sqrt{ 2\pi } }{\Gamma(N/2)}  \;  e^{\imath \tau \frac{\pi}{2}  } \;  e^{\frac{3}{2g}} \; \left( e^{\imath \tau \pi }   \frac{ g} {3} \right)^{\frac{1-N}{2} }  Z^{\mathbb{R}} ( - g , 2-N) 
  \crcr 
&  =  Z( e^{\imath (2\tau \pi) } g,N) +  
\frac{\sqrt{ 2\pi } }{\Gamma(N/2)}  \;  e^{\imath \tau \frac{\pi}{2}  } \;  e^{\frac{3}{2g}} \; \left( e^{\imath \tau \pi }   \frac{ g} {3} \right)^{\frac{1-N}{2} }  \; Z ( e^{\imath \tau \pi} g , 2-N) 
  \;,
  \end{split}
\end{equation}
where $ |\varphi| \nearrow \pi_+$ signifies that the switching takes place when $|\varphi|$ crosses the value $\pi$ coming from below.
In the second line above,  for $\pi< |\varphi| < 2\pi$, both arguments $  e^{\imath (2\tau \pi) } g$ and $ e^{\imath \tau \pi} g$ belong to the principal sheet of the Riemann surface , where $Z(g,N)$ has already been constructed and proven to be analytic. 

The first term in Eq.~\eqref{eq:nicecont} is regular up to $|\varphi| = 3\pi$, but the second one has a problem when 
$e^{\imath \tau \pi} g $ reaches the negative real axis (which is $\varphi \to - 2\tau \pi$). Note that 
 $ Z( e^{\imath \tau \pi} g , 2-N)$ approaches the cut singularity in the principal sheet of the Riemann surface, as its argument is $ e^{\imath \tau \pi} g $. But we already know what happens with $Z(g',N')$ when $g'$ traverses the cut singularity in the principal Riemann sheet: a branch point crosses the integration contour, one detaches a Hankel contour and the analytic continuation switches again:
\begin{equation}\label{eq:niceothercont}
\begin{split}
     Z ( e^{\imath \tau \pi} g , 2-N)  \xrightarrow[]{ |\varphi| \nearrow 2 \pi_+}\
     & Z ( e^{\imath (3\tau \pi)} g , 2-N) \crcr
     & +  \frac{\sqrt{ 2\pi } }{\Gamma(1-N/2)}  \;  e^{\imath \tau \frac{\pi}{2}  } \;  e^{\frac{3}{2ge^{\imath \tau \pi}}} \; \left( e^{\imath (2\tau \pi) }   \frac{ g} {3} \right)^{\frac{N-1}{2} }   \; Z ( e^{\imath (2 \tau \pi ) } g , N) \; ,
\end{split}
\end{equation}
where this time the arguments at which $Z$ is evaluated on the right hand side stay in the principal sheet for $2\pi < |\varphi| < 3\pi$. 

We iterate this and build the analytic continuation of $Z(g,N)$ in terms of $Z^{\mathbb{R}}$ on the whole Riemann surface. The first few steps in this continuation are:
\begin{equation}
 \begin{split}
 |\varphi| <\pi : & \qquad  Z(g,N) =  Z^{\mathbb{R}}(g,N)  \;, \crcr 
 \pi < |\varphi| < 2\pi: & \qquad Z(g,N) =  Z(e^{\imath (2\tau \pi) } g,N) + \frac{\sqrt{ 2\pi } }{\Gamma(N/2)}  \;  e^{\imath \tau \frac{\pi}{2}  } \;  e^{\frac{3}{2g}} \; \left( e^{\imath \tau \pi }   \frac{ g} {3} \right)^{\frac{1-N}{2} } \; Z( e^{\imath \tau \pi} g , 2-N)   \crcr
& \hphantom{\qquad Z(g,N)} = Z^{\mathbb{R}}(g,N) + \frac{\sqrt{ 2\pi } }{\Gamma(N/2)}  \;  e^{\imath \tau \frac{\pi}{2}  } \;  e^{\frac{3}{2g}} \; \left( e^{\imath \tau \pi }   \frac{ g} {3} \right)^{\frac{1-N}{2} } \; Z^{\mathbb{R}}(- g , 2-N) 
\;, \\
2\pi < |\varphi| < 3\pi : & \qquad  Z(g,N)
 =\left( 1 + \tilde \tau   \right)  Z(e^{\imath ( 2\tau \pi ) } g,N) \crcr
& \hphantom{\qquad Z(g,N) =}   + e^{\imath \tau \pi(N-1)}\frac{\sqrt{ 2\pi } }{\Gamma(N/2)}  \;  e^{\imath \tau \frac{\pi}{2}  } \;  e^{\frac{3}{2g}} \; \left( e^{\imath (3\tau \pi) }   \frac{ g} {3} \right)^{\frac{1-N}{2} }    \; Z( e^{\imath (3\tau \pi)  } g , 2-N)      \crcr
& \hphantom{\qquad Z(g,N) } 
 =\left( 1 + \tilde \tau   \right) Z^{\mathbb{R}}( g,N)  \crcr
& \hphantom{\qquad Z(g,N) =}   + e^{\imath \tau \pi(N-1)}\frac{\sqrt{ 2\pi } }{\Gamma(N/2)}  \;  e^{\imath \tau \frac{\pi}{2}  } \;  e^{\frac{3}{2g}} \; \left( e^{\imath (3\tau \pi) }   \frac{ g} {3} \right)^{\frac{1-N}{2} }    \; Z^{\mathbb{R}}( - g , 2-N)  \;,     
 \end{split}
\end{equation}
where we denoted:
\begin{equation}
\begin{split}
\tilde \tau = \frac{\sqrt{ 2\pi } }{\Gamma(N/2)}  \;  e^{\imath \tau \frac{\pi}{2}  } \;  e^{\frac{3}{2g}} \; \left( e^{\imath \tau \pi }   \frac{ g} {3} \right)^{\frac{1-N}{2} } \; 
\frac{\sqrt{ 2\pi } }{\Gamma(1-N/2)}  \;  e^{\imath \tau \frac{\pi}{2}  } \;  e^{\frac{3}{2ge^{\imath \tau \pi}}} \; \left( e^{\imath (2\tau \pi) }   \frac{ g} {3} \right)^{\frac{N-1}{2} } = 
2\sin (\tfrac{N\pi}{2} ) \; e^{\imath \tau \pi \frac{ N +1 }{2} } \;.
\end{split}
\end{equation}

In order to iterate  Eq.~\eqref{eq:nicecont}, we must make sure that at each step the arguments of the functions $Z$ involved in the analytic continuation are brought back to the principal sheet. 
We denote the analytic continuation of the partition function to the Riemann surface by:
\begin{equation}
\begin{split}
& 2k\pi < |\varphi| < (2k+1) \pi  :  \\
&\qquad Z(g,N ) =  \omega_{2k}  \;  Z(e^{\imath ( 2 k ) \tau \pi  } g,N)  \\
&\qquad\hphantom{Z(g,N ) =} + \eta_{2k} \;  \frac{\sqrt{ 2\pi } }{\Gamma(N/2)}  \;  e^{\imath \tau \frac{\pi}{2}  } \;  e^{\frac{3}{2g}} \; \left(   e^{\imath  (2k + 1) \tau \pi  } \frac{ g} {3} \right)^{\frac{1-N}{2} }    \; Z ( e^{\imath  (2k + 1) \tau \pi  } g , 2-N)   \;,  \\[1ex]
& (2k+1)\pi < |\varphi|  < (2k+2) \pi  :  \\
&\qquad Z(g,N ) = \omega_{2k+1}  \; Z(e^{\imath ( 2 k+2) \tau \pi  } g,N)  \\
&\qquad\hphantom{Z(g,N ) =}  +   \eta_{2k+1} \;  \frac{\sqrt{ 2\pi } }{\Gamma(N/2)}  \;  e^{\imath \tau \frac{\pi}{2}  } \;  e^{\frac{3}{2g}} \; \left(   e^{\imath  (2k + 1) \tau \pi  } \frac{ g} {3} \right)^{\frac{1-N}{2} }    \; Z ( e^{\imath (2k + 1) \tau \pi  } g , 2-N)   \;,
\end{split}
\end{equation}
a general recursion relation for the $(\omega_q,\eta_q)$ is obtained from Eq.~\eqref{eq:nicecont} and Eq.~\eqref{eq:niceothercont} generalized to the Riemann surface:\footnote{In detail, we use: 
\[
\begin{split}
&  Z( e^{\imath ( 2 k ) \tau \pi  } g ,N)
\xrightarrow[]{ |\varphi| \nearrow (2k +1 ) \pi_+}   
Z( e^{\imath (2k+2) \tau \pi } g,N) +  
\tfrac{\sqrt{ 2\pi } }{\Gamma(N/2)}  \;  e^{\imath \tau \frac{\pi}{2}  } \;  e^{\frac{3}{2g}} \; \left( e^{\imath (2k+1) \tau \pi }   \tfrac{ g} {3} \right)^{\frac{1-N}{2} }  \; Z (  e^{\imath (2k+1) \tau \pi } g , 2-N)   \;, \crcr
 &  Z (   e^{\imath (2k +1) \tau \pi  } g  , 2-N)  
\xrightarrow[]{ |\varphi | \nearrow  (2k+2)\pi_+ }  Z( e^{\imath (2k +3) \tau \pi  } g , 2-N)  +  \tfrac{\sqrt{ 2\pi } }{\Gamma(1-N/2)}  \;  e^{\imath \tau \frac{\pi}{2}  } \; 
e^{ - \frac{3}{2g }} \; \left(  e^{\imath (2k +2) \tau \pi  }   \tfrac{ g} {3} \right)^{\frac{N-1}{2} }   \; Z( e^{\imath (2k +2) \tau \pi  } g , N) \; , \crcr
&   \tfrac{\sqrt{ 2\pi } }{\Gamma(N/2)}  \;  e^{\imath \tau \frac{\pi}{2}  } \;  e^{\frac{3}{2g}} \; 
\left(   e^{\imath  (2k + 1) \tau \pi  } \tfrac{ g} {3} \right)^{\frac{1-N}{2} } \;\;
\left[  \tfrac{\sqrt{ 2\pi } }{\Gamma(1-N/2)}  \;  e^{\imath \tau \frac{\pi}{2}  } \; 
     e^{ - \frac{3}{2g }} \; \left(  e^{\imath (2k +2) \tau \pi  }   \tfrac{ g} {3} \right)^{\frac{N-1}{2} }   \right] = \tilde \tau \;.
\end{split}
\] 
}
 \begin{equation}
 (\omega_0 ,\eta_0) = (1,0)  \;,\qquad \begin{cases}
  \omega_{2k+1} &=  \omega_{2k} 
  \\
  \eta_{2k+1} & =     \eta_{2k} + \omega_{2k}
\end{cases} \; , \qquad
\begin{cases}
  \omega_{2(k+1)} &= \tilde \tau \;  \eta_{2k+1} + \omega_{2k+1}
  \\
  \eta_{2(k+1)} & =  e^{ \imath \tau \pi(N-1) }   \eta_{2k+1}
\end{cases} \;.
 \end{equation}
The recursion can easily been solved by introducing a transfer matrix:
\begin{equation}\label{eq:propZmonodromy}
    \begin{pmatrix}
    \omega_{2k}\\ \eta_{2k}
    \end{pmatrix}
    =
    A^k
    \begin{pmatrix}
    1\\ 0
    \end{pmatrix} \;,
    \qquad
    A= \begin{pmatrix}
    1+ \tilde{\tau} & \tilde{\tau} \\ e^{ \imath \tau \pi(N-1) } & e^{ \imath \tau \pi(N-1) }
    \end{pmatrix} \;,
\end{equation}
which leads to Eq.~\eqref{eq:rec_sol}.
Since the eigenvalues of $A$ are $\pm e^{\imath \t\pi \frac{ N}{2} }$, we have that $A^k$ equals the identity matrix for $k=4$ if $N$ is odd, and for $k=2$ if $N$ is even. Therefore, in these two cases we have a monodromy group of order 4 and 2, respectively.
More generally, we have a monodromy group of finite order if $N$ is a rational number, and an infinite monodromy otherwise.

\paragraph{Property~\ref{propZ:4.5}.} This follows by combining Porperty~\ref{propZ:2}, which gives the asymptotic expansion of $Z^{\mathbb{R}}(g,N)$, with Eq.~\eqref{eq:Zpm} 
using $ (1+x)^{N/2-1} = \sum_{q\ge 0}  \frac{ \Gamma(N/2) }{ q! \Gamma(N/2-q) } x^{q}  $:
\begin{equation}
     Z^C_\pm(g,N)  =  e^{ \imath \tau \pi (1-\frac{N}{2}  ) }  \sqrt{2\pi}  \left(  \frac{ g} {3} \right)^{\frac{1-N}{2} }
  e^{\frac{3}{2g}} \sum_{q\ge 0} \frac{1}{ 2^{2q} q! \; \Gamma(\frac{N}{2} -2q ) } \left( \frac{2 g}{3} \right)^q \;.
\end{equation}

\paragraph{Property~\ref{propZ:5}.} In order to compute the discontinuity of the partition function, let us consider $g$ in the complex plane of the coupling constant slightly below the negative real axis, $\mathrm{Re}(g),\mathrm{Im}(g)<0$. This $g$ can be reached either counterclockwise with $Z_+$ or clockwise with $Z_-$. There is a subtlety here: if we denote this point as $g = |g|e^{\imath \varphi}$ with $\varphi \in (\pi, 3\pi/2)$ in the clockwise direction, it corresponds to $g = |g|e^{\imath (\varphi - 2\pi ) } $ in the counterclockwise direction. While only the real axis contributes to $Z_-(g,N)$ (as turning clockwise we do not cross the cut to reach it), $Z_+(g,N)$ has the additional contribution of the Hankel contour (as turning counterclockwise we cross the cut):
\begin{equation}
    Z_+(g,N) = Z^{\mathbb{R}}(g,N) + Z^C_+(g,N) \;,\qquad       Z_-(g,N) = Z^{\mathbb{R}} (e^{-2\pi \imath }g,N)  =Z^{\mathbb{R}}(g,N) \; ,
\end{equation}
where we used the fact that $Z^{\mathbb{R}}$ is single-valued. When taking the difference, the $Z^{\mathbb{R}}$ pieces cancel and the discontinuity is given by the Hankel contour contribution,
$ Z_-(g,N) -Z_+(g,N) = -  Z^C_+(g,N) $. In particular, at the negative real axis we have:
\begin{equation}
{\rm disc}_\pi \big(Z(g,N)\big) = \lim_{ \varphi \searrow \pi } \left(Z_-(e^{-2\pi \imath }g,N) - Z_+(g,N) \right) = -  Z^C_+(e^{\imath \pi}|g|,N) \;.
\end{equation}
Lastly, notice that the computation done on the other side of the cut, $\lim_{ \varphi \nearrow -\pi } \left(Z_-(g,N) - Z_+(e^{2\pi \imath }g,N) \right)$, gives the same result, because $Z^C_-(e^{-\imath \pi}|g|,N) = -  Z^C_+(e^{\imath \pi}|g|,N)$.

\paragraph{Property~\ref{propZ:7}.} 
From the intermediate field expression Eq.~\eqref{eq:Zsigma} of the partition function we find straightforwardly that $ (N+4 g\p_g) Z(g,N) = N Z(g,N+2)$.
Applying this twice, we have:
\be
(N+2+4 g\p_g) (N+4 g\p_g) Z(g,N) = N (N+2) Z(g,N+4) \;.
\ee
Integrating by parts in the right-hand side, we find that $N (N+2) Z(g,N+4)=-4! Z^\prime(g,N)$, and thus we arrive at the equation:
\begin{equation} \label{DifferentialEqVector}
	N(N+2)Z(g,N)+\left( (8N+ 24)g+24 \right)Z^\prime(g,N)+16 g^2 Z^{\prime \prime}(g,N)=0 \; .
\end{equation}
This concludes the proof of Proposition~\ref{prop:Z}.
\end{proof}

\subsection{Properties of $Z_n(g)$}
\label{app:proofZn}

\begin{proof}[Proof of Proposition~\ref{prop:Zn}]The proof is similar to the one of Proposition~\ref{prop:Z}.

\paragraph{Property~\ref{propZn:1}.} We start from:
\begin{equation}
	Z_n(g)= \int_{-\infty}^{+\infty} [d\s] e^{-\f12 \s^2} \left( \ln(1-\imath \sqrt{\f{g}{3}} \s ) \right)^n = \int_{-\infty}^{+\infty} [d\s] e^{-\f12 \s^2} \left( \ln(1-\imath e^{\imath\frac{\varphi}{2}}\sqrt{\f{|g|}{3}} \s ) \right)^n  \; ,
\end{equation}	
where we parameterized $g = |g| e^{\imath \varphi}$ with $\varphi \in (-\pi,\pi)$. 
For any real $x$ we have $ 1+|x| \ge | 1 - \imath e^{\imath \frac{\varphi}{2}} x  |  \ge \cos\frac{\varphi}{2} $, hence we can bound the logarithm as:
\begin{equation}
    \begin{split}
 |\ln (1 - \imath e^{\imath \frac{\varphi}{2}} x )|^2  
 & \le
 \pi^2 + \max\bigg\{ [ \ln ( \cos\tfrac{\varphi}{2}) ]^2,  \bigg[ \ln(1 + |x|) \bigg]^2 \bigg\} \\
& \le 2 \bigg[ \left( | \ln ( \cos\tfrac{\varphi}{2})|+1\right) \ln(e^\pi + |x| )\bigg]^2  \;,
    \end{split}
\end{equation}
which implies:
\begin{align}
 |Z_n(g)| \le \bigg| 2^{1/2} \left( |\ln ( \cos\tfrac{\varphi}{2})| +1\right) \bigg|^n  \int [d\sigma] e^{-\frac{1}{2} \sigma^2 } 
     \bigg[ \ln( e^{\pi} + \sqrt{\frac{|g |}{3}} |\sigma|) \bigg]^{n} \;.
\end{align}
We have $ \ln( e^{\pi} +  \sqrt{\frac{|g |}{3}} |\sigma|) \le \frac{1}{\ve} (e^{\pi} +  \sqrt{\frac{|g |}{3}} |\sigma| )^\ve$ for any $\ve>0$.
Cutting the integration interval into regions where 
$ e^{\pi} <  \sqrt{\frac{|g |}{3}}|\sigma|$ and regions where $ e^{\pi}> \sqrt{\frac{|g |}{3}} |\sigma|$ and extending each piece back to the entire real line we get a bound: 
\begin{equation}
\begin{split}
    | Z_n(g) | & \le \frac{ \bigg| 2^{1/2} \left( | \ln ( \cos\frac{\varphi}{2}) |+1\right) \bigg|^n   }{ \ve^n  }
     \left[ (2e^\pi)^{n\ve} + \left( \frac{4|g |}{3}\right)^{\frac{n\ve}{2}} \int [d\sigma] e^{-\frac{1}{2}  \sigma^2 }  
    |\sigma|^{n\ve}  \right] \crcr 
    & \le K^n \frac{ \bigg(  | \ln ( \cos\frac{\varphi}{2})| +1 \bigg)^n   }{ \ve^n  }
     \bigg(1 + |g|^{\frac{n\ve}{2}} \Gamma(\tfrac{n\ve+1}{2}  ) \bigg) \; .
\end{split}
\end{equation}

\paragraph{Property~\ref{propZn:2}.}
In order to derive the asymptotic expansion of $Z_n(g)$, we use Fa{\`a} di Bruno's formula:\footnote{Namely, we evaluate the $q$-th term of the Taylor expansion using the following formula (at $u=0$)
\[
    \frac{d^q}{du^q} [\ln(1 - u x)]^n 
  =\sum_{\substack{m_1,\dots m_q\\\sum km_k = q, \, \sum m_k \le n}}
     \frac{q!}{\prod_k m_k! (k!)^{m_k}} \; \frac{ n !}{ ( n- \sum_k m_k)!  } \;
     [\ln(1 - u x)]^{n - \sum_k m_k}  \prod_k \left( \frac{ -  (k-1)! x^{k} }{ (1 - ux)^k } \right)^{m_k} \;.
\]
} to expand:
\be 
\left(\ln(1-\imath \sqrt{\f{g}{3}} \s )\right)^n = \sum_{m\geq n} \left( \imath \sqrt{\f{g}{3}}\right)^m \sigma^m \sum_{\substack{m_1,\ldots, m_{m} \ge 0 \\ \sum  k m_k=m , \; \sum m_k=n}} \f{(-1)^n n!}{\prod _k k^{m_k}m_k!} \; ,
\ee
and integrate the Gaussian term by term. As a cross check, we can (formally) resum:\footnote{
In the last step, we use: $\sum_{m_1,m_2,\ldots \ge 0} \prod_{k\ge 1} \f{1}{m_k!} \left(\f{Nx^k}{2k}\right)^{m_k} 
= \exp\{\sum_{k\ge 1} \f{Nx^k}{2k} \} = (1-x)^{-N/2} = \sum_{m=0}^\infty \f{\Gamma(m+N/2)}{\Gamma(N/2)m!} x^m$.
}
\be
\begin{aligned}
&\sum_{n\ge 0}\frac{1}{n!} \left(-\frac{N}{2}\right)^n Z_n(g) \simeq \sum_{n\geq 0} \f{1}{n!} \left(-\frac{N}{2}\right)^n \sum_{m\geq n/2} \left( -\f{2g}{3}\right)^m \, \f{(2m)!}{2^{2m} m!}
\sum_{\substack{m_1,\ldots, m_{2m} \ge 0 \\ \sum k m_k=2m , \sum  m_k=n}} \f{(-1)^n n!}{\prod_{k}k^{m_k}m_k!} \crcr
&\qquad= \sum_{m=0}^\infty \left( -\f{2g}{3}\right)^m \, \f{(2m)!}{2^{2m} m!}
\sum_{\substack{m_1,\ldots, m_{2m} \ge 0\\ \sum k m_k=2m}} \prod_{k=1}^{2m}\left(\f{N}{2k}\right)^{m_k}\f{1}{m_k!}
= \sum_{m=0}^\infty \f{\Gamma(2m+N/2)}{2^{2m} m!\Gamma(N/2)} \left( -\f{2g}{3}\right)^m   \; ,
\end{aligned}
\ee
reproducing the asymptotic expansion of $Z(g,N)$ in Eq.~\eqref{eq:Zpertseries}. 

\paragraph{Property~\ref{propZn:3}.} We analytically continue $Z_n(g)$ to the extended Riemann sheet $\mathbb{C}_{3\pi/2}$ as in Property \ref{propZn:3}, Proposition \ref{prop:Zn}, by turning the integration contour by $e^{-\imath\theta}$:
\begin{equation}
    Z_{n\theta}(g)= \int_{-\infty}^{+\infty} e^{-\imath \theta}[d\s] e^{-\f12 e^{-2 \imath\theta} \s^2} \left( \ln(1-\imath e^{ \imath\frac{\varphi-2\theta}{2} }\sqrt{\f{|g|}{3}} \s ) \right)^n  \;.
\end{equation}

Using a Taylor formula with integral rest, the reminder $ R_{n\theta}^q(g) $ writes:
\begin{align}
R_{n\theta}^q(g)   & =  \int_0^1 du \; \frac{(1-u)^{2q-1}}{(2q-1)!}  \;
 \int_{ \mathbb{R}} e^{-\imath \theta}[d\sigma]  \;e^{-\frac{1}{2}e^{-2\imath\theta} \sigma^2} 
  \left(\frac{d}{du} \right)^{2q} \left( \ln(1-\imath e^{ \imath\frac{\varphi-2\theta}{2} }\sqrt{\f{|g|}{3}} \s u) \right)^n  \\
 & = \int_0^1 du \; \frac{(1-u)^{2q-1}}{(2q-1)!} 
 \int_{ \mathbb{R}} e^{-\imath \theta} [d\sigma]  \;e^{-\frac{1}{2}e^{-2\imath\theta} \sigma^2}  
 \frac{ \left(   \imath e^{ \imath\frac{\varphi-2\theta}{2} } \sqrt{\f{|g|}{3}} \s \right)^{2q} }{ \left( 1-\imath e^{ \imath\frac{\varphi-2\theta}{2} }\sqrt{\f{|g|}{3}} \s u \right)^{2q} }\crcr
&\qquad \sum_{\substack{m_1,\dots m_q\\\sum km_k = 2q, \, \sum m_k \le n}}
    \frac{ (2q)!}{\prod_k m_k! k^{m_k}} \; \frac{ (-1)^{\sum m_k} n !}{ ( n- \sum_k m_k)!  } \;
     \left( \ln(1-\imath e^{ \imath\frac{\varphi-2\theta}{2} }\sqrt{\f{|g|}{3}} \s u) \right)^{n - \sum_k m_k} 
\; . \nonumber
\end{align}
Now $| 1 - \imath \sqrt{ \frac{|g|}{3} }e^{\imath \frac{\varphi-2\theta}{2}} \sigma u | \ge \cos\frac{\varphi-2\theta}{2} $ and: 
\begin{equation}
    \frac{ \left( \ln(1-\imath e^{ \imath\frac{\varphi-2\theta}{2} }\sqrt{\f{|g|}{3}} \s u) \right)^{n - \sum_k m_k} }{   \left( 1-\imath e^{ \imath\frac{\varphi-2\theta}{2} }\sqrt{\f{|g|}{3}} \s u \right)^{2q} } \le \frac{ \bigg| 2^{1/2}( | \ln (\cos \frac{\varphi-2\theta}{2})| +1) \ln(e^{\pi} + |g| |\sigma| )\bigg|^{n-\sum m_k}  } { \left( \cos\frac{\varphi-2\theta}{2} \right)^{2q} } \;,
\end{equation}
therefore:
\begin{equation}\label{eq:bounrestZn}
\begin{split}
    |R_{n\theta}^q(g)| & \le \frac{ n! 2^{\frac{n}{2}}\bigg(| \ln (\cos \frac{\varphi-2\theta}{2})|+1 \bigg)^n }{\left( \cos\frac{\varphi-2\theta}{2} \right)^{2q} }
    \left(\frac{|g|}{3} \right)^q \frac{\Gamma(2q+\frac{1}{2})}{(2q)!} \int[d\s] e^{-\frac{1}{2} \cos (2\theta) \sigma^2 } \sigma^{2q} [\ln(e^{\pi} + |g| |\sigma| )]^n \crcr
    & \le K \frac{ \bigg( | \ln (\cos \frac{\varphi-2\theta}{2})| +1 \bigg)^n } 
    { \left( \cos\frac{\varphi-2\theta}{2} \right)^{2q} }
     \left( \frac{ |g|}{3} \right)^q \left( \frac{(2q)!}{ 2^q q! \left( \cos(2\theta)  \right)^{q+1/2}  } + 2^q \frac{  |g|^{\frac{n\ve}{2}} \Gamma(q +  \frac{n\ve+1}{2} ) }{     \left( \cos(2\theta)  \right)^{q+(n\ve+1)/2} }   \right)\;,
\end{split}
\end{equation}
for any $\ve$ with ($K$ depending on $n$ and $\ve$). Fixing for example $\ve=1$ and $\theta = \frac{\varphi}{6}$ yields the desired result as in Property \ref{propZ:3}, Proposition \ref{prop:Z}.

\paragraph{Property~\ref{propZn:4}.}
A slight variation on the bound in Property~\ref{propZn:1} yields: 
\begin{equation}\label{eq:problemeq}
    | Z_{n\theta}(g) | \le K^n \frac{ \bigg(  | \ln ( \cos\frac{\varphi-2\theta}{2})| +1 \bigg)^n   }{ \ve^n  }
     \bigg(1  + \frac{  |g|^{\frac{n\ve}{2}} \Gamma( \frac{n\ve+1}{2}  ) }{     \left( \cos(2\theta)  \right)^{(n\ve+1)/2} }  \bigg) \; .
\end{equation}
Fixing for convenience $\theta = \frac{\varphi}{6}$, one observes that $\sum_{n\ge 0}\frac{1}{n!} (-N/2)^n Z_{n\theta}(g)$ has infinite radius of convergence in $N$ for any $|\varphi|<3\pi/2$. Denoting, similarly to the notation we used for $Z(g,N)$, the analytic continuation of $Z_n(g)$ to the extended Riemann sheet $\mathbb{C}_{3\pi/2}$ by:
\begin{align}
\theta> 0 \; ,\qquad  Z_{n+}(g) = Z_{n\theta}(g) \;, \qquad \qquad
 Z_{n-}(g) = Z_{n - \theta}(g) \; ,
\end{align}
we note that the bound on the Taylor rest term in Eq.~\eqref{eq:bounrestZn} is lost for $\varphi \to \pm3\pi/2$, hence Borel summability is lost for $\varphi \to \pm\pi$. As in the case of $Z(g,N)$, after $g$ crosses the cut at $\mathbb{R}_-$, say counterclockwise, in deforming the contour of integration $e^{-\imath \theta} \mathbb{R}$ back to the steepest-descent contour along the real line, we generate a clockwise-oriented Hankel contour around $ \sigma_{\star} \times (1,+\infty)$:
\begin{equation}
\begin{split}
    & Z_{n\pm}(g)\big{|}_{\varphi\genfrac{}{}{0pt}{}{>\pi}{< -\pi}}=\int_{ e^{  \mp \imath \theta} \mathbb{R}} [d\sigma]  \; e^{-\frac{1}{2} \sigma^2} \left( \ln( 1 - \imath \sqrt{ \frac{g}{3} }  \sigma) \right)^{n}  = Z^{\mathbb{R}}_n(g) + Z^C_{n\pm}(g)  \crcr
   & Z^{\mathbb{R}}_n(g) = 
    \int_{ \mathbb{R}} [d\sigma]  \;e^{-\frac{1}{2} \sigma^2}  \left( \ln( 1 - \imath \sqrt{ \frac{g}{3} }  \sigma ) \right)^{n}  \;,\qquad  Z^C_{n\pm}(g) = \int_{C} [d\sigma] \;e^{-\frac{1}{2} \sigma^2} \left( \ln(1 - \imath \sqrt{ \frac{g}{3} } \sigma)\right)^{n}  \;.
\end{split}
\end{equation}

\paragraph{Property~\ref{propZn:5}.}

We now compute $Z_{n\pm}^{C}(g)$. As in Proposition~\ref{prop:Z} this is given by integrating along the clockwise orientated Hankel contour $\cC$. First we change variable to $\sigma = e^{-\imath \frac{\pi}{2}-\imath\frac{\varphi}{2}} \sqrt{3/ |g| } \sigma'$ and the contour becomes $C'$ clockwise oriented around $(1,\infty)$ and a shift $\sigma'=1+t$ brings $C'$ to a clockwise oriented contour around the positive real axis:
\begin{equation}
    Z_{n\pm}^{C}(g)=  \left( \frac{3}{e^{\imath \pi} g }\right)^{1/2} \int_{C'} [d\sigma']
     \; e^{\frac{3}{2g} (\sigma')^2} [ \ln(1 - \sigma') ]^n
     = \frac{1}{\sqrt{2\pi}} \left( \frac{3}{e^{\imath \pi} g }\right)^{1/2} \int_{C''} dt \; e^{\frac{3}{2g}(1+t)^2} [\ln (-t)]^n \;.
\end{equation}
A further change of variables $t = e^{\imath \tau\pi}\frac{g}{3} u$ with 
$\tau = -\sgn(\varphi)$ yields:\footnote{See comments below Eq.~\eqref{eq:ZCpm-intermediate} for an explanation of the $\t$-dependence.}
\begin{equation}\label{eq:Zn-Hankel}
\begin{split}
 Z_{n\pm}^{C}(g)= & \frac{e^{\imath \tau \pi}}{\imath} \, \frac{1}{\sqrt{2\pi}}   \sqrt{\frac{g}{3}} e^{\frac{3}{2g} } \int_0^{\infty} du \; e^{-u+\f{g}{6}u^2} \left[ ( \ln(\tfrac{e^{\imath \tau \pi}  g}{3}u) -\imath \pi )^{n} -( \ln( \tfrac{e^{\imath \tau \pi} g}{3}u) + \imath \pi )^{n} \right]
\\
=& \frac{e^{\imath \tau \pi}}{\imath} \, \frac{1}{\sqrt{2\pi}}   \sqrt{\frac{g}{3}} e^{\frac{3}{2g} }
\sum_{q\ge 0} \frac{1}{q!} \left( \frac{g}{6}\right)^q  \sum_{p=0}^n \binom{n}{p} 
\\
&\qquad \int_0^{\infty} du \; e^{-u} \; u^{2q}  (\ln u)^p \left[ ( \ln(\tfrac{e^{\imath \tau \pi}  g}{3}) -\imath \pi )^{n-p} -( \ln( \tfrac{e^{\imath \tau \pi} g}{3}) + \imath \pi )^{n-p} \right] \;,
\end{split}
\end{equation}
and integrating over $u$ we get 
\begin{equation}
\begin{split}
Z_{n\pm}^{C}(g) \simeq & \frac{e^{\imath \tau \pi}}{\imath} \, \, \f{e^{\f{3}{2g}}}{\sqrt{2\pi}} \sqrt{\f{ g}{3}}
\sum_{q=0}^{\infty} 	\f{1}{q!}\left(\f{g}{6}\right)^q \sum_{p=0}^{n}
\binom{n}{p} \f{d^p \G(z)}{dz^p}\Big|_{z=2q+1} \crcr 
& \qquad \qquad \left[\left(\ln\left(\tfrac{e^{\imath\tau\pi}g}{3}\right) - \imath \pi \right)^{n-p}
-\left(\ln\left(\tfrac{e^{\imath\tau\pi}g}{3}\right) +\imath \pi \right)^{n-p}\right] \;,
\end{split}
\end{equation}
which combined with Property \ref{propZn:2} implies the full transseries expansion Eq.~\eqref{eq:transsereiesZn}.

A good cross check of the results consist in summing over $n$:
\be\begin{aligned}
& \sum_{n\geq 0} \f{1}{n!} \left(-\frac{N}{2}\right)^n Z_{n\pm}^{C}(g)
=\frac{e^{\imath \tau \pi }}{\imath}  \f{e^{\f{3}{2g}}}{\sqrt{2\pi}}\sqrt{\f{g}{3}}
\sum_{q=0}^{\infty} 	\f{1}{q!}\left(\f{g}{6}\right)^q  \sum_{p,k=0}^{\infty} \f{1}{p!\,k!}\left(-\frac{N}{2}\right)^{p+k}  \f{d^p \G(z)}{dz^p}\Big|_{z=2q+1} \crcr
& \qquad \qquad \qquad \qquad \qquad \qquad \left[\left(\ln\left(\tfrac{e^{\imath \tau \pi}g}{3}\right) - \imath \pi \right)^{k}
-\left(\ln\left(\tfrac{e^{\imath \tau \pi}g}{3}\right) + \imath \pi \right)^{k}\right]
\\
& \qquad = \frac{e^{\imath \tau \pi }}{\imath}  \f{e^{\f{3}{2g}}}{\sqrt{2\pi}}\sqrt{\f{g}{3}}
\sum_{q=0}^{\infty} \frac{1}{q!} \left(\f{g}{6}\right)^q \Gamma(2q+\f{2-N}{2}) 
\left[  e^{\imath \frac{N\pi}{2}}  \left(\tfrac{e^{\imath \tau \pi}g}{3}  \right)^{-\f{N}{2}}
-e^{-\imath \frac{N\pi}{2}}  \left(\tfrac{e^{\imath \tau \pi}g}{3}  \right)^{-\f{N}{2}}\right]
		\\
& \qquad =  \frac{  e^{\imath \tau \pi ( 1- \frac{N}{2} ) }  }{\imath} \f{e^{\f{3}{2g}}}{\sqrt{2\pi}} \left(\frac{g}{3} \right)^{\frac{1-N}{2}}
   \sum_{q\ge 0} \frac{ \Gamma(2q+ \frac{2-N}{2}) \frac{ \sin\frac{N\pi}{2} }{\pi} 2\pi \imath }{ 2^{2q}q! } \left( \frac{2g}{3}\right)^q \;,
\end{aligned}\ee
Which coincides with the instanton part in Eq.~\eqref{eq:fullZ}.

\paragraph{Property~\ref{propZn:6}.} This follows from  $    {\rm disc}_\pi \big(Z_n(g)\big) = - Z_{n+}^C(e^{\imath \pi}|g|)$ combined with Properties \ref{propZn:4} and \ref{propZn:5}.

\paragraph{Property~\ref{propZn:7}.} The derivation of Eq.~\eqref{PDEZn} is straightforward. 
We substitute the small-$N$ expansion (\ref{eq:Z-logseries}), with the condition $Z(g,0)=1$, in  the partial differential equation for $Z(g,N)$, Eq.~(\ref{PDEpartitionfunctionN}), and collect the terms with the same powers of $N$. 

This concludes the proof of Proposition~\ref{prop:Zn}.
\end{proof}

\subsection{The LVE expansion, analyticity}
\label{app:LVE}

\begin{proof}[Proof of Proposition~\ref{prop:LVE+bounds}.]

We review here briefly the proof the LVE formula for the free energy. For more details, see \cite{Rivasseau:2007fr} and followup work. We use the notation of the Gaussian integral as a differential operator (see  \cite{bauerschmidt2019introduction} for a detailed discussion of this notation), as a convenient bookkeeping device for the action of derivatives with respect to matrix elements of the covariance of a Gaussian measure and we denote $V(\sigma) = \ln(1 - \imath \sqrt{ g/3  } \, \sigma)  $. Eq.~\eqref{eq:Z-logseries} becomes with this notation: 
\begin{equation}
     Z(g,N) =  \sum_{n\ge 0}\frac{1}{n!} \left( -\frac{N}{2}\right)^n Z_n \;,\quad
      Z_n(g) = \int[d\sigma]   \;  e^{ -  \frac{1}{2} \sigma^2 } [\ln(1 - \imath \sqrt{ g/3  } \, \sigma)]^n
      \equiv \left[ e^{ \frac{1}{2} \frac{\d  }{\d  \sigma }  \frac{\d }{\d \sigma } }    V(\sigma)^n  \right]_{\sigma=0} \;,
\end{equation}
where $\d / \d \sigma  $ denotes the derivative with respect to $\sigma$. 

Let us take aside a term $Z_n(g)$. We introduce copies with degenerate covariance and we introduce fictitious interpolating link parameters $x_{k l} = x_{lk} =1$:
\begin{equation}
Z_n(g) = \left[ e^{ \frac{1}{2} \sum_{k,l=1}^n \frac{\d  }{\d  \sigma_k }  \frac{\d }{\d \sigma_l } }  \prod_{i=1}^n V(\sigma_i)  \right]_{\sigma_i=0} = \left[ e^{ \frac{1}{2} \sum_{k,l=1}^n x_{kl}\frac{\d  }{\d  \sigma_k } \frac{\d }{\d \sigma_l } }  \prod_{i=1}^n V(\sigma_i  ) \right]_{\sigma_i=0, x_{ij}=1}   \;.
\end{equation}
We fix the diagonal elements $x_{ii}=1$ and use symmetric interpolations $x_{ij}=x_{ji}$ where $x_{ij}$ with $i<j$ are independent parameters. For all $i\neq j$:
\begin{equation}
\frac{\partial }{ \partial x_{ij}} 
 \left[ e^{ \frac{1}{2} \sum_{k,l=1}^n x_{kl}\frac{\d}{\d  \sigma_k }  \frac{\d}{\d \sigma_l } }  \right] 
 =  e^{ \frac{1}{2} \sum_{k,l=1}^n x_{kl}\frac{\d}{\d  \sigma_k }  \frac{\d}{\d \sigma_l } }   
  \;  \left( \frac{\d}{\d \sigma_i }   \frac{\d  }{\d \sigma_j } \right) \; ,
\end{equation}
and using Appendix~\ref{app:BKAR}, we get:
\begin{equation} 
Z(g,N)  = \sum_{n\ge 0} \frac{\left( -\frac{N}{2}\right)^n}{n!}  \sum_{\cF \in F_n } \int_{0}^1  \prod_{ (i,j) \in \cF } du_{ij}   \Bigg[  e^{ \frac{1}{2} \sum_{k,l} w_{kl}^{\cF} \frac{\d }{\d \sigma_k }   
  \frac{\d }{\d \sigma_l } }   
\left( \prod_{ (i,j)\in \cF } \frac{\d}{\d \sigma_i } \frac{\d}{\d \sigma_j} 
    \right) \prod_{i=1}^n V( \sigma_i )  \Bigg]_{\sigma_i = 0 } \; ,    
\end{equation}
where $F_n$ is the set of all the forests over $n$ labeled vertices. 

Observing that the contribution of a forest factors over the trees (connected components) in the forest, the logarithm is trivial: it comes to restricting the sum above to trees over $n$ vertices (hence with $n-1$ edges):  
\begin{equation}
W(g,N)  =  \sum_{n\ge 1} \frac{\left( -\frac{N}{2}\right)^n}{n!}   \sum_{ \cT \in T_n }
 \int_{0}^1 \prod_{ (i,j) \in \cT } du_{ij} 
\Bigg[ e^{ \frac{1}{2} \sum_{k,l} w_{kl}^{\cT} \frac{\d }{\d \sigma_k } \frac{\d }{\d \sigma_l } } 
\left( \prod_{ (i,j)\in \cT } \frac{\d }{\d \sigma_i } \frac{\d }{\d \sigma_j  }  \right) \prod_{i=1}^n  V(\sigma_i )   \Bigg]_{\sigma_i =0 } \; .    
\end{equation}
Taking into account that the action of the derivatives on the interaction is:
\begin{equation}
 \frac{\d^d}{\d \sigma^d}    \ln\big( 1 -\imath \sqrt{\tfrac{g}{3}}   \sigma \big) = (-1) \; \frac{ (d-1)!\left( \imath \sqrt{\tfrac{g}{3}  }\right)^d  } { \left(  1 -\imath \sqrt{\tfrac{g}{3}} \sigma \right)^d } \; ,  
\end{equation}
denoting $d_i$ the degree of the vertex $i$ in $\cT$, recalling that $\sum_i d_i = 2(n-1)$ and observing that the terms $n=1$ is special we get:
\be
\begin{split}
& W (g,N) =  - \frac{N}{2} \Bigg[  e^{ \frac{1}{2}  \frac{\d }{\d \sigma} \frac{\d }{\d \sigma} } 
   \ln\Big( 1 -\imath \sqrt{\tfrac{g}{3}}   \sigma \Big) \Bigg]_{\sigma =0 } \\
& \; - \sum_{n\ge 2} \frac{1}{n!} \left( -\frac{N}{2} \right)^n \left( \frac{g}{3} \right)^{n-1} \sum_{ \cT \in T_n } 
 \int_{0}^1  \prod_{ (i,j) \in \cT } du_{ij} 
\Bigg[  e^{ \frac{1}{2} \sum_{i,j} w_{ij}^{\cT}  \frac{\d }{\d \sigma_i} \frac{\d }{\d \sigma_j} } 
\;\; \prod_{i} \frac{(d_i-1)!}{ \left(   1 -\imath \sqrt{\tfrac{g}{3}}   \sigma_i \right)^{d_i} } \Bigg]_{\sigma_i =0 }\; .
\end{split}
\ee
Reinstating the notation of the normalized Gaussian measure as a probability density, we obtain the series in Eq.~\eqref{eq:W_BKAR}, with coefficients \eqref{eq:W_BKARterm}.

\bigskip

\paragraph{Property~\ref{propLVE:1} and ~\ref{propLVE:2}.}
In order to determine the domain of convergence of 
\eqref{eq:W_BKAR}, let us denote as usual $g = |g| e^{\imath\varphi }$, and take $-\pi < \varphi < \pi$. In this region, we can use the uniform bound
$|1 -\imath \sqrt{\tfrac{g}{3}}   \sigma_i  |  \ge \cos \frac{ \varphi}{2}$, remaining with Gaussian integrals over $\sigma_i$ and integrals over $u_{ij}$, both bounded by 1. Therefore, for $n\ge 2$, a crude bound on $W_n(g)$ is:
\begin{equation}
  |W_n(g)|   \le   \left| \frac{g}{3} \right|^{n-1} \frac{ (n-2)! }{ ( \cos\frac{\varphi}{2} )^{2(n-1)} } \sum_{d_i\ge 1}^{\sum_{i=1}^n (d_i-1)  =n-2} 1 = \frac{(2n-3)!}{(n-1)!}
 \left| \frac{g}{3 (\cos\frac{\varphi}{2} )^{2} } \right|^{n-1}    \;,          
 \end{equation}
where we used the fact that there are $(n-2)!/\prod_i (d_i-1)!$ trees over $n$ labeled vertices with degree $d_i$ at the vertex $i$, and that the sum over $d_i$ is the coefficient of $x^{n-2}$ in the expansion of $(1-x)^{-n}  =\sum_{q\ge 0} \binom{n-1+q}{n-1} \; x^q  $. Using Stirling's formula for the asymptotics of the factorials, it follows that $W(g,N)$ is convergent in a cardioid domain $|gN|<\frac{3}{2}  (\cos\frac{\varphi}{2} )^{2}$. 

\bigskip

\paragraph{Property~\ref{propLVE:3}.} 
The domain of convergence can be enlarged by turning $\sigma\to e^{-\imath \theta} \sigma$, which yields a convergent expansion in a subdomain of the extended Riemann sheet $\mathbb{C}_{3\pi/2}$. In order to make this precise, let us define:
\be\label{eq:contsig}
\begin{split}
&W_\theta(g,N) = - \frac{N}{2} \int  e^{-\imath \theta} [d\sigma]  \; e^{-\frac{1}{2} e^{-2\imath \theta} \sigma^2 } 
   \ln\Big( 1 -\imath \sqrt{\tfrac{g}{3}}   e^{-\imath \theta }\sigma \Big) - \sum_{n\ge 2} \frac{1}{n!} \left( -\frac{N}{2} \right)^{n}\left( \frac{g}{3} \right)^{n-1}  \\
 &\qquad \times \sum_{ \cT \in T_n  } 
 \int_{0}^1  \prod_{ (i,j) \in \cT } du_{ij} 
\int_{\mathbb{R}} \frac{\prod_i e^{-\imath \theta} [ d\sigma_i]  }{\sqrt{ \det w^\cT }}  \; e^{ - \frac{1}{2} e^{-2\imath \theta}\sum_{i,j} \sigma_i ( w^\cT)_{ij}^{-1}  \sigma_j } 
\;\; \prod_{i} \frac{(d_i-1)!}{ \left(   1 -\imath \sqrt{\tfrac{g}{3}}   e^{-\imath \theta}\sigma_i \right)^{d_i} }  \;,
\end{split}
\ee
and implicitly $W_{n\theta}(g)$. It is easy to check that, for $g$ in the cut plane $\mathbb{C}_{\pi}$ both $W_{\theta}(g,N)$ and $W_{n\theta}(g)$ are independent of $\theta$ as long as they converge. As $\theta$ can be chosen such that the integrals converge for $|\varphi|>\pi$, $W_{\theta}(g,N)$ and $W_{n\theta}(g)$ analytically extend $W(g,N)$ and $W_n(g)$ to some maximal domain.

\paragraph{Property~\ref{propLVE:4} and~\ref{propLVE:5}.} We now have the bound $ |  1 -\imath \sqrt{\tfrac{g}{3}}   e^{-\imath \theta}\sigma_i | \ge \cos\frac{\varphi - 2\theta}{2} $ and the Gaussian integral is bounded by:
\begin{equation}
\left| \int \frac{\prod_i e^{-\imath \theta} [d\sigma_i]  }{\sqrt{ \det w^\cT }}  \; e^{ - \frac{1}{2} e^{-2\imath \theta}\sum_{i,j} \sigma_i ( w^\cT)_{ij}^{-1}  \sigma_j }   \right| \le \frac{1}{[ \cos(2\theta)]^{\frac{n}{2} } }     \;,
\end{equation}
and the combinatorics runs as before leading to:
\begin{equation}
  |W_{n\theta}(g)|  \le \frac{(2n-3)!}{(n-1)!}  \frac{1}{\sqrt{ \cos(2\theta)  }} \left| \frac{g}{3
   \sqrt{\cos(2\theta)} \left( \cos\frac{\varphi-2\theta}{2} \right)^2
  } \right|^{n-1}    \;.
\end{equation}
Using again Stirling's formula, we find that $W(g,N)$  is convergent for
$|g| <  \frac{1}{|N|} \frac{3}{2}\;  \sqrt{ \cos(2\theta) }  \; \left( \cos\frac{\varphi-2\theta}{2} \right)^{2} $. This concludes the proof of  Proposition~\ref{prop:LVE+bounds}.
\end{proof}

\subsection{Borel summability of $W_n(g)$ and $W(g,N)$ in $\mathbb{C}_\pi$}
\label{app:BSW}

\begin{proof}[Proof of Proposition~\ref{propW:BS}.] We focus on $W(g,N)$: for $W_n(g)$ one follows the same steps without summing over $n$.
Our starting point is the expression:
\be
\begin{split}
W_\theta(g,N) = & - \frac{N}{2} \Bigg[  e^{ \frac{1}{2} e^{2\imath \theta} \frac{\d }{\d \sigma} \frac{\d }{\d \sigma} } 
   \ln\Big( 1 -\imath \sqrt{\tfrac{g}{3}}  e^{-\imath \theta} \sigma \Big) \Bigg]_{\sigma =0 } - \sum_{n\ge 2} \frac{1}{n!} \left( -\frac{N}{2}\right)^{n} \left( \frac{g}{3} \right)^{n-1}   \\
&\qquad \qquad \times \sum_{ \cT \in T_n }
 \int_{0}^1 \Bigl( \prod_{ (i,j) \in \cT } du_{ij} \Bigr)  
\Bigg[  e^{ \frac{1}{2} e^{2\imath \theta}\sum_{i,j} w_{ij}^{\cT}  \frac{\d }{\d \sigma_i} \frac{\d }{\d \sigma_j} } 
\;\; \prod_{i} \frac{(d_i-1)!}{ \left(   1 -\imath \sqrt{\tfrac{g}{3}}   e^{-\imath \theta}\sigma_i \right)^{d_i} } \Bigg]_{\sigma_i =0 }\; .
\end{split}
\ee
for the analytical continuation $W_\theta(g,N)$ of the free energy.

We are interested in finding a good bound on the Taylor rest term of order $q$ of the expansion of 
$W(g,N)$. This Taylor rest term consists in two pieces. We denote $Q^q_\theta(g)$ the sum over all the trees having at least $q$ edges: 
\begin{equation}
\begin{split}
        Q^q_\theta(g,N) = & - \sum_{n\ge q+1}  \frac{1}{n!} \left( -\frac{N}{2}\right)^{n} \left( \frac{g}{3} \right)^{n-1}    \crcr
    & \qquad \times \sum_{ \cT \in T_n }
 \int_{0}^1 \Bigl( \prod_{ (i,j) \in \cT } du_{ij} \Bigr)  
\Bigg[  e^{ \frac{1}{2} e^{2\imath \theta}\sum_{i,j} w_{ij}^{\cT}  \frac{\d }{\d \sigma_i} \frac{\d }{\d \sigma_j} } 
\;\; \prod_{i} \frac{(d_i-1)!}{ \left(   1 -\imath \sqrt{\tfrac{g}{3}}   e^{-\imath \theta}\sigma_i \right)^{d_i} } \Bigg]_{\sigma_i =0 }\; .
\end{split}
\end{equation}
Due to the overall $g^{n-1}$ factor, such trees contribute only to orders higher than $q$ in the Taylor expansion of $W_\theta(g,N)$, hence are entirely contained in the Taylor rest term. The trees with less than $q$ edges contribute both to the explicit terms of order lower than $q$ in the Taylor expansion and to the rest term. In order to isolate their contribution to the rest term we perform for each of them a Taylor expansion with integral rest up to an appropriate order. We do this by Taylor expanding the Gaussian measure, which generates loop edges, each of which comes equipped with a $g$.
For a tree with $n-1$ edges, $n-q$ explicit loop edges need to be expanded. The Taylor rest term coming from such trees writes then:
\begin{equation}
\begin{split}
 P^q_\theta(g,N) = &  - \frac{N}{2} \int_0^1 \frac{ (1-t)^{q-1} }{(q-1)! }  \left( \frac{d}{dt}\right)^q \Bigg[  e^{ \frac{t}{2} e^{2\imath \theta} \frac{\d }{\d \sigma} \frac{\d }{\d \sigma} } 
   \ln\Big( 1 -\imath \sqrt{\tfrac{g}{3}}  e^{-\imath \theta} \sigma \Big) \Bigg]_{\sigma =0 } \crcr
   & - \sum_{n=2}^{q}   \frac{1}{n!} \left( -\frac{N}{2}\right)^{n} \left( \frac{g}{3} \right)^{n-1}    \sum_{ \cT \in T_n } 
   \int_{0}^1 \Bigl( \prod_{ (i,j) \in \cT } du_{ij} \Bigr)  \int_0^1 dt \; \frac{(1-t)^{q-n}}{(q-n)!} \crcr
 & \qquad \times \left( \frac{d}{dt}\right)^{q-n+1}
\Bigg[  e^{ \frac{t}{2}  e^{2\imath \theta}\sum_{i,j} w_{ij}^{\cT}  \frac{\d }{\d \sigma_i} \frac{\d }{\d \sigma_j} } 
\;\; \prod_{i} \frac{(d_i-1)!}{ \left(   1 -\imath \sqrt{\tfrac{g}{3}}   e^{-\imath \theta}\sigma_i \right)^{d_i} } \Bigg]_{\sigma_i =0 }\; .
\end{split}
\end{equation}
The total rest term of the Taylor expansion of $W_\theta(g,N)$ is the sum $R_{\theta}^q(g,N) = P_\theta^q(g,N) + Q_\theta^q(g,N)$. 

\paragraph{The contribution of large trees $Q_\theta^q(g,N)$.} Using Eq.~\eqref{eq:boundW}, the contribution of large trees is immediately bounded by:
\begin{equation}
|Q_\theta^q(g,N) | \le \frac{1}{\sqrt{\cos(2\theta)}}  \sum_{n\ge q+ 1} \left| \frac{N}{2} \right|^{n} \left| \frac{g}{3} \right|^{n-1} \frac{1}{ \big[  \sqrt {\cos(2\theta) }  (\cos\frac{\varphi-2\theta}{2} )^{2} \big]^{n-1} } 
\; \frac{1}{n(n-1) }  \binom{2n-3}{n-1} \; ,
\end{equation}
and using $ \frac{1}{n(n-1) }  \binom{2n-3}{n-1}  \le 2^{2n-3}$ we get:
\begin{equation}\label{eq:bound1}
\begin{split}
|Q_\theta^q(g,N) | \le & \frac{|N| }{4 \sqrt{\cos(2\theta)}} 
\sum_{n\ge q+1}  \left(  \frac{1}{ \frac{3}{2}  \sqrt {\cos(2\theta) }  (\cos\frac{\varphi-2\theta}{2} )^{2}}  \right)^{n-1} |Ng|^{n-1} \crcr
\le &  \frac{\frac{ |N| }{4 \sqrt{\cos(2\theta)}}}{ 1 - \frac{|N g|}{  \frac{3}{2}  \sqrt {\cos(2\theta) }  (\cos\frac{\varphi-2\theta}{2} )^{2}  } } \left( \frac{1}{  \frac{3}{2}  \sqrt {\cos(2\theta) }  (\cos\frac{\varphi-2\theta}{2} )^{2} } \right)^q |N g|^q \;.
\end{split}
\end{equation}
Observe that this bound  \emph{does not} have a $q!$ growth. This seems much better than expected. In fact this is not surprising: the factorial growth of the rest term comes from the divergent number of graphs in perturbation theory. As the large trees do not need to be further expanded in graphs, they do not generate a factorial growth. In fact it is the small trees that pose a problem.

\paragraph{The contribution of small trees $P_\theta^q(g,N)$.} The contribution of small trees requires more work. We have:
\begin{equation}
\begin{split}
 P^q_\theta(g,N) = & -\frac{N}{2}  \int_0^1 \frac{ (1-t)^{q-1} }{(q-1)! } \Bigg[  e^{ \frac{t}{2} e^{2\imath \theta} \frac{\d }{\d \sigma} \frac{\d }{\d \sigma} } 
  \left( \frac{1}{2} e^{2\imath \theta}  \right)^q   \left( \frac{\d }{\d \sigma}\right)^{2q} \ln\Big( 1 -\imath \sqrt{\tfrac{g}{3}}  e^{-\imath \theta} \sigma \Big) \Bigg]_{\sigma =0 } \crcr
   & - \sum_{n=2}^{q}  \frac{1}{n!} \left( -\frac{N}{2}\right)^n \left( \frac{g}{3} \right)^{n-1}   \sum_{ \cT \in T_n } 
   \int_{0}^1 \Bigl( \prod_{ (i,j) \in \cT } du_{ij} \Bigr)  \int_0^1 dt \; \frac{(1-t)^{q-n}}{(q-n)!} \crcr
 &  \times  
\Bigg[  e^{ \frac{t}{2}  e^{2\imath \theta}\sum_{i,j} w_{ij}^{\cT}  \frac{\d }{\d \sigma_i} \frac{\d }{\d \sigma_j} }  \left(  \frac{1}{2}  e^{2\imath \theta}\sum_{i,j} w_{ij}^{\cT}  \frac{\d }{\d \sigma_i} \frac{\d }{\d \sigma_j}   \right)^{q-n+1}
\;\; \prod_{i} \frac{(d_i-1)!}{ \left(   1 -\imath \sqrt{\tfrac{g}{3}}   e^{-\imath \theta}\sigma_i \right)^{d_i} } \Bigg]_{\sigma_i =0 }\; .
\end{split}
\end{equation}
The first term is bounded by:
\begin{equation}\label{eq:bound2}
\begin{split}
& \frac{|N|}{2^{q+1}}  \left| \int_0^1 \frac{ (1-t)^{q-1} }{(q-1)! } \Bigg[  e^{ \frac{t}{2} e^{2\imath \theta} \frac{\d }{\d \sigma} \frac{\d }{\d \sigma} }   \frac{ (2q-1)!\left( \imath \sqrt{\tfrac{g}{3}  }\right)^{2q}  } { \left(  1 -\imath \sqrt{\tfrac{g}{3}} e^{-\imath \theta }\sigma \right)^{2q} }
 \Bigg]_{\sigma =0 } \right|  
 \crcr 
 & \le \frac{|N|}{2\sqrt{\cos(2\theta)}} \left( \frac{ |g| }{ 6 } \right)^q \frac{(2q-1)!}{q!} \; 
 \frac{1}{ \left(  \cos \frac{\varphi -2\theta }{2} \right)^{2q} } 
 \le \frac{|N|}{ 4 \sqrt{\cos(2\theta)} }  \; (q-1)!\;
 \left(  \frac{1}{ \frac{3}{2} ( \cos \frac{\varphi -2\theta }{2} )^2 } \right)^q    \; |g|^q  \;.
\end{split}
\end{equation}

Observe that, due to the expansion of the $q$ loop edges, this first term displays a $q!$ growth. Thus already the first term has a worse large-order behaviour than the sum over large trees.

We now analyze the contribution to the rest term of the trees with $2\le n \le q$ vertices.
Note that a term has initially $\sum d_i = 2(n-1) $ corners (factors $1 / ( 1 -\imath \sqrt{\tfrac{g}{3}}   e^{-\imath \theta}\sigma_i ) $) and $2(q-n+1)$ derivatives will act on it. The additional derivatives (corresponding to the loop edges) create each a new corner on which subsequent derivatives can act. In total, for each tree, the loop edges generate exactly:
\begin{align}
     [ 2(n-1) ]      [ 2(n-1) + 1 ] \dots [ 2(n-1) + 2 (q-n+1) - 1 ] =\frac{ (2q - 1)! }{(2n - 3)! } \;,
\end{align}
possible terms, each with $ 2(n-1) + 2 (q-n+1) $ corners (factors $1 / ( 1 -\imath \sqrt{\tfrac{g}{3}}   e^{-\imath \theta}\sigma_i ) $). The $w$'s are bounded by $1$ and the sum over trees is done as in 
Eq.~\eqref{eq:boundW} leading to the global bound:
\begin{equation}
\begin{split}
 & \sum_{n=2}^q  \frac{ |N|^{n}|g|^{q}}{ 2^{q+1} 3^{q} } \; \frac{(2q-1)!}{(2n-3)!} \; \frac{1}{(q-n+1)!} \;
  \frac{(n-2)!}{n!  }  \binom{2n-3}{n-1}  
  \left( \frac{1}{  \cos\frac{\varphi - 2\theta}{2} } \right)^{2q} \frac{1}{ [\cos(2\theta)]^{n/2} } \crcr
 & \le \frac{ (2q-1)! \; |g|^q}{ 2^{q+1} 3^q (  \cos\frac{\varphi - 2\theta}{2} )^{2q} }
   \sum_{n=2}^q  \frac{1}{(q-n+1)!n!(n-1)!} \;  \frac{|N|^n}{ [\cos(2\theta)]^{n/2} } \;.
\end{split}
\end{equation}
Now, using $(2q-1)! \le 2^{2q-1}q!(q-1)!$ we get and upper bound:
\begin{equation}\label{eq:bound3}
\begin{split}
& \frac{1}{4} (q-1)! \left( \frac{1} { \frac{3}{2} (\cos\frac{\varphi-2\theta}{2})^2 }  \right)^q |g|^q    
  \sum_{n=2}^q  \frac{q!}{(q-n+1)!n!(n-1)!} \;  \frac{|N|^n}{ [\cos(2\theta)]^{n/2} } \crcr
&  \le \frac{1}{4}    (q-1)! \left( \frac{1} { \frac{3}{2} (\cos\frac{\varphi-2\theta}{2})^2 }  \right)^q |g|^q \; 2^{q} e^{\frac{|N|}{ \sqrt{ \cos(2\theta) } } } \;.
\end{split}
\end{equation}
Adding up Eq.~\eqref{eq:bound1},~\eqref{eq:bound2} and~\eqref{eq:bound3} we get:
\begin{equation}
\begin{split}
|R^q_\theta(g,N)| \le & \frac{\frac{|N|}{4 \sqrt{\cos(2\theta)}}}{ 1 - \frac{|Ng|}{  \frac{3}{2}  \sqrt {\cos(2\theta) }  (\cos\frac{\varphi-2\theta}{2} )^{2}  } } \left( \frac{1}{  \frac{3}{2}  \sqrt {\cos(2\theta) }  (\cos\frac{\varphi-2\theta}{2} )^{2} } \right)^q |Ng|^q \crcr 
& + \frac{1}{4}  (q-1)!\;
 \left(  \frac{1}{ \frac{3}{2} ( \cos \frac{\varphi -2\theta }{2} )^2 } \right)^q    \; |g|^q  
 \left(  \frac{|N|}{  \sqrt{\cos(2\theta)} }  + 2^q  e^{\frac{|N|}{ \sqrt{ \cos(2\theta) } } } \right) \;.
\end{split}
\end{equation}
We opportunistically chose $\theta = \frac{\varphi}{6}$ and using some trivial  bounds we get
\begin{equation}
     |R^q_\theta(g,N)| \le \frac{ \frac{|N|}{ 4 \sqrt{ \cos\frac{\varphi}{3} } } }{1 - \frac{|Ng|}{ \frac{3}{2} (\cos\frac{\varphi}{3})^{5/2} }} \left( \frac{1 }{ \frac{3}{2} (\cos\frac{\varphi}{3})^{5/2} } \right)^q |N g|^q + (q-1)!  \; |g|^q \left( \frac{1 }{ \frac{3}{2} (\cos\frac{\varphi}{3})^{2} } \right)^q
      2^{q} e^{\frac{|N|}{ \cos(\frac{\varphi}{3})^{1/2}  } } \;.
\end{equation}

We are now in the position to prove that $W(g,N)$ is Borel sumable along all the directions in the cut plane
$\mathbb{C}_\pi$ by verifying the conditions of Theorem~\ref{thm:Sokal}, Appendix~\ref{app:Sokal}. This comes about as follows:
\begin{itemize}
\item let us fix some $\alpha\in (-\pi,\pi)$. As we are interested in analyticity and rest bound in some Sokal disk extending op to $\varphi= \alpha \pm \pi/2 $, 
we denote $c = \min\{ \cos \frac{\alpha + \pi/2}{3}) ; \cos \frac{\alpha - \pi/2}{3}) \} > 0 $ (because as $|\alpha|<\pi$).
    
\item $W(g,N)$ can be analytically continued via $W_{\varphi/6}(g,N) $\footnote{Indeed, $W_{\varphi/6}(g,N) = Q^0_{\varphi/6}(g,N)$ and the series defining it converges absolutely in such a disk $|Q^0_{\varphi/6}(g,N)  | \le 1/ ( 4c^{1/2} \nu )$.} to any $g$ in a Sokal disk (with $0$ on its boundary) tilted by $\alpha$, that is $g \in {\rm Disk}^{\alpha}_R = \{ z \mid  \mathrm{Re}(e^{\imath \alpha} / z)  > 1/R  \} $ provided that $ R  = \frac{3}{2|N|} c^{5/2} (1 - \nu) $ for some fixed $\nu>0$.
In this disk the Taylor rest term obeys the uniform bound:
\begin{equation}
 |R^q_\theta(g)|  \le \frac{|N|}{4c^{1/2} \nu } \left(\frac{1}{\frac{3}{2} c^{5/2} }\right)^q |g|^q  
 + (q-1)! |g|^q \left(\frac{2}{\frac{3}{2} c^{2} }\right)^q e^{|N|/c^{1/2}} \;.
\end{equation}
Fixing $K = \max\{ |N|/ ( 4c^{1/2} \nu ) , e^{|N|/c^{1/2}} \}$  and 
$\rho =\min \{ 3/2 \, c^{5/2};  3/4 \, c^{2} ) \} $, we finally obtain uniformly in the Sokal disk:
\begin{equation}
|R^q_\theta(g,N)| \le K \; q! \; \rho^{-q} \; |g|^q \; ,
\end{equation}
hence the rest obeys the bound in Eq.~\eqref{eq:NS-bound} and from Theorem~\ref{thm:Sokal} we conclude that $W(g,N)$ is Borel summable along $\alpha$.

Note that Borel summability is lost in the $N\to \infty$ limit. 
    
\end{itemize}
This concludes the proof of Proposition~\ref{propW:BS}.
\end{proof}

\subsection{Transseries expansion of $W_n(g)$ and $W(g,N)$}
\label{app:Trans}

\begin{proof}[Proof of Proposition~\ref{prop:TSW}]
An explicit expression for $W_n(g)$ is obtained by combining the M\"obis inversion formula in Eq.\eqref{eq:Moeb} with the transseries expansion of $Z_n(g)$ in Eq.~\eqref{eq:transsereiesZn}. In order to prove Proposition~\ref{prop:TSW}, we have to use standard sum and product manipulation tricks and factor the powers of the transseries monomial $e^{\f{3}{2g}}$ in front. Although the manipulations are not complicated, the expressions are very lengthy and we will introduce some bookkeeping notation to keep the formulas readable.

\paragraph{Property~\ref{propTSW:1}.}
Combining the M\"obis inversion formula \eqref{eq:Moeb} with the transseries expansion of $Z_n(g)$ in Eq.~\eqref{eq:transsereiesZn} leads to:
\begin{equation}\label{eq:Prop5proof1}
 W_n (g) \simeq \sum_{k = 1}^n (-1)^{k-1} (k-1)! \sum_{ \substack{n_1, \ldots, n_{n-k+1} \ge 0 \\ \sum in_i = n ,\, \sum n_i = k }}  \frac{n!}{\prod_i n_i! (i!)^{n_i}}
\prod_{i=1}^{n-k+1} \Bigg(Z_i^{\rm pert.}(g) + \eta \, e^{\f{3}{2g}} Z_i^{(\eta)}(g)\Bigg)^{n_i}
\;,
\end{equation}
with:
\be
Z_i^{\rm pert.}(g)=\sum_{a=0}^\infty \left(-\frac{2g}{3}\right)^a \; G(a;i) \; ,
\qquad
G(a;i) = \f{(2a)!}{2^{2a} a!} \sum_{\substack{a_1,\ldots,a_{2a-i+1} \ge 0  \\ \sum k  a_k=2a , \; \sum a_k=i}} \f{(-1)^i i!}{\prod_k k^{a_k}a_k!} \; ,
\ee
and:
\be\begin{aligned}
Z_i^{(\eta)}(g)=   \f{\imath }{\sqrt{2\pi}}\sqrt{ \f{  g}{3}}
& \sum_{a=0}^{\infty} \sum_{b=0}^{i} \f{1}{a!} \left(\f{g}{6}\right)^a \binom{i}{b} \f{d^b \G(z)}{dz^b}\Big|_{z=2a+1} \crcr
& \tau \left[  \left(\ln\left(\tfrac{g}{3}\right)  \right)^{i-b} - \left(\ln\left(\tfrac{g}{3}\right) + \tau 2  \pi \imath \right)^{i-b} \right] \; ,
\end{aligned}\ee
where we used $\left[\left(\ln( e^{\imath \tau \pi } \f{g}{3}) -i\pi \right)^{i-b}
-\left(\ln( e^{\imath \tau \pi } \f{g}{3})  +i\pi \right)^{i-b}\right]
=\tau \left[  \left(\ln\left(\tfrac{g}{3}\right)  \right)^{i-b} - \left(\ln\left(\tfrac{g}{3}\right) + \tau 2  \pi \imath \right)^{i-b} \right]
$ as $\tau = \pm$. 
Using now $
\tau \left[  \left(\ln\left(\tfrac{g}{3}\right)  \right)^{n}
-\left(\ln\left(\tfrac{g}{3}\right) + \tau 2  \pi \imath \right)^{n} \right] = -\tau \sum_{c=0}^{n-1} \binom{n}{c} \left(\ln\left(\tfrac{g}{3}\right)\right)^{c}\left(\tau 2  \pi \imath \right)^{n-c}$, commuting the sums over $b$ and $c$, and combining the binomials, this can further be written as:
\begin{equation}\begin{aligned}
Z_i^{(\eta)}(g) &=
\f{\imath}{\sqrt{2\pi}}\sqrt{\f{g}{3}}\; \sum_{a=0}^\infty \sum_{b=0}^{i-1}
\f{1}{a!} \left(\f{g}{6}\right)^a \binom{i}{b} \f{d^b \G(z)}{dz^b}\Big|_{z=2a+1}
(-\tau) \sum_{c=0}^{i-1-b} \binom{i-b}{c} \left(\ln\left(\tfrac{g}{3}\right)\right)^{c}\left(\tau 2  \pi \imath \right)^{i-b-c} \\
&= \sqrt{2\pi} \sqrt{\f{g}{3}} \; \sum_{a=0}^\infty \sum_{c=0}^{i-1} \left(\f{g}{6}\right)^a \left(\ln\left(\tfrac{g}{3}\right)\right)^{c} \; G(a,c;i)
\; ,
\end{aligned}\end{equation}
with:
\begin{equation}
G(a,c;i) = \sum_{b=0}^{i-1} \left(\imath\tau 2  \pi \right)^{i-1-b-c} \f{i!}{a!\,b!\,c!\,(i-b-c)!} \f{d^b \G(z)}{dz^b}\Big|_{z=2a+1}
\; , \quad\text{and}\quad Z_0^{(\eta)}=0
\; .
\end{equation}
First, we pull the transseries monomial to the front in Eq.~\eqref{eq:Prop5proof1}:
\begin{equation}\label{eq:Prop5proof2}\begin{aligned}
& W_n(g) \simeq \sum_{k = 1}^n (-1)^{k-1} (k-1)! \sum_{ \substack{n_1,\ldots ,n_{n-k+1} \ge 0 \\ \sum in_i = n ,\, \sum n_i = k }}  \frac{n!}{\prod_i n_i! (i!)^{n_i}} \ \prod_{i=1}^{n-k+1} \big(Z_i^{\rm pert.}(g) + \eta e^{\frac{3}{2g}} Z_i^{(\eta)}(g)\big)^{n_i}
\\
&= \sum_{p=0}^n e^{\frac{3}{2g}p} \sum_{\substack{k = p \\ k+p\geq 1}}^{n} (-1)^{k-1} (k-1)! \kern-1em \sum_{ \substack{n_1,\ldots,n_{n-k+1} \ge 0 \\ \sum in_i = n ,\, \sum n_i = k }} \ \sum_{\substack{\{0\leq p_i\leq n_i\} \\ _{i=1,\dots,n-k+1} \\ \sum p_i=p}} 
\frac{n!\prod_{i=1}^{n-k+1}  \big(Z_i^{\rm pert.}(g)\big)^{n_i-p_i} \big(\eta Z_i^{(\eta)}(g)\big)^{p_i} }{\prod_i (n_i-p_i)!p_i! (i!)^{n_i}}
\; .
\end{aligned}\end{equation}
This is a transseries in $g$, with $Z_i^{(\eta)}(g)$ carrying also powers of $\sqrt{g}$ and $\ln(g)$. In a second step, we want to make this statement manifest and use the expressions for $Z_i^{\rm pert.}$ and $Z_i^{(\eta)}$ to calculate:
\begin{equation}\begin{aligned}
\prod_{i\ge 1}  &\big(Z_i^{\rm pert.}(g)\big)^{n_i-p_i} \big(\eta Z_i^{(\eta)}(g)\big)^{p_i} = \prod_{i\ge 1} \Bigg( \sum_{a^i_1,\dots a^i_{n_i-p_i}\geq 0} \prod_{j=1}^{n_i-p_i} \left(-\f{2g}{3}\right)^{a^i_j} G(a^i_j;i) \Bigg)
\\ & \qquad\qquad\quad \cdot
\Bigg( \Big(\eta\sqrt{2\pi} \sqrt{\f{g}{3}}\Big)^{p_i} \sum_{a^i_1,\dots a^i_{p_i}\geq 0} \sum_{c^i_1,\dots c^i_{p_i}= 0}^{i-1} \prod_{j=1}^{p_i} \left(\f{g}{6}\right)^{a^i_j} \left(\ln\left(\tfrac{g}{3}\right)\right)^{c^i_j} G(a^i_j,c^i_j;i) \Bigg) \;,
\\
=& \Big(\eta\sqrt{2\pi} \sqrt{\f{g}{3}}\Big)^{\sum_{i\geq 1}p_i} \sum_{\substack{l\geq 0\\ \sum_{i}(i-1)p_i\geq l\geq 0}} g^l\left(\ln\left(\tfrac{g}{3}\right)\right)^{l'} \sum_{\substack{\{a^i_j\geq 0\}^{i=1,\dots,n-k+1}_{j=1,\dots,n_i} \\ \sum_i\sum_j a^i_j=l}} \sum_{\substack{\{i-1\geq c^i_j\geq 0\}^{i=1,\dots,n-k+1}_{j=1,\dots,p_i} \\ \sum_i\sum_j c^i_j=l'}}
\\ & \qquad\qquad\quad \cdot
\left(\f{1}{6}\right)^{\sum_{i\geq 1}\sum_{j=1}^{p_i}a^i_j} \left(-\f{2}{3}\right)^{\sum_{i\geq 1}\sum_{j=p_i+1}^{n_i}a^i_j} 
\prod_{i\geq 1} \Bigg(\prod_{j=1}^{p_i} G(a^i_j,c^i_j;i)\Bigg)\Bigg(\prod_{j=p_i+1}^{n_i} G(a^i_j;i)\Bigg)
\; .
\end{aligned}\end{equation}
Finally, inserting this into Eq.~\eqref{eq:Prop5proof2} we obtain the transseries expansion of $W_n(g)$, organized into instanton sectors:
\be
W_n(g)\simeq \sum_{p=0}^n e^{\frac{3}{2g}p} \Big(\eta\sqrt{2\pi} \sqrt{\f{g}{3}}\Big)^{p} \sum_{l\geq 0,\, n-p\geq l'\geq 0} g^l\left(\ln\left(\tfrac{g}{3}\right)\right)^{l'}
    W^{(p)}_{n;l,l'} \,,
\ee
with 
\begin{equation}\begin{aligned}
& W^{(p)}_{n;l,l'}  =
\smashoperator[l]{ \sum_{\substack{k = p \\ k+p\geq 1}}^{n} } (-1)^{k-1} (k-1)! \sum_{ \substack{n_1,\ldots,n_{n-k+1} \ge 0 \\ \sum in_i = n ,\, \sum n_i = k }} \sum_{\substack{\{0\leq p_i\leq n_i\} \\ _{i=1,\dots,n-k+1} \\ \sum p_i=p}}
\frac{n!}{\prod_i (n_i-p_i)!p_i! (i!)^{n_i}}
\\ & \qquad\qquad
\smashoperator[l]{ \sum_{\substack{\{a^i_j\geq 0\}^{i=1,\dots,n-k+1}_{j=1,\dots,n_i} \\ \sum_i\sum_j a^i_j=l}} } \sum_{\substack{\{i-1\geq c^i_j\geq 0\}^{i=1,\dots,n-k+1}_{j=1,\dots,p_i} \\ \sum_i\sum_j c^i_j=l'}}
\left(\f{1}{6}\right)^{\sum_{i=1}^{n-k+1}\sum_{j=1}^{p_i}a^i_j} \left(-\f{2}{3}\right)^{\sum_{i=1}^{n-k+1}\sum_{j=p_i+1}^{n_i}a^i_j} 
\\ & \hspace{9.5em}
\prod_{i=1}^{n-k+1} \Bigg(\prod_{j=1}^{p_i} G(a^i_j,c^i_j;i)\Bigg)\Bigg(\prod_{j=p_i+1}^{n_i} G(a^i_j;i)\Bigg)
\; ,
\end{aligned}\end{equation}
as advertised. 

\paragraph{Property~\ref{propTSW:2}.}
It remains to transseries expansion for the full free energy $W(g,N)$. In this case, the expressions are simpler because the $Z_i$ have been summed.
Starting from the relation between $W(g,N)$ and the $W_n(g)$:
\begin{equation}\begin{aligned}
W(g,N) &= \sum_{n\ge 1} \frac{1}{n!} \left( -\frac{N}{2}\right)^n W_n(g) 
\\
&\simeq \sum_{n\ge 1} \frac{ \left( -\frac{N}{2}\right)^n }{n!} \sum_{k = 1}^n (-1)^{k-1} (k-1)! \kern-1em \sum_{ \substack{n_1,\ldots,n_{n-k+1} \ge 0 \\ \sum in_i = n ,\, \sum n_i = k }}  \tfrac{n!}{\prod_i n_i! (i!)^{n_i}} \! \prod_{i=1}^{n-k+1} \big(Z_i^{\rm pert.}(g) + \eta e^{\frac{3}{2g}} Z_i^{(\eta)}(g)\big)^{n_i}
\\
&=  \sum_{k\ge 1} (-1)^{k-1} (k-1)! \sum_{ \substack{n_1,n_2\ldots \ge 0 \\ \sum n_i = k }} \ \prod_{i\ge 1} \frac{1}{n_i!} \Big(\frac{1}{i!}\big(-\frac{N}{2}\big)^i\big(Z_i^{\rm pert.}(g) + \eta e^{\frac{3}{2g}} Z_i^{(\eta)}(g)\big)\Big)^{n_i} \;,
\end{aligned}\end{equation}
which becomes:
\begin{equation}\label{eq:circularW} \begin{aligned}
& \sum_{k\ge 1} (-1)^{k-1} (k-1)!\; \frac{1}{k!} \Big( \sum_{i\ge 1} \frac{1}{i!}\big(-\frac{N}{2}\big)^i\big(Z_i^{\rm pert.}(g) + \eta e^{\frac{3}{2g}} Z_i^{(\eta)}(g)\big) \Big)^k
\\
&= \sum_{k\ge 1} (-1)^{k-1} (k-1)!\; \frac{1}{k!}\big( (Z^{\rm pert.}(g,N)-1 + \eta e^{\frac{3}{2g}} Z^{(\eta)}(g,N) ) \big)^k
\\
&= \sum_{k\ge 1} \frac{(-1)^{k-1}}{k} (Z^{\rm pert.}(g,N)-1 + \eta e^{\frac{3}{2g}} Z^{(\eta)}(g,N) )^k
= \ln(Z^{\rm pert.}(g,N) + \eta e^{\frac{3}{2g}} Z^{(\eta)}(g,N) ) \; ,
\end{aligned}\end{equation}
we unsurprisingly recover, that formally $W(g,N)=\ln(Z(g,N))$. Here, we used the notation (cf. Eq.~\eqref{eq:fullZ}) \linebreak[4]\mbox{$Z(g,N)=Z^{\rm pert.}(g,N) + \eta e^{\frac{3}{2g}} Z^{(\eta)}(g,N)$}, with:
\be\begin{aligned}
Z^{\rm pert.}(g,N)&=\sum_{n=0}^{\infty} \, \frac{\Gamma(2n+N/2) }{2^{2n}n! \, \Gamma(N/2) }    \; \left(- \frac{2g}{3}\right)^n
\;, \\
Z^{(\eta)}(g,N)&= e^{\imath \tau\pi(1-\frac{N}{2})}  \;  \sqrt{2\pi}  \left(  \frac{ g} {3} \right)^{\frac{1-N}{2} } \sum_{q=0}^{\infty} \frac{1}{ 2^{2q} q! \; \Gamma(\frac{N}{2} -2q ) } \left( \frac{2 g}{3} \right)^q \;.
\end{aligned}\ee
As before, the expansion into instanton sectors can be made manifest, by pulling the transseries monomials to the front in $W(g,N)$. We start from the second to last line in Eq.~\eqref{eq:circularW}), that can also be written as:
\begin{equation}\label{eq:Prop5proof3}\begin{aligned}
W(g,N)&\simeq \sum_{k\ge 1} (-1)^{k-1} (k-1)! \sum_{\substack{p,q\geq 0\\ p+q=k}} \frac{1}{p!q!} \big(Z^{\rm pert.}(g)-1\big)^{q} \big(\eta e^{\frac{3}{2g}}Z^{(\eta)}(g)\big)^{p}
\\ &= \sum_{p\ge 0} e^{\frac{3}{2g}p} \sum_{\substack{q\geq 0\\ p+q\geq 1}} (-1)^{p+q-1} \frac{(p+q-1)!}{p!q!} \big(Z^{\rm pert.}(g)-1\big)^{q} \big(\eta Z^{(\eta)}(g)\big)^{p}
\; .
\end{aligned}\end{equation}
Next, we use the expressions for $Z^{\rm pert.}$ and $Z^{(\eta)}$ to calculate:
\begin{equation}\begin{aligned}
&\big(Z^{\rm pert.}(g)-1\big)^{q} \big(\eta Z^{(\eta)}(g)\big)^{p} 
\\ 
&= \Bigg( \sum_{n=1}^{\infty} \, \tfrac{\Gamma(2n+N/2) }{2^{2n}n! \, \Gamma(N/2) }  \left(- \tfrac{2g}{3}\right)^n \Bigg)^{q}
\Big(\eta\sqrt{2\pi}e^{\imath\tau\f{\pi}{2}} \left(\tfrac{e^{\imath\tau\pi}g}{3}\right)^{\frac{1-N}{2}}\Big)^{p}
\Bigg( \sum_{m\ge 0} \tfrac{1}{ 2^{2m} m! \; \Gamma(\frac{N}{2} -2m ) } \left( \tfrac{2g}{3} \right)^m \Bigg)^{p}
\\
&= \Big(\eta\sqrt{2\pi}e^{\imath\tau\f{\pi}{2}} \left(\tfrac{e^{\imath\tau\pi}g}{3}\right)^{\frac{1-N}{2}}\Big)^{p}
\kern-.5em
\sum_{\substack{n_1,\dots,n_q\geq 1 \\ m_1,\dots,m_p\geq 0 }} \kern-.5em
%\\ & \hspace{10em}
\Bigg( \prod_{i=1}^{q} \tfrac{\Gamma(2n_i+\frac{N}{2}) }{2^{2n_i}n_i! \, \Gamma(\frac{N}{2}) } \left(- \tfrac{2g}{3}\right)^{n_i}\! \Bigg)
\Bigg( \prod_{j=1}^{p} \tfrac{1}{ 2^{2m_j} m_j! \; \Gamma(\frac{N}{2} -2m_j ) } \left( \tfrac{2g}{3} \right)^{m_j} \!\Bigg)
\\
&= \Big(\eta\sqrt{2\pi}e^{\imath\tau\f{\pi}{2}} \left(\tfrac{e^{\imath\tau\pi}g}{3}\right)^{\frac{1-N}{2}}\Big)^{p}
\sum_{l\ge 0} \left(- \tfrac{2g}{3}\right)^{l}
\sum_{\substack{n_1,\dots,n_q\geq 1 \\ m_1,\dots,m_p\geq 0 \\ \sum n_i+\sum m_j=l }} \kern-1em
\Bigg( \prod_{i=1}^{q} \tfrac{\Gamma(2n_i+\frac{N}{2}) }{2^{2n_i}n_i! \, \Gamma(\frac{N}{2}) }\Bigg)
\Bigg( \prod_{j=1}^{p} \tfrac{(-1)^{m_j}}{ 2^{2m_j} m_j! \; \Gamma(\frac{N}{2} -2m_j)}\Bigg) \; .
\end{aligned}\end{equation}
Finally, inserting this into Eq.~\eqref{eq:Prop5proof3} we obtain the transseries expansion of the free energy, organized into instanton sectors:
\begin{equation}\begin{aligned}
W(g,N) &= \sum_{p\geq 0} e^{\frac{3}{2g}p} \; \Big(\eta\sqrt{2\pi}e^{\imath\tau\f{\pi}{2}} \left(\tfrac{e^{\imath\tau\pi} g}{3}\right)^{\frac{1-N}{2}}\Big)^{p} \; \sum_{l\ge 0} \left(- \tfrac{2g}{3}\right)^{l}
\\ & \cdot
\Bigg(
\sum_{\substack{q\geq 0\\ p+q\geq 1}} (-1)^{p+q-1} \tfrac{(p+q-1)!}{p!q!}
\kern-.5em
\sum_{\substack{n_1,\dots,n_q\geq 1 \\ m_1,\dots,m_p\geq 0 \\ \sum n_i+\sum m_j=l }} \kern-1em
\Bigg( \prod_{i=1}^{q} \tfrac{\Gamma(2n_i+N/2) }{2^{2n_i}n_i! \, \Gamma(N/2) }\Bigg)
\Bigg( \prod_{j=1}^{p} \tfrac{(-1)^{m_j}}{ 2^{2m_j} m_j! \; \Gamma(N/2 -2m_j)}\Bigg)
\Bigg)
\; ,
\end{aligned}\end{equation}
which has the desired form
\be
W(g)\simeq \sum_{p\ge 0} e^{\frac{3}{2g}p} \; \Big(\eta\sqrt{2\pi}e^{\imath\tau\f{\pi}{2}} \left(\frac{e^{\imath\tau\pi} g}{3}\right)^{\frac{1-N}{2}}\Big)^{p} \; \sum_{l\ge 0} \left(- \frac{2g}{3}\right)^{l} W^{(p)}_{l}(N) \,,
\ee
and we can read up the coefficients $W^{(p)}_{l}$.

This concludes the proof of Proposition~\ref{prop:TSW}.
\end{proof}

%%%%%%%%%%%%%%%%%%%%%%%
\section{The BKAR formula}
\label{app:BKAR}
%%%%%%%%%%%%%%%%%%%%%%%

The Brydges-Kennedy-Abdesselam-Rivasseau (BKAR) \cite{Brydges:1987,Abdesselam:1994ap} forest formula is a Taylor formula for functions of several variables. 
Due to its symmetry and positivity properties it is very well adapted for nonperturbative quantum field theory. 

Let us consider a set of $n$ points labeled $i=1\dots n$, which we identify with the set of vertices of the complete graph $\cK_n$. 
The set of unordered pairs of such points has $n(n-1)/2$ elements $e = (i,j)$ for $1\le i , j \le n , \;  i\neq j$ and can be identified with the set of edges of $\cK_n$. Let us consider a smooth (and arbitrarily derivable) 
function $f: [0,1]^{n(n-1)/2} \to \mathbb{R}$ depending on the edge variables $x_e \equiv x_{ij}, \; e=(i,j)$. 
\begin{theorem}\label{thm:forest}
[The Forest Formula, \cite{Abdesselam:1994ap,Brydges:1987}] We have (with the convention that empty products are $1$):
\begin{equation}
    f( 1,\dots 1)  =      \sum_{\cF}  \underbrace{\int_0^1  \dots \int_0^1}_{|\cF| \; {\rm times}} \left( \prod_{e \in \cF}du_e \right) 
 \;  \left[ \left (  \prod_{e\in \cF}  \frac{\partial  }{ \partial x_{e}  }  \right) f \right] \Big( w^\cF_{kl} (u_{\cF}) \Big) \; ,
\end{equation}  
where:
\begin{itemize}
\item the sum runs over the forests\footnote{Acyclic edge-subgraphs of the complete graph $\cK_n$.} 
$\cF$ drawn over the $n$ labeled vertices $ i $, including the empty forest (having no edge). 
To each edge $e\in \cF$ we attribute a variable $u_e$ that is integrated from $0$ to $1$ and we denote $u_{\cF}=\{ u_e \, | \, e\in \cF\}$. 

\item the derivative $\left (  \prod_{e\in \cF}  \frac{\partial  }{ \partial x_{e}  }  \right) f $
is evaluated at the point:
\begin{equation}
 w^\cF_{kl}( u_{\cF})  =    \inf_{ e' \in  P^{\cF}_{k -l} } \{  u_{e'} \} \; ,
\end{equation}
where $  P^{\cF}_{k - l} $ denotes the unique path in the forest $\cF$ joining the vertices $k$ and $l$, and 
the infimum is set to zero if such a path does not exist.
\end{itemize}

Setting by convention $w^\cF_{kk}( u_{\cF})  \equiv 1 $, for any assignment of tree edge variables $0 \le u_{\cF}\le 1$ the symmetric $n\times n$ matrix
$ W^{\cF}(u^{\cF}) = \bigl( w^\cF_{kl} ( u_{\cF} )\bigr)_{1\le k,l \le n}$ is \emph{positive}. 
\end{theorem}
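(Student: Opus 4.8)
The statement has two logically independent parts: the algebraic forest identity, and the positivity of the matrix $W^\cF$. The plan is to dispose of positivity first, since it is self-contained, and then to sketch the inductive proof of the identity, which is where the real work lies.

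\textbf{Positivity.} The key is a scale (integral) representation of the weights. For $k\neq l$ lying in the same tree of $\cF$, the path $P^\cF_{k-l}$ is unique and $w^\cF_{kl}\ge s$ holds if and only if every edge $e$ on that path satisfies $u_e\ge s$; equivalently, $k$ and $l$ are joined in $\cF$ using only edges of weight at least $s$. Declaring vertices reflexively connected and recalling $w^\cF_{kk}=1$, one obtains
\[
 w^\cF_{kl} \;=\; \int_0^1 \mathds{1}\big[\,k \overset{s}{\sim} l\,\big]\, ds \;,
\]
where $k\overset{s}{\sim}l$ is the equivalence relation ``$k=l$, or $k,l$ are connected in $\cF$ through edges $e$ with $u_e\ge s$''. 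For each fixed $s$ this relation partitions the vertices into blocks, so the matrix $M^{(s)}_{kl}=\mathds{1}[k\overset{s}{\sim}l]$ is the direct sum over blocks of all-ones matrices $\mathbf{1}\mathbf{1}^{\!\top}$, hence positive semidefinite. Integration over $s$ preserves positivity, giving $W^\cF=\int_0^1 M^{(s)}\,ds\succeq 0$.

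\textbf{The forest identity.} I would prove this by induction on the number of vertices $n$, using the Brydges--Kennedy interpolation. The base case $n=1$ is trivial (no edges, only the empty forest). For the inductive step I would switch on the edge parameters one scale at a time: starting from $f(1,\dots,1)$, in which all $n$ vertices form a single block, interpolate simultaneously and uniformly the couplings joining the current blocks by a parameter $s$, and apply the first-order Taylor formula $F(1)=F(0)+\int_0^1 F'(s)\,ds$. The boundary term $F(0)$ corresponds to severing all inter-block links at this scale, splitting the block into strictly smaller sub-blocks on which the inductive hypothesis applies; the remainder $\int_0^1 F'(s)\,ds$ brings down exactly one derivative $\partial/\partial x_e$, which creates a new forest edge $e$ whose interpolation value is recorded as $u_e$, and one recurses inside the block now joined by $e$, with $s$ serving as the inherited ceiling on subsequent parameters.

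\textbf{Main obstacle.} The delicate point is the bookkeeping of the recursion: one must verify that the nested boundary-versus-remainder alternatives enumerate each forest $\cF$ on $\{1,\dots,n\}$ exactly once (neither over- nor under-counting, and that cycles never arise because a derivative is taken only across distinct blocks), and, crucially, that the product of interpolation values generated by the recursion assembles into precisely the inf-along-path weights $w^\cF_{kl}=\inf_{e\in P^\cF_{k-l}}u_e$ at which $\prod_{e\in\cF}\partial_e f$ is to be evaluated. Matching this emergent weight to the stated $w^\cF$ is the heart of the argument, and it is most transparently checked through the same scale picture used for positivity, in which the condition $u_e\ge s$ controls exactly which vertices remain linked at scale $s$.
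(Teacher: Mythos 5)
Your positivity argument is correct and is, in substance, identical to the one the paper gives: your scale representation $w^\cF_{kl}=\int_0^1 \mathds{1}[k\overset{s}{\sim}l]\,ds$ is the integral form of the paper's finite telescoping decomposition $W^{\cF}(u_\cF)=(1-u_{e_1})B^{\cF^0}+(u_{e_1}-u_{e_2})B^{\cF^1}+\dots+u_{e_q}B^{\cF^q}$ over the nested subforests obtained by ordering the edges by decreasing $u$; in both versions the matrix at each scale is a direct sum of all-ones blocks, hence positive, and positivity survives the convex combination/integration. This is the only part of Theorem~\ref{thm:forest} that the paper actually proves -- the forest identity itself is cited to Brydges--Kennedy and Abdesselam--Rivasseau -- so on the part where a comparison is possible, you match the paper.

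For the identity, your inductive interpolation sketch is the standard BK/AR route, but as written it is not a proof: you yourself identify the heart of the matter (that the nested boundary-versus-remainder branching enumerates each forest exactly once, never closes a cycle, and evaluates $\prod_{e\in\cF}\partial_e f$ at exactly the inf-along-path point $w^\cF_{kl}$) and then leave it unverified. That verification is where essentially all the work of the theorem lives, so the proposal has a genuine gap there. A symptom that the bookkeeping is not yet pinned down is the opening of your inductive step: at the start of the interpolation the $n$ vertices do \emph{not} form a single block -- each vertex is its own block (the empty forest), the boundary term $F(0)$ decouples them entirely, and blocks only merge as forest edges are brought down by successive derivatives, with the new parameter constrained below the previous one on an ordered simplex before symmetrization. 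To complete the argument you would need to set up this recursion precisely and check that the piecewise-constant coupling profile it produces is exactly the matrix $w^\cF(u_\cF)$ -- which, as you note, is most cleanly done through the same scale decomposition used for positivity.
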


The most subtle point in this formula is that $W^{\cF}(u^{\cF})$ is a positive matrix. To see this we proceeded as follows. A forest $\cF$ divides the complete graph $\cK_n$ into several connected components (or blocks) corresponding to the trees in the forest. 
For instance, if $\cF$ is the empty forest the blocks are all singletons consisting in a unique vertex per block. 
For any forest $\cF$, the matrix:
\begin{equation}
 B^{\cF}_{kl} = \begin{cases}
                    1 \; , \;\; & \text{ if }k,l \text{ belong to the same block of } \cF \\
                    0 \; , \;\; & \text{ otherwise }
                \end{cases} \; ,
\end{equation}
is positive. Indeed, denoting $b \subset \cF$ the blocks of $\cF$ and $k\in b$ the vertices in the block $b$:
\begin{equation}
 \sum_{k,l}  B^{\cF}_{kl} a_ka_l = \sum_{b \subset \cF } \Bigl( \sum_{k\in b} a_k\Bigr)^2 \; .
\end{equation}

Let us denote the number of edges in $\cF$ by $q \equiv |\cF|$. We order the edges of $\cF$ in decreasing order of  their parameters $u$:
 \begin{equation}
 1\ge  u_{e_1} \ge u_{e_2} \ge \dots u_{e_{q}}\ge 0 \; .
  \end{equation}
Adding edges one by one starting from the highest edge we obtain a family of subforests of $\cF$:
 \begin{equation}
 \cF^0 = \emptyset \; , \;\; \cF^1 = \{e_1\} \; , \;\; \cF^{2} = \{e_1,e_2\} \; , \dots \; , \;\; \cF^{q} =\{e_1,\dots e_q \}= \cF \; ,
 \end{equation}
and the matrix $W^{\cF}(u_\cF)$ writes as:
 \begin{equation}
 W^{\cF}( u_{\cF} ) = (1 - u_{e_1}) B^{\cF^0} + (u_{e_1}-u_{e_2}) B^{\cF^1} + \dots + u_{e_q}  B^{ \cF^q } \; .
 \end{equation}
Indeed, if $i$ and $j$ do not belong to the same block of $\cF^q=\cF$, then they do not belong to the same block in any of the $\cF^s$, $s\le q$ and none of the terms above contribute, hence $w_{ij}^{\cF}(u_\cF) = 0$. If, on the other hand, $i$ and $j$ belong to the same block of $\cF$, then:
 \begin{equation}
 \Big[ (1 - u_{e_1}) B^{\cF^0} + (u_{e_1}-u_{e_2}) B^{\cF^1} + \dots + u_{e_q} B^{\cF^q} \Big]_{ij} = u_{e_s} \; ,
 \end{equation}
where $s$ is such that $i$ and $j$ belong to the same block of $\cF^s$, but belong to two different blocks of $\cF^{s-1}$. As $u_{e_s}\le u_{e_{s-1}} \le u_{e_{s-2}} \le \dots $
it follows that $u_{e_s}$ is the infimum of the $u$s in the unique path in $\cF^s$ joining $i$ and $j$, hence it 
is also the infimum of the $u$s in the unique path in $\cF$ joining $i$ and $j$. The matrix $ W^{\cF}( u_{\cF} )$ is a convex combination of positive matrices, hence it is itself  positive.

\subsection{Feynam graphs and $W(g,N)$}
\label{app:FeynGr}

Each term in the convergent series in Eq.~\eqref{eq:W_BKAR} can be further expanded in a formal Taylor series in the coupling constant. The series thus obtained is asymptotic to $W(g,N)$ as long as Eq.~\eqref{eq:W_BKAR} converges hence in a cardioid domain of the cut complex plane $\mathbb{C}_\pi$ (see Proposition~\ref{prop:LVE+bounds}) which sweeps all the directions in $\mathbb{C}_\pi$. In this range of $g$ no singularity of the integrands crosses the real axis, which is the steepest-descent contour of the exponentials in Eq.~\eqref{eq:W_BKARterm}.  

The asymptotic series of $W(g,N)$ in the first Riemann sheet is well known to be the formal sum over connected Feynman graphs of amplitudes. However, due to the presence of the $w$ parameters and the integrals over the $u$s, it is not exactly transparent how this comes about starting from Eq.~\eqref{eq:W_BKAR}. 
The fact that Eq.~\eqref{eq:W_BKAR} does indeed reproduce the Feynman graph expansion has been proven in 
\cite{Rivasseau:2013ova}. We sketch below how this comes about.

\paragraph{Hepp sectors.}
To any graph $G$ with vertices
labeled $i,i=1,\dots n$ one can associate a \emph{characteristic} function depending on edge variables $f(x_{ij}) = \prod_{(i,j)\in G} x_{ij} $ where the product runs over the edges of $G$. The forest formula applied to this function 
yields:
\begin{equation}
1 =      \sum_{\cF\subset G}  \underbrace{\int_0^1  \dots \int_0^1}_{|\cF| \; {\rm times}} \left( \prod_{ (i,j) \in \cF}du_{ij} \right) \; 
 \left[  \prod_{ (k,l)\notin \cF}   w^\cF_{kl} (u_{\cF}) \right] \; .
\end{equation}
Remark that the derivative of the characteristic function is nonzero only if the forest $\cF$ is made of edges of $G$ (which we signify by $\cF \subset G$).
Furthermore, if $\cF$ has more than one block, then one of the $w$s in the product is set to zero. It follows that only 
trees contribute:
\begin{equation}
 1 = \sum_{\cT\subset G} \underbrace{\int_0^1  \dots \int_0^1}_{|\cT| \; {\rm times}} \left( \prod_{ (i,j) \in \cT}du_{ij} \right) \; 
 \left[  \prod_{ (k,l)\notin \cT}   w^\cT_{kl} (u_{\cT}) \right] \; .
\end{equation}
This formula defines normalized weights associated with the graph $G$ and the spanning trees $\cT\subset G$:
\begin{equation}
 w(G,\cT) = \underbrace{\int_0^1  \dots \int_0^1}_{|\cT| \; {\rm times}} \left( \prod_{ (i,j) \in \cT}du_{ij} \right) \; 
 \left[  \prod_{ (k,l)\notin \cT}   w^\cT_{kl} (u_{\cT}) \right] \; , \qquad 
 \sum_{\cT \subset G} w(G,\cT) =1 \; ,
\end{equation}
which admit a striking combinatorial interpretation.
We define a \emph{Hepp sector} as a total ordering $\pi$ of the edges of $G$, that is a bijection 
$ \pi :   E(G) \to \{1,\dots |E(G)|\} $,
where $E(G)$ is the set of edges of $G$. For any $\pi$ \emph{the leading spanning tree} in $\pi$, denoted 
$\cT(\pi)$, is the tree such that $ \sum_{e\in \cT(\pi)} \pi(e)$
is minimal. The tree $\cT(\pi)$ is obtained by Kruskal's greedy algorithm: 
at each step one adds the edge $e\in G$ with minimal $\pi(e)$ that does not form a loop. 

\begin{lemma}\label{lem:Heppsect} We have:
\begin{equation}
w(G,\cT) =\frac{N(G,\cT)}{|E(G)|!} \;, 
\end{equation}
where  $N( G,\cT)$ is the number of sectors $\pi$
such that $\cT(\pi)=\cT$, that is $w(G,\cT) $ is the percentage of Hepp sectors in which $\cT$ is the leading spanning tree of $G$.
\end{lemma}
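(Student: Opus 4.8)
The plan is to read $w(G,\cT)$ probabilistically. Assigning to each tree edge $e\in\cT$ an independent variable $u_e$ uniform on $[0,1]$, the definition makes $w(G,\cT)$ exactly the expectation
\[
w(G,\cT)=\mathbb{E}\Big[\prod_{f\notin\cT}\ \min_{e'\in P^{\cT}_{k(f)-l(f)}} u_{e'}\Big],
\]
where $f=(k(f),l(f))$ ranges over the non-tree edges. The first step is to linearize each minimum: introduce for every non-tree edge $f$ an \emph{additional} independent uniform variable $u_f$ on $[0,1]$. Since $\mathbb{P}(u_f\le m)=m$ for $m\in[0,1]$ and these $u_f$ are mutually independent and independent of the tree variables, each factor $\min_{e'\in P^{\cT}_f}u_{e'}$ can be rewritten as $\mathbb{P}(u_f\le\min_{e'\in P^{\cT}_f}u_{e'})$ and the product pulled inside a single expectation over \emph{all} edge variables $\{u_e\}_{e\in E(G)}$, giving
\[
w(G,\cT)=\mathbb{P}\Big[\,\forall f\notin\cT:\ u_f\le \min_{e'\in P^{\cT}_f}u_{e'}\Big].
\]

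Next I would observe that, almost surely, the $|E(G)|$ variables $u_e$ are pairwise distinct, so ranking them (say in decreasing order) produces a total order $\pi$ on $E(G)$ that is uniform over the $|E(G)|!$ orderings, each carrying probability $1/|E(G)|!$. The event above depends only on this order, so it suffices to count the orderings for which it holds. The crux is then to show that the event ``$u_f\le\min_{e'\in P^{\cT}_f}u_{e'}$ for every non-tree edge $f$'' is equivalent to $\cT$ being the spanning tree selected by Kruskal's greedy algorithm run on $\pi$, that is $\cT(\pi)=\cT$.

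This equivalence is the matroid cycle/exchange characterization of the greedy tree, and it is the step I expect to require the most care. In one direction, if each non-tree edge $f$ has $u_f$ no larger than every edge on its fundamental cycle $\{f\}\cup P^{\cT}_f$, then $f$ is the minimal-$u$ (hence maximal-$\pi$) edge of that cycle, so the greedy algorithm—which adds edges of largest $u$ first and discards an edge exactly when it closes a loop—examines $f$ only after all of $P^{\cT}_f$ has been added and therefore rejects it; as this holds for all non-tree edges, the greedy output is precisely $\cT$. Conversely, if the greedy tree is $\cT$, each non-tree edge closes a loop when examined, forcing it to be the smallest-$u$ edge of its fundamental cycle, which is exactly the stated inequality. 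What remains is bookkeeping: discarding the measure-zero set of ties, and fixing the orientation convention (decreasing versus increasing order merely reverses $\pi\mapsto|E(G)|+1-\pi$, a bijection on orderings that leaves the count $N(G,\cT)$ invariant, matching the paper's convention that Kruskal adds the edge of minimal $\pi(e)$).

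Combining the three steps, the event has probability $N(G,\cT)/|E(G)|!$, which is the claimed identity. As a consistency check, summing over all spanning trees recovers $\sum_{\cT\subset G}w(G,\cT)=1$, in agreement with the partition of unity established immediately above the lemma.
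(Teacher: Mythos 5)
Your proposal is correct and is essentially the paper's own proof in probabilistic dress: the "linearization of the minimum" via an auxiliary uniform variable $u_f$ is exactly the paper's observation that integrating the indicator $\chi(u_{G\setminus\cT}\le u_{\cT})$ over each loop-edge variable reproduces $w^{\cT}_{kl}(u_{\cT})$, and the decomposition into orderings each of measure $1/|E(G)|!$ together with the Kruskal/fundamental-cycle equivalence is the same as the paper's Hepp-sector argument. The only difference is presentational (you spell out both directions of the greedy-tree equivalence, which the paper merely asserts), so no further comment is needed.
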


\begin{proof} Let us define the function:
\begin{equation}
  \chi(u_{ E(G) \setminus \cT} \le u_{\cT} )
  \begin{cases}
1 \; , \text{ if } \forall (i,j)\in E(G) \setminus \cT  \; , \quad                                                                 u_{ij}\le \inf_{ (k,l) \in P^{\cT}_{i\to j} } u_{kl} \\
0 \; , \text{ otherwise }
\end{cases} \; .
\end{equation}
On the one hand we have:
\begin{equation}
 w(G,\cT) = \int_{0}^1 \left( \prod_{e\in E(G)} du_e \right) \;  \chi(u_{ G \setminus \cT} \le u_{\cT} )   \;,
\end{equation}
as, at any fixed $\{u_{e},e\in \cT\}$ the integral over the loop edge variable $u_{kl}$ yields $w^{\cT}_{kl}(u_{\cT})$. 

We now split the integration interval according to Hepp sectors. In the sector $\pi$ corresponding to $u_{\pi^{-1}(1)} > u_{\pi^{-1}(2)} >\dots$
the characteristic function $\chi$ tests whether every loop edge $(i,j)$ has smaller $u_{ij}$ than $\inf\{u_{kl}\}$ in the tree path
$P^{\cT}_{i\to j}$ connecting $i$ and $j$. This is true if and only if $\cT$ is the leading spanning tree in $\pi$:
\begin{equation}
\begin{split}
& \int_{0}^1 \left( \prod_{e\in \cE(G)} du_e \right) \;  \chi(u_{ G \setminus \cT} \le u_{\cT} )  \crcr
& =  
\sum_{\pi} \int_0^1 du_{\pi^{-1}(1)} \int_{0}^{u_{\pi^{-1}(1)}} du_{\pi^{-1}(2)} \dots 
\int_{0}^{u_{\pi^{-1}( |E(G)|-1) } } 
du_{\pi^{-1} ( |E(G)| ) } \;\; \chi(u_{ G \setminus \cT} \le u_{\cT} )  \crcr
& = \sum_{\pi, \cT(\pi) = \cT} \int_0^1 du_{\pi^{-1}(1)} \int_{0}^{u_{\pi^{-1}(1)}} du_{\pi^{-1}(2)} \dots 
\int_{0}^{u_{\pi^{-1}( |E(G)|-1) } } 
du_{\pi^{-1} ( |E(G)| ) } 
 = \frac{1}{ |E(G)| !  }\sum_{\pi, \cT(\pi) = \cT} 1
\;.
\end{split}
\end{equation}
\end{proof}

\begin{lemma}[Asymptotic series \cite{Rivasseau:2013ova}.] \label{prop:AsymptoticSeries}
The asymptotic expansion at zero of the free energy $W(g,N)$ in the cut plane $g\in\mathbb{C}_\pi$ 
is the formal sum over connected Feynman graphs of Feynman amplitudes.
\end{lemma}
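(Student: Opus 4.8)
The plan is to start from the convergent tree representation of the free energy, Eq.~\eqref{eq:W_BKAR}--\eqref{eq:W_BKARterm}, which by Proposition~\ref{prop:LVE+bounds} expresses $W(g,N)$ as an analytic function in a cardioid subdomain of $\mathbb{C}_\pi$ sweeping all directions. Since in this domain no singularity of the integrands crosses the real axis (the steepest-descent contour of the Gaussian), the asymptotic expansion at $g=0$ is obtained by Taylor expanding each tree contribution in $g$ and integrating term by term. First I would expand each vertex factor $(1-\imath\sqrt{g/3}\,\sigma_i)^{-d_i}$ as a power series in $\sqrt{g}\,\sigma_i$ and evaluate the resulting Gaussian integral with covariance $w^\cT$ by Wick's theorem.

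Each Wick contraction pairs the $\sigma$ half-edges and thereby decorates the tree $\cT$ with additional ``loop'' edges, producing a graph $G$ that contains $\cT$ as a spanning tree. The tree edges carry the derivatives generated by the BKAR formula, while every loop edge $(k,l)\notin\cT$ contributes a factor $w^\cT_{kl}(u_\cT)$. Integrating over the interpolation parameters $u_{ij}$ then yields exactly the normalized weight $w(G,\cT)$ defined in Appendix~\ref{app:BKAR}. The key step is to reorganize the double sum so that, for a fixed graph $G$, one sums over all trees $\cT\subset G$ that can produce it, i.e.\ over the spanning trees of $G$; by Lemma~\ref{lem:Heppsect} the weights combine as $\sum_{\cT\subset G} w(G,\cT)=1$, so all the $u$-integrals collapse and $G$ contributes its bare Feynman amplitude.

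Because the logarithm restricted the BKAR forests to trees, only connected graphs $G$ survive, and carefully collecting the $1/n!$ vertex-labeling factor together with the multiplicity of Wick contractions reproduces the standard symmetry factor $1/|\mathrm{Aut}(G)|$. This identifies the formal Taylor series of $W(g,N)$ with the sum over connected Feynman graphs of their amplitudes, as claimed, following \cite{Rivasseau:2013ova}.

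The main obstacle is the combinatorial bookkeeping in the last two steps: one must match the vertex labelings, the number of ways a given set of Wick pairings arises from the power-series expansion, and the automorphisms of $G$, and then verify that summing over spanning trees precisely reconstructs each amplitude with unit coefficient. Lemma~\ref{lem:Heppsect} is exactly the tool that makes this resummation collapse cleanly, reducing the claim to a verification of symmetry factors, which is tedious but routine.
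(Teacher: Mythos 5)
Your proposal is correct and follows essentially the same route as the paper: Taylor-expand the Gaussian part of each tree term (your Wick expansion with covariance $w^\cT$ is the same operation as the paper's expansion of the exponential differential operator), recognize that the loop edges decorate $\cT$ into a graph $G\supset\cT$ with the $u$-integral producing the Hepp-sector weight $w(G,\cT)$, and then collapse the double sum via $\sum_{\cT\subset G}w(G,\cT)=1$ from Lemma~\ref{lem:Heppsect}. The paper likewise leaves the final matching of symmetry factors implicit by absorbing them into the amplitude $A(G)$, so nothing essential is missing from your argument.
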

\begin{proof} Taylor expanding the Gaussian integrals in Eq.~\eqref{eq:W_BKAR} to infinity yields:
\be\label{eq:Wintermed}
\begin{split}
W (g,N) & \simeq 
- \tfrac{N}{2}  \Bigg[ \sum_{l\ge 0} \tfrac{1}{l!} \left( \tfrac{1}{2}  \tfrac{\d }{\d \sigma} \tfrac{\d }{\d \sigma} \right)^{l}   \ln\Big( 1 -\imath \sqrt{\tfrac{g}{3}}   \sigma \Big) \Bigg]_{\sigma =0 } - \sum_{n\ge 2} \frac{\left( -\frac{N}{2}\right)^n}{n!} \left( \tfrac{g}{3} \right)^{n-1}  \sum_{ \cT \in T_n } \int_{0}^1 \Bigl( \prod_{ (i,j) \in \cT } du_{ij} \Bigr)   \\
&  \qquad \times 
\Bigg[   \sum_{ \{ l_{i} \ge 0 \} } \tfrac{1}{ l_{i} !}\left( \tfrac{1}{2}\tfrac{\d }{\d \sigma_i} \tfrac{\d }{\d \sigma_i} \right)^{l_{i}} 
\sum_{ \{ l_{ij} \ge 0 \}_{i \le j} } \tfrac{1}{ l_{ij} !}\left( w_{ij}^{\cT}  \tfrac{\d }{\d \sigma_i} \tfrac{\d }{\d \sigma_j} \right)^{l_{ij}} 
\;\; \prod_{i} \tfrac{(d_i-1)!}{ \left(   1 -\imath \sqrt{\tfrac{g}{3}}   \sigma_i \right)^{d_i} } \Bigg]_{\sigma_i =0 } \crcr
&  =\sum_{n\ge 1} \frac{1}{n!}\left( -\frac{N}{2}\right)^n  \sum_{ \cT \in T_n } \sum_{\{ l_{i} \ge 0\}, \{l_{ij}\ge 0\}_{i<j}}
 \frac{\prod_i(d_i + l_i + \sum_{j}l_{ij}-1)!} { (\prod_i 2^{l_i}l_i!) (\prod_{i<j} l_{ij}!) } \crcr
& \qquad \qquad \left[ \int_0^1  \left( \prod_{ (i,j) \in \cT} du_{ij} \right) \; 
  \left( \prod_{ (i<j) }   w^\cT_{ij} \right)^{l_{ij}} \right]  \left(-\frac{g}{3} \right)^{n-1+\sum_l l_i + \sum_{i<j}l_{ij}} \;.
\end{split}
\ee
The additional derivatives with respect to $\sigma$ generate loop edges decorating the tree $\cT$: $l_{ij}$ is the multiplicity of the loop edge between $i$ and $j$ and $l_i$ the number of tadpole edges (or self loops) at the vertex $i$. We denote the graph consisting in $\cT$ decorated by such extra loop edges by $G$. Of course, $\cT$ is a spanning tree of $G$ which we denote $G\supset \cT$.
From Lemma~\ref{lem:Heppsect}, the integral over $u$ in Eq.~\eqref{eq:Wintermed} is: 
\be
w(G,\cT) = \underbrace{\int_0^1  \dots \int_0^1}_{|\cT| \; {\rm times}} \left( \prod_{ (i,j) \in \cT}du_{ij} \right) \; 
 \left[  \prod_{ (k,l)\notin \cT}   w^\cT_{kl} (u_{\cT}) \right] =\frac{N(G,\cT)}{|E(G)|!} \;, 
\ee 
yields the percentage of Hepp sectors in which $\cT$ is the leading spanning tree of $G$. Collecting the coupling constants and the symmetry factor in the amplitude $A(G)$ of the graph $G$, we write:
\be
W (g) =
\sum_{n\ge 1} \frac{1}{n!} \sum_{ \cT \in \cT_n } \sum_{G\supset \cT} w(G,\cT) A(G) \; ,
\ee
that is for each tree $\cT$ we resum the amplitudes of the graphs in which $\cT$ is a spanning tree with a weight given by the percentage of Hepp sectors of $G$ in which $\cT$ is the leading tree. 
Denoting $G_n$ the set of connected graphs over $n$ vertices, we commute the sums over trees $\cT$ and graphs $G$ and using $\sum_{T\subset G}w(G,\cT) = 1$ we get:
\be
 W (g) =\sum_{n\ge 1} \frac{1}{n!} \sum_{G \in G_n} A(G) \;,
\ee 
which is the familiar perturbative expansion of the free energy in connected graphs.
\end{proof}

	%----- Bibliography ----------------------
	\bibliographystyle{JHEP}
	\bibliography{Refs-RESURGENCE} 
	
	\addcontentsline{toc}{section}{References}
	
	%---------------------------------------------
	%---------------------------------------------
\end{document}